\def\PrintMode{0}
\newtheorem{theorem}{Theorem}[section]
\newtheorem{lemma}[theorem]{Lemma}
\newtheorem{proposition}[theorem]{Proposition}
\newtheorem{corollary}[theorem]{Corollary}
\newtheorem{claim}[theorem]{Claim}
\theoremstyle{definition}
\newtheorem{definition}[theorem]{Definition}
\newtheorem{problem}[theorem]{Problem}
\newtheorem{remark}[theorem]{Remark}
\newtheorem*{remark*}{Remark}
\renewcommand*\backref[1]{\ifx#1\relax \else (cit.~on p.~#1) \fi}
\def\moverlay{\mathpalette\mov@rlay}
\def\mov@rlay#1#2{\leavevmode\vtop{%
		\baselineskip\z@skip \lineskiplimit-\maxdimen
		\ialign{\hfil$\m@th#1##$\hfil\cr#2\crcr}}}
\newcommand{\charfusion}[3][\mathord]{
	#1{\ifx#1\mathop\vphantom{#2}\fi
		\mathpalette\mov@rlay{#2\cr#3}
	}
	\ifx#1\mathop\expandafter\displaylimits\fi}
\newcommand\numberthis{\addtocounter{equation}{1}\tag{\theequation}}
\renewcommand{\poly}{\mathrm{poly}}
\renewcommand{\emptyset}{\varnothing}
\newcommand{\dist}{\operatorname{dist}}
\newlang{\MCSP}{MCSP}
\newlang{\MFSP}{MFSP}
\newlang{\MKtP}{MKtP}
\newlang{\MKTP}{MKTP}
\newlang{\itrMCSP}{itrMCSP}
\newlang{\itrMKTP}{itrMKTP}
\newlang{\itrMINKT}{itrMINKT}
\newlang{\MINKT}{MINKT}
\newlang{\MINK}{MINK}
\newlang{\MINcKT}{MINcKT}
\newlang{\CMD}{CMD}
\newlang{\DCMD}{DCMD}
\newlang{\CGL}{CGL}
\newlang{\PARITY}{PARITY}
\newlang{\Empty}{\textsc{Empty}}
\newlang{\Avoid}{\textsc{Avoid}}
\newlang{\Sparsification}{\textsc{Sparsification}}
\newlang{\HamEst}{\mathsf{HammingEst}}
\newlang{\HamHit}{\mathsf{HammingHit}}
\newlang{\CktEval}{\textsc{Circuit-Eval}}
\newlang{\Hard}{\textsc{Hard}}
\newlang{\cHard}{\textsc{cHard}}
\newlang{\CAPP}{CAPP}
\newlang{\GapUNSAT}{GapUNSAT}
\newlang{\OV}{OV}
\renewlang{\PCP}{PCP}
\newlang{\PCPP}{PCPP}
\newclass{\Avg}{Avg}
\newclass{\ZPEXP}{ZPEXP}
\newclass{\DLOGTIME}{DLOGTIME}
\newclass{\ALOGTIME}{ALOGTIME}
\newclass{\ATIME}{ATIME}
\newclass{\SZKA}{SZKA}
\newclass{\Laconic}{Laconic\text{-}}
\newclass{\APEPP}{APEPP}
\newclass{\SAPEPP}{SAPEPP}
\newclass{\TFSigma}{TF\Sigma}
\newclass{\NTIMEGUESS}{NTIMEGUESS}
\newlang{\Formula}{Formula}
\newlang{\THR}{THR}
\newlang{\EMAJ}{EMAJ}
\newlang{\MAJ}{MAJ}
\newlang{\SYM}{SYM}
\newlang{\DOR}{DOR}
\newlang{\ETHR}{ETHR}
\newlang{\Midbit}{Midbit}
\newlang{\LCS}{LCS}
\newlang{\TAUT}{TAUT}
\newlang{\Poly}{\text{-}Poly}
\newcommand{\supp}{\operatorname{supp}}
\newcommand{\Z}{\mathbb{Z}}
\newcommand{\F}{\mathbb{F}}
\newcommand{\be}{\overline{e}}
\newcommand{\row}{\mathsf{row}}
\newcommand{\col}{\mathsf{col}}
\newcommand{\Eval}{\mathsf{\mathbf{Eval}}}
\newcommand{\bB}{\mathbf{B}}
\newcommand{\bD}{\mathbf{D}}
\newcommand{\bF}{\mathbf{F}}
\newcommand{\bI}{\mathbf{I}}
\newcommand{\bL}{\mathbf{L}}
\newcommand{\bM}{\mathbf{M}}
\newcommand{\bN}{\mathbf{N}}
\newcommand{\bP}{\mathbf{P}}
\newcommand{\bY}{\mathbf{Y}}
\newcommand{\bZ}{\mathbf{Z}}
\newcommand{\RS}{\mathsf{RS}}
\newcommand{\IRS}{\mathsf{IRS}}
\renewcommand{\epsilon}{\varepsilon}
\newcommand{\eps}{\epsilon}
\newcommand{\Fail}{\mathsf{Fail}}
\newcommand{\Lagrange}{\mathsf{Lagrange}}
\newcommand{\Hermite}{\mathsf{Hermite}}
\newcommand{\Diag}{\mathsf{\mathbf{Diag}}}
\newcommand{\Lower}{\mathsf{\mathbf{Lower}}}
\newcommand{\BlockDiag}{\mathsf{\mathbf{BlockDiag}}}
\newcommand{\Vand}{\mathsf{\mathbf{Vand}}}
\newcommand{\cB}{\mathcal B}
\newcommand{\cC}{\mathcal C}
\definecolor{color1}{RGB}{46,134,193}
\definecolor{color7}{RGB}{128,0,128}
\definecolor{color3}{RGB}{255,128,0}
\definecolor{color5}{RGB}{128,128,128}
\newif\ifmynotes
\title{Unique Decoding of Reed–Solomon and Related Codes for Semi-Adversarial Errors}
\date{}
\author{Joshua Brakensiek\thanks{University of California, Berkeley. \href{mailto:josh.brakensiek@berkeley.edu}{\texttt{josh.brakensiek@berkeley.edu}}} 
\and
Yeyuan Chen\thanks{Department of EECS, University of Michigan, Ann Arbor. \href{mailto:yeyuanch@umich.edu}{\texttt{yeyuanch@umich.edu}}}  
\and
Manik Dhar\thanks{Department of Mathematics, Massachusetts Institute of Technology. \href{mailto:dmanik@mit.edu}{\texttt{dmanik@mit.edu}}} 
\and
Zihan Zhang\thanks{Department of Computer Science and Engineering, The Ohio State University. \href{mailto:zhang.13691@buckeyemail.osu.edu}{\texttt{zhang.13691@osu.edu}}}  
}
\begin{document}
\maketitle
\thispagestyle{empty}
\begin{abstract}
Motivated by recent developments in coding theory, particular in list-decoding, we introduce a new error model which we call \textbf{semi-adversarial errors}. This error model bridges between fully random errors and fully adversarial errors by allowing some symbols of a message to be corrupted by an adversary while others are replaced with uniformly random symbols. 

As our main quest, we seek to understand optimal efficient \textbf{unique} decoding algorithms in the semi-adversarial model. For interleaved Reed--Solomon (IRS),  folded Reed--Solomon (FRS) and univariate multiplicity codes, we design  decoding algorithms running in near-linear time for most mixtures of random and adversarial errors. Our analysis matches the information-theoretic optimum for semi-adversarial errors.

Our algorithm for interleaved Reed--Solomon codes is an improved implementation of the decoding algorithm by Bleichenbacher--Kiayias--Yung (BKY) for fully random errors. We use a novel monomial-tracking technique to analyze its performance in this new semi-adversarial errors. Inspired by the BKY algorithm, we use novel interpolations to extend our approach to the settings of folded Reed--Solomon and multiplicity codes, resulting in fast 
algorithms for unique decoding against semi-adversarial errors. Our new decoders for FRS and multiplicity codes replace the sophisticated root-finding step in traditional algorithms, such as the Guruswami--Wang algorithm, with a straightforward polynomial long division. Analysis of these algorithms requires more robust monomial-tracking arguments than IRS codes.
\end{abstract}

\newpage
\thispagestyle{empty}
\setcounter{tocdepth}{3}
\tableofcontents
\newpage
\section{Introduction}
The primary goal of this investigation is to better understand the error-correcting properties of Reed--Solomon (RS) codes \cite{RS-codes}, the most fundamental class of algebraic error-correcting codes. To define a Reed--Solomon code, we consider a finite field $\F_q$ as well as distinct evaluation points $\alpha_1, \hdots, \alpha_n \in \F_q$. An $[n,k]$ RS code over $\F_q$ is then
\[
\mathsf{RS}_{n,k,q}(\alpha_1, \dots, \alpha_n):=\left\{ \big(f(\alpha_1), \dots, f(\alpha_n) \big)\,:\, f(X) \in \F_q[X], \ \deg(f) < k \right\}\subseteq \F_q^n.
\]

One of the most fundamental questions one can ask about Reed--Solomon codes is their decodability. In particular, consider a protocol in which a sender picks $\vec{c} \in \RS_{n,k,q}(\alpha_1,\hdots, \alpha_n)$ and transmits $\vec{c}$ across a channel, which is received as a corrupted message $\vec{y} \in \F^n_q$. We seek to understand under what circumstances $\vec{c}$ can be recovered from $\vec{y}$. For instance, the Berlekamp--Welch decoder~\cite{welch1986error} can efficiently recover $\vec{c}$ if and only if the Hamming distance (i.e., the number of distinct symbols) between $\vec{c}$ and $\vec{y}$ is at most $(n-k)/2$. This algorithm is information-theoretically optimal because $\RS_{n,k,q}$ has distance $n-k+1$, i.e., it attains the \emph{Singleton bound}~\cite{singleton1964maximum} and thus is an \emph{MDS code}.

However, there are many other questions one can ask about the structure of Reed--Solomon codes. For example, what if the corruptions are sampled uniformly at random? Or what if the number of corruptions exceeds the unique-decoding radius of $(n-k)/2$, and we can only decode to a list? For decoding problems like these, a central open problem is to find \emph{efficient} decoding algorithms which match the information-theoretic optimum for these problems. For instance, the Guruswami-Sudan \cite{guruswami1998improved} can efficiently list-decode any Reed--Solomon code up to the \emph{Johnson radius} of $1 - \sqrt{R}$, where $R := k/n$ is the rate of the Reed--Solomon code. However, a recent line of work has surprisingly established that most Reed--Solomon codes can be list-decoded up to the generalized Singleton bound, which asymptotically tends to the channel capacity of $1-R$ \cite{ST23,brakensiek2023generic,guo2023randomly,alrabiah2024randomly}.

Although closing the gap between such combinatorial and algorithmic results currently seems out of reach, we note that an analogous result has been known for decades in the \emph{random} error setting. More precisely, for a special class of Reed--Solomon codes known as \emph{subfield evaluation} RS codes or \emph{interleaved} RS codes, the result of Bleichenbacher, Kiayias, and Yung \cite{bleichenbacher2003decoding} proved that for an $s$-interleaved RS code one can efficiently uniquely decode $\frac{s}{s+1}(n-k)$ random errors. In particular, as the interleaving grows, one can efficiently uniquely decode random errors up to a radius approaching capacity.\footnote{Of note, a contemporaneous algorithm by Coppersmith and Sudan~\cite{coppersmith2003reconstructing} is \emph{believed} to also attain capacity, but its current analysis has a decoding radius of only $1-2R$ in the limit. We discuss this open problem in Section~\ref{sec:concl}.} Given the success of the BKY algorithm in the random error model, is there any hope of generalizing it to improve our understanding of the adversarial model?

The main mission of this paper is to lay the groundwork of a pathway for translating results into the random error model back into a setting with adversarial errors. As a primary contribution, we introduce a new error model, which we call the \emph{semi-adversarial} error model which hybridizes the fully random and fully adversarial error regimes. Studying such a model allows us to integrate approaches used in the decoding literature for fully adversarial and fully random errors, with the benefit that we are able to understand \emph{tight} tradeoffs between information-theoretic bounds and fast (i.e., near-linear time) decoding algorithms.

\subsection{The Semi-Adversarial Error Model}

We now proceed to define the error models (i.e., channels) we examine in this paper. Formally, given a code $C \subseteq \Sigma^n$, we define a \emph{channel} to be a function $\Psi$ which takes as input a codeword $\vec{c} \in C$ and returns as output a probability distribution $\Psi(\vec{c})$ over messages $\vec{y} \in \Sigma^n$. We let $\Delta(\Sigma^n)$ denote the set of probability distributions over $\Sigma^n$, so a channel can be viewed as a function $\Psi : C \to \Delta(\Sigma^n)$. We then define an \emph{error model} $\mathcal E$ to be a set of channels. We say that an algorithm $\mathcal A$ can decode $\mathcal E$ with probability $p$, if for every $\vec{c} \in C,\Psi\in\mathcal{E}$, the algorithm $\mathcal A$ when given input a sample $\vec{y} \sim \Psi(\vec{c})$ outputs $\vec{c}$ with probability at least $p$.  

For each $i \in [n]$, we let $y_i$ denote the $i$-th symbol of $\vec{y}$. Given $I \subseteq [n]$, we let $\Psi(\vec{c})|_{I}$ denote the marginal distribution where we restrict to the coordinates of the output string indexed by $I$.  We start by defining the well-studied adversarial and random error models.

\begin{definition}[Adversarial Errors]
We say that a channel $\Psi : C \to \Delta(\Sigma^n)$ has at most $e$ adversarial errors if for every transmitted string $\vec{c} \in C$, every $\vec{y} \in \supp \Psi(\vec{c})$ (i.e., the strings with nonzero probability) have Hamming distance at most $e$ from $\vec{c}$.
\end{definition}

\begin{definition}[Random Errors]
We say that a channel $\Psi : C \to \Delta(\Sigma^n)$ has $e$ random errors if for every $\vec{c}$, there exists $I \subseteq [n]$ with $|I| \ge n-e$ such that $\Psi(\vec{c})$ has the following properties.
\begin{itemize}
\item[(1)] For every $\vec{y} \in \supp \Psi(\vec{c})$, $c_i = y_i$ for all $i \in I$.
\item[(2)] $\Psi(\vec{c})|_{[n] \setminus I}$ is the uniform distribution on $\Sigma^{[n] \setminus I}$.
\end{itemize}
\end{definition}

\begin{remark}
The ability for an algorithm to decode a family of channels with success probability $p$ is equivalent to the algorithm being able to decoding with probability $p$ any convex combination of these channels (i.e., sample a random channel then sample the output of that channel). 
\end{remark}

\begin{remark}
In our random error model, we assume that the set $I$ of non-errors can be adversarially chosen. Note this coincides with the random error model of Coppersmith--Sudan \cite{coppersmith2003reconstructing}, but deviates from Bleichenbacher et al.~\cite{bleichenbacher2003decoding} where the set $I$ is chosen uniformly at random. However, as we shall see, the decoding algorithm considered in \cite{bleichenbacher2003decoding} applies equally well in this stronger error model.
\end{remark}

\begin{remark}
To simplify the notation later in the paper, we often keep the choice of $\Psi$ implicit and exclusively refer to the transmitted codeword $\vec{c}$ and the sampled codeword $\vec{y} \sim \Psi(\vec{c})$. For example, we will say that $\vec{y}$ is received with at most $e$ random errors if the implicit channel $\Psi$ comes from the at most $e$ random errors model.
\end{remark}

Note the random error channel is weaker than the corresponding adversarial error channel in the sense that for the same number of errors, the random error channel can be viewed as a special case of the adversarial error channel where the adversary chooses to corrupt each symbol uniformly at random.
However, we have a tighter understanding of the random error model---the unique decoding radius in the random error model (approximately) coincidences with the capacity of the channel. This was proven combinatorially by Rudra and Uurtamo \cite{rudra2010two} who showed that any MDS code (i.e., code attaining the Singleton bound) is uniquely decodable up to $(1-R-\varepsilon)n$ random errors.\footnote{This is not restricted to MDS codes. Generally, for a code with relative distance $\delta$, it is possible to uniquely decode it from $(\delta-\eps)n$ random errors. For simplicity we state the result according to MDS codes here.}

For any alphabet $\Sigma$, radius $e\ge 0$ and length-$n$ string $\vec{y}\in\Sigma^n$, we define the Hamming ball $\mathcal{B}(\vec{y},e)$ with center $\vec{y}$ as the set of all length-$n$ strings that have Hamming distance at most $e$ from $\vec{y}$. That is, they differ with $\vec{y}$ on at most $e$ positions.

\begin{proposition}[{\cite[Theorem 1]{rudra2010two}}]
    Consider parameters $0<\eps,R<1$, $n>\Omega(1/\eps)$, any alphabet $\Sigma$ with size at least $2^{\Omega(1/\eps)}$, and any MDS code $\mathcal{C}\subseteq \Sigma^n$ with rate $R$. For any codeword $\vec{c}\in\mathcal{C}$, let $\vec{y}$ denote the received word after applying $e=(1-R-\eps)n$ random errors to $\vec{c}$. Then, with probability at least $1-|\Sigma|^{-\Omega(\eps n)}$, we have $\mathcal{B}(\vec{y},e)\cap\mathcal{C}=\{\vec{c}\}$.
\end{proposition}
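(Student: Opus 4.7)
The plan is to prove the proposition by a union bound, but not over alternative codewords $\vec{c}'$ (of which there are $|\Sigma|^{Rn}$, far too many), but instead over candidate \emph{agreement sets} $S \subseteq [n]$ of size $n-e$ between $\vec{y}$ and some $\vec{c}' \in \mathcal{C} \setminus \{\vec{c}\}$. There are only $\binom{n}{e} \leq 2^n$ such sets, and for each, the MDS property tightly controls the number of codewords compatible with the partial assignment on $S \cap I$.

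First I set up notation: let $I \subseteq [n]$ with $|I| = n - e$ be the adversarial non-error set and $J := [n] \setminus I$, so that $\vec{y}|_I = \vec{c}|_I$ deterministically while $\vec{y}|_J$ is uniform in $\Sigma^J$. Fix any candidate agreement set $S$ with $|S| = n - e$, and let $S_I := S \cap I$ and $S_J := S \cap J$. Any $\vec{c}'$ matching $\vec{y}$ on $S$ must in particular satisfy $\vec{c}'|_{S_I} = \vec{c}|_{S_I}$. I split into two cases. If $|S_I| \geq k$, then since $\mathcal{C}$ is MDS every $k$-subset is an information set, so $\vec{c}'|_{S_I} = \vec{c}|_{S_I}$ already forces $\vec{c}' = \vec{c}$, and this $S$ cannot witness a decoding failure. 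If instead $|S_I| < k$, the MDS property implies that exactly $|\Sigma|^{k - |S_I|}$ codewords agree with $\vec{c}$ on $S_I$, and for each such $\vec{c}' \neq \vec{c}$ the uniform randomness of $\vec{y}|_{S_J}$ gives $\Pr[\vec{c}'|_{S_J} = \vec{y}|_{S_J}] = |\Sigma|^{-|S_J|}$. A union bound over these codewords yields a per-$S$ failure probability of at most
\[
|\Sigma|^{k - |S_I|} \cdot |\Sigma|^{-|S_J|} \;=\; |\Sigma|^{k - |S|} \;=\; |\Sigma|^{-\varepsilon n},
\]
using $|S| = n - e = k + \varepsilon n$.

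Finally, I union-bound over the $\binom{n}{e} \leq 2^n$ choices of $S$ to get a total failure probability of at most $2^n \cdot |\Sigma|^{-\varepsilon n}$. Under the hypothesis $|\Sigma| \geq 2^{\Omega(1/\varepsilon)}$ with a sufficiently large hidden constant we have $|\Sigma|^{\varepsilon n / 2} \geq 2^n$, so this bound simplifies to $|\Sigma|^{-\Omega(\varepsilon n)}$, matching the claim.

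The main obstacle is not any single calculation but the choice of indexing variable. A direct union bound over codewords fails whenever $R > \varepsilon$: for each $\vec{c}'$ the event $\dist(\vec{y}, \vec{c}') \leq e$ requires at least $\varepsilon n + 1$ ``lucky'' random agreements in $J$ and so has probability roughly $|\Sigma|^{-\varepsilon n}$, but multiplying by $|\Sigma|^{Rn}$ codewords overwhelms it. The conceptual move is to union-bound instead over the $2^n$ potential agreement sets, exploiting MDS both to rule out case (i) entirely and to cap the number of relevant codewords in case (ii); this is what extracts the $|\Sigma|^{-\varepsilon n}$ per $S$ and makes the final bound work.
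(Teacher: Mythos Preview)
Your proof is correct and follows essentially the same approach as the paper's proof of the generalization (\cref{prop:RU_semi} in Appendix~\ref{sec:ru_semi}) specialized to $e_0=0$: both union-bound over candidate agreement sets $S=A\cup B$ rather than over codewords, use the MDS property to cap the number of compatible codewords at $|\Sigma|^{k-|S_I|}$, and absorb the $2^n$ (resp.\ $3^n$) set-counting factor into the large-alphabet assumption $|\Sigma|\ge 2^{\Omega(1/\eps)}$. Your packaging via ``per-$S$ failure probability = codeword count $\times$ random-match probability'' is slightly cleaner than the paper's counting of $|\mathcal{Y}_{A,B}|$ via an auxiliary set $C$, but the underlying argument is the same.
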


However, if we inspect the proof of Rudra and Uurtamo 
\cite{rudra2010two}, we can observe that, in fact, unique-decoding is still possible even when the errors are a mixture of adversarial errors and random errors, which implies similar unique-decoding bounds (and as we shall soon discuss, corresponding algorithms) could be further generalized to stronger error models. This motivates us to consider a partial-adversarial/partial-random error model, which we call the semi-adversarial error model.

\paragraph{Semi-Adversarial Errors.}
We now present the primary error model we study in this paper: the semi-adversarial error model. It particular, it can be viewed as a \emph{composition} of the channels defined in the adversarial and random error models.

\begin{definition}[Semi-Adversarial Errors]\label{def:semi-adv}
We say that a channel $\Psi : C \to \Delta(\Sigma^n)$ has $(e_0, e)$-semi-adversarial errors (where $0 \le e_0 \le e$) if for every $\vec{c} \in \Sigma^n$ there exist a set of non-adversarial indices  $I \subseteq [n]$ with $|I| = n-e_0$, a set of agreement indices $J \subseteq I$ with $|J| \ge n-e$, and an adversarial corruption string $\vec{z}\in \Sigma^{n-|I|}$ with the following properties. 
\begin{itemize}
\item[(Agreement indices uncorrupted):] For every $\vec{y} \in \supp \Psi(\vec{c})$, $\vec{c}|_{J} = \vec{y}|_{J}$.
\item[(Adversarial corruptions applied)] For every $\vec{y} \in \supp \Psi(\vec{c})$, $\vec{y}|_{[n]\setminus I} = \vec{z}$. 
\item[(Random errors after adversarial corruptions):]The distribution $\Psi(\vec{c})|_{I \setminus J}$ is uniform over $\Sigma^{|I \setminus J|}$.
\end{itemize}
\end{definition}

\begin{remark}
In our analysis of decoding algorithms for semi-adversarial errors, we make use of some more fine-grained notation. In particular, for the given $I\subseteq[n]$ with $|I|=n-e_0$ we assume that the ``true'' adversarial errors occur in some subset $K\subseteq [n]\setminus I$ with $e_A := |K|\leq e_0$. Formally, we define $K$ such that for any position $j\in[n]\setminus I$, $j\in K$ iff $y_j \neq c_j$. Note that it may be the case $K \neq [n] \setminus I$, as some of the ``corruptions'' $\vec{z}$ can equal the corresponding coordinates in the original coordinates $\vec{c}$. We also often use the notation $K' := I \setminus J$ with $e_R := |K'|$ to denote the coordinates for which random errors are applied. We then let $\be :=e_A+e_R\leq e$ denote the number of positions which are truly corrupted during transmission.
\end{remark}

\begin{remark}
    We note that $(e,e)$-semi-adversarial errors are equivalent to exactly $e$ adversarial errors, while $(0,e)$-semi-adversarial errors are equivalent to $e$ random errors.
\end{remark}

\begin{remark}[Comparison to recent and concurrent work.] A recent paper \cite{guerrini2023simultaneous} has studied hybrid error models in connection to the closely related problem of rational function decoding. Some follow-up papers \cite{abbondati2024decoding,abbondati2025simultaneous} deal with the same problem over number fields as opposed to function fields. 
The stated results in \cite{guerrini2023simultaneous} are for semi-adversarial errors where the adversarial errors are for \emph{valuation errors} which do not exist in the IRS case so the statement does not directly imply results in our setting. The results for number fields (like $\mathbb{Q}$) in \cite{abbondati2025simultaneous} do work for any kind of adversarial errors, but this work was done concurrently and independently of us. A later work \cite{abbondati2025communication} generalizes techniques of \cite{guerrini2023simultaneous,abbondati2025simultaneous} and gives results for IRS codes in the semi-adversarial model. Although \cite{abbondati2025communication} was posted later than our work, it is developed independently. We compare their results with ours.
\begin{compactitem}
\item For decoding $(e_0,e)$ semi-adversarial errors for IRS codes, \cite{abbondati2025communication} requires $e_0\leq n-k-e-(e-e_0)/s$, while our bound $e_0\leq n-k-e$ is better and actually optimal. This difference could be ignored if one allows $s$ being arbitrarily large, with the cost of larger alphabet.
\item Unlike our work, \cite{abbondati2025communication} does not include results for folded RS codes and multiplicity codes, but they can handle a family of interleaved rational function codes that we do not consider.
\item The success probability of decoding algorithms in \cite{abbondati2025communication} is larger than ours, but both two papers achieve $1-o_n(1)$ success probability when the field size $q=\omega(n)$. 
\end{compactitem}
\end{remark}

For any number of total errors $e$, when the number $e_0$ of adversarial errors gets larger, the task of unique-decoding (with high success probability) gets more difficult. As our first nontrivial observation about the semi-adversarial error model, we show that in any MDS code the number $e_0$ of adversarial errors must be bounded by
\[
e_0\leq\min(e,(1-R)n-e).
\]
In particular, to uniquely decode $e$ errors in the semi-adversarial model, the maximum number of adversarial errors that could occur $e_0=\min(e,n-k-e)$. Our formal result is as follows.

\begin{proposition}\label{prop:optimal}
  For any MDS code $\mathcal{C}\subseteq \Sigma^n$ over alphabet $\Sigma$ with rate $R$, block length $n$, $e_0>(1-R)n-e$, there exists some word $\vec{z}\in\Sigma^n,|\mathcal{B}(\vec{z},e_0)\cap \mathcal{C}|\ge 1$ and $K\subseteq [n], |K|=e-e_0$ such that any received word $\vec{y}$ that differs with $\vec{z}$ only on positions in $K$ satisfies $|\mathcal{B}(\vec{y},e)\cap \mathcal{C}|\ge 2$. 
\end{proposition}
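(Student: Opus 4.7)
My plan is to give an explicit construction of $\vec{z}$ and $K$ witnessing the failure of unique decoding, by \emph{mixing} a minimum-distance pair of codewords in $\mathcal{C}$. Since $\mathcal{C}$ is MDS, I can pick two codewords $\vec{c}_1,\vec{c}_2 \in \mathcal{C}$ at the minimum distance $d := n - k + 1$. Let $D = \{i : c_1(i) \neq c_2(i)\}$, so $|D| = d$. The hypothesis $e_0 > (1-R)n - e$ rewrites (for integer parameters, and noting $(1-R)n = n-k$) as the key arithmetic inequality $e_0 \geq d - e$.

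Next, I would choose $K$ to be a set of size $e - e_0$ that meets $D$ in $\min(|D|, e - e_0)$ positions, i.e., $K$ absorbs as much of $D$ as possible. On $D \setminus K$, whose size is $\max(0, d - (e - e_0))$, I would partition the positions into two parts $A_1, A_2$ each of size at most $e_0$, setting $z_i = c_1(i)$ on $A_1$ and $z_i = c_2(i)$ on $A_2$. Outside of $D$ I set $z_i = c_1(i) = c_2(i)$, and on $K$ I set $z_i$ arbitrarily (say equal to $c_1(i)$). The feasibility of the partition reduces to $|D \setminus K| \leq 2e_0$, which in the non-trivial regime $e - e_0 \leq d$ follows directly from $e_0 \geq d - e$; in the degenerate regime $e - e_0 \geq d$, the set $D \setminus K$ is empty and there is nothing to split.

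The verification is then a routine calculation. One has $\dist(\vec{z}, \vec{c}_1) = |A_2| \leq e_0$, so $\vec{c}_1 \in \mathcal{B}(\vec{z}, e_0)$ and the first claim holds. Moreover, for any $\vec{y}$ that agrees with $\vec{z}$ on $[n] \setminus K$ and each $j \in \{1,2\}$,
\[
\dist(\vec{y}, \vec{c}_j) \;\leq\; |\{i \notin K : z_i \neq c_j(i)\}| + |K| \;\leq\; |A_{3-j}| + (e - e_0) \;\leq\; e,
\]
so both $\vec{c}_1$ and $\vec{c}_2$ lie in $\mathcal{B}(\vec{y}, e)$, giving $|\mathcal{B}(\vec{y}, e) \cap \mathcal{C}| \geq 2$ as required.

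The main conceptual obstacle is finding the correct mixing: one must leave $e - e_0$ ``free'' positions in $K$ (so that a worst-case $\vec{y}$ can pile $e - e_0$ disagreements onto either codeword) while carefully balancing the disagreements on $D \setminus K$ so that neither codeword drifts more than $e_0$ away from $\vec{z}$. The tightness of the bound on $e_0$ is a direct manifestation of this balance: if $e_0$ were any smaller than $d - e$, the required partition of $D \setminus K$ into two halves of size at most $e_0$ would no longer fit, which is exactly why the threshold $e_0 = (1-R)n - e$ is information-theoretically sharp.
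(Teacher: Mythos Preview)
Your proof is correct and follows essentially the same approach as the paper: pick two codewords at distance $d=n-k+1$, mix them on their disagreement set to form $\vec z$, and reserve $e-e_0$ positions as $K$. The paper actually derives this proposition as the $L=1$ case of the more general list-decoding bound (Proposition~\ref{prop:list_optimal}), where $L+1$ codewords agreeing on $k-1$ coordinates are mixed via a balanced partition, but specialized to $L=1$ the construction and the arithmetic (in particular the feasibility condition $|D\setminus K|\le 2e_0$ from $e_0\ge d-e$) match yours.
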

We should regard $\vec{y}$ as the real received word by adding $e-e_0$ random errors upon the adversarial received word $\vec{z}$. From \cref{prop:optimal} we can see that if $e_0>(1-R)n-e$, then no matter the content of random errors the unique-decoding is impossible. As a side remark, \cref{prop:optimal} is a special case of \cref{prop:list_optimal}.

On the other hand, by slightly adapting the proof of Rudra and Uurtamo \cite{rudra2010two}, we can actually get the following positive combinatorial bound that states with high probability it is possible to uniquely decode from any $e_0\leq \min\left(e,(1-R-\varepsilon)n-e\right)$ adversarial errors, which approaches the optimal bound $(1-R)n$ for diminishing constant $\varepsilon$. Formally, we have the following combinatorial unique-decodability.

\begin{proposition}\label{prop:RU_semi}
    Consider parameters $0<\eps,R<1$, $n>\Omega(1/\eps)$, any alphabet $\Sigma$ with size at least $2^{\Omega(1/\eps)}$, and any MDS code $\mathcal{C}\subseteq \Sigma^n$ with rate $R$.  For any codeword $\vec{c}\in\mathcal{C}$, let $\vec{y}$ denote the received word after applying $(e_0,e)$-semi-adversarial  errors to $\vec{c}$ where $e_0\leq \min(e,(1-R-\eps)n-e),e\leq (1-R)n$. Then, with probability at least $1-|\Sigma|^{-\Omega(\eps n)}$, we have $\mathcal{B}(\vec{y},e)\cap\mathcal{C}=\{\vec{c}\}$.
\end{proposition}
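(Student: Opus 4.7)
The plan is to extend Rudra--Uurtamo's argument by reducing the semi-adversarial case to a binomial tail estimate that tracks only the uniformly random positions.  Fix any adversarial strategy $(I, \vec{z}, J)$ as in Definition~\ref{def:semi-adv} and set $A = [n]\setminus I$ (of size $e_0$), $B = I \setminus J$ (of size $|B| \le e - e_0$), and $q := |\Sigma|$.  Then $y_j = c_j$ for $j \in J$, $y_j = z_j$ for $j \in A$ (arbitrary, adversarial), and $y_j$ is independently uniform on $\Sigma$ for $j \in B$; the entire randomness in $\vec{y}$ lives on $B$.

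For each $\vec{c}' \in \mathcal{C}\setminus\{\vec{c}\}$ at Hamming distance $w := d(\vec{c}, \vec{c}')$, I would write
\[
d(\vec{y}, \vec{c}') \;=\; d_{[n]\setminus B}(\vec{z},\vec{c}') + |B| - W_B, \qquad W_B := |\{j \in B : y_j = c'_j\}| \sim \mathrm{Bin}(|B|, 1/q).
\]
Triangle inequality gives $d_{[n]\setminus B}(\vec{z}, \vec{c}') + |B| \ge d(\vec{z}, \vec{c}') \ge w - e_0$, so the bad event $d(\vec{y}, \vec{c}') \le e$ forces $W_B \ge w - e_0 - e$.  Since $w \ge d := (1-R)n + 1$ by the MDS assumption, and the hypothesis gives $(1-R)n - e_0 - e \ge \eps n$, this threshold is at least $\eps n + 1$.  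The standard identity $\Pr[\mathrm{Bin}(m,p) \ge t] \le \binom{m}{t} p^t$ (proved by Markov on $\binom{X}{t}$) then yields $\Pr[d(\vec{y}, \vec{c}') \le e] \le \binom{|B|}{w-e_0-e} q^{-(w-e_0-e)}$.

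To combine this over all $\vec{c}'$ without incurring a prohibitive $q^{Rn}$ factor, I would use the standard MDS bound $A_w \le \binom{n}{w} q^{w - d + 1}$ for $w \ge d$ (each size-$(n-w)$ subset $S$ extends $\vec{c}|_S$ to at most $q^{w-d+1}$ codewords).  Grouping by $w$ and applying Vandermonde's identity $\sum_u \binom{|B|}{u}\binom{n}{u + e_0 + e} = \binom{n + |B|}{n - e_0 - e}$ then gives
\[
\sum_{\vec{c}' \ne \vec{c}} \Pr[d(\vec{y}, \vec{c}') \le e] \;\le\; q^{e_0 + e - d + 1} \binom{n + |B|}{n - e_0 - e} \;\le\; q^{-\eps n} \cdot 2^{2n}.
\]
Provided $|\Sigma| \ge 2^{C/\eps}$ with $C$ a sufficiently large constant, the factor $q^{\eps n}$ absorbs the $2^{2n}$ Vandermonde term and leaves $|\Sigma|^{-\Omega(\eps n)}$.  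The conclusion follows from Markov's inequality, noting $\vec{c} \in \mathcal{B}(\vec{y}, e)$ deterministically and that the bound is uniform in the adversary's choice $(I, \vec{z}, J)$.

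The main obstacle is the union bound: naively summing $\Pr[d(\vec{y}, \vec{c}') \le e]$ over all $q^{Rn}$ codewords would completely overwhelm the binomial tail.  The MDS weight enumerator is what saves the day by producing exactly the factor $q^{w - d + 1}$ which, combined with the binomial tail $q^{-(w - e_0 - e)}$, condenses to the $w$-independent factor $q^{e_0 + e - d + 1}$, isolating the gap $(1-R-\eps)n - e_0 - e \ge \eps n$ built into the hypothesis.  Checking that the alphabet lower bound $|\Sigma| \ge 2^{\Omega(1/\eps)}$ dominates the $2^{O(n)}$ Vandermonde factor is routine but must be done with a sufficiently large implicit constant.
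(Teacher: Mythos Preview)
Your proof is correct but follows a genuinely different route from the paper's.

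The paper fixes the adversarial choices and counts the set $\mathcal{Y}_0$ of ``bad'' received words directly: for each bad $\vec{y}$ it records the agreement set with a competing codeword as a pair $(A,B)$ with $A\subseteq J$, $B\subseteq [n]\setminus J$, observes that MDS forces $|A|<Rn$, and then shows (via a pigeonhole/information-set argument, their Claim in the appendix) that each class $\mathcal{Y}_{A,B}$ has size at most $|\Sigma|^{e+e_1-(1-R)n}$. A union bound over the at most $3^n$ pairs $(A,B)$ finishes. By contrast, you run a union bound over competing codewords $\vec{c}'$, translate $d(\vec{y},\vec{c}')\le e$ into a binomial tail on the random block $B$, and control the sum over $\vec{c}'$ using the MDS weight bound $A_w\le\binom{n}{w}q^{w-d+1}$ together with Vandermonde.

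Both arguments isolate the same exponent $e_0+e-(1-R)n\le -\eps n$ and absorb the same $2^{O(n)}$ combinatorial factor into the alphabet-size hypothesis, so the final quantitative conclusion is identical. Your approach is arguably more modular: the MDS property enters cleanly through the weight enumerator, and the probabilistic step is just a standard binomial tail. The paper's argument is more combinatorial and avoids the weight enumerator entirely, instead using the MDS information-set property to bound how many received words can be consistent with a fixed agreement pattern. Either argument would serve as a valid adaptation of Rudra--Uurtamo to the semi-adversarial setting.
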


We prove \cref{prop:RU_semi} in \cref{sec:ru_semi}. Motivated by these observations, a very natural question to ask is how to design unique-decoding algorithms in semi-adversarial error model that match the combinatorial bound $(1-R-\varepsilon)n-e$ from \cref{prop:RU_semi} or even achieve the optimal bound $(1-R)n-e$. In this paper, we provide unique-decoding algorithms from semi-adversarial errors for (a subclass of) Reed--Solomon and related codes that match the counting-based or even the optimal bound. In particular, our algorithm for interleaved Reed--Solomon codes builds upon the approach of Bleichenbacher, Kiayias, and Yung \cite{bleichenbacher2003decoding} and Alekhnovich \cite{alekhnovich2005linear} and our algorithms for folded Reed--Solomon and multiplicity codes are entirely new.

\subsection{Our Results}

We now proceed to discuss our results on the unique decoding of variants of Reed--Solomon codes in the semi-adversarial error model. All of our results are algorithmically efficient, running in $\tilde{O}(n)$ time. Moreover, for most tradeoff regimes between adversarial and random errors, our results are either information-theoretically optimal or near-optimal.
In particular, our results pertain to three variant of Reed--Solomon codes: \emph{interleaved} RS codes (equivalently subfield evalaution RS codes), \emph{folded} RS codes, and \emph{multiplicity} codes. For each result, we briefly define the variant of RS codes we are considering along with the guarantees of our algorithm.

\paragraph{Interleaved Reed--Solomon Codes.}
Interleaving is a common technique in coding theory that was initially used to design codes and decoders with enhanced burst error-correction capabilities. By dispersing these errors, interleaving transforms them into random errors that are easier to correct. An important subfamily of Reed--Solomon codes is that of interleaved Reed--Solomon (IRS) codes (e.g., \cite{krachkovsky1997decoding, coppersmith2003reconstructing,bleichenbacher2003decoding,schmidt2009collaborative}).

Given polynomials $f_1, \hdots, f_s \in \F_q[x]$, we define the $s$-interleaved encoding map $\mathcal{I}(f_1,f_2,\hdots,f_s)$ to be $(\cC^f_1,\cC^f_2,\hdots,\cC^f_n)\in\left(\mathbb{F}_q^s\right)^n,$ where $\cC^f_i:=\Bigl(f_1(\alpha_i),f_2(\alpha_i),\dots,f_s(\alpha_i)\Bigl)$. With this encoder, we define a $[n,k]$ $s$-interleaved RS code for any distinct $\alpha_1, \hdots \alpha_n \in \F_q$ as follows.
\[\mathsf{IRS}_{n,k,q,s}(\alpha_1,\hdots, \alpha_n):=\bigg\{\mathcal{I}(f_1,f_2,\hdots,f_s):f_i\in\mathbb{F}_q[x],\text{ }\deg f_i<k,\text{ and }i\in[s]\bigg\}\subseteq\left(\mathbb{F}_q^s\right)^n,\]
where $[s] := \{1,2,\hdots, s\}$. Although an IRS code is technically not an RS code, we show in Appendix~\ref{app:IRS-RS} that IRS codes have decoding properties identical to those of a class of \emph{subfield} Reed--Solomon codes. That is, if we let $\F_{q^s}$ be the $s$-degree extension of $\F_q$, then $\RS_{n,k,q^s}(\alpha_1,\dots,\alpha_n) \subseteq \F_{q^s}^n$ has identical decoding properties in the semi-adversarial model.

We now state our main decoding theorem for IRS codes in the semi-adversarial model. See Figure~\ref{fig:mainresult} for a visual summary of our result.
\begin{theorem}[Informal Version of \cref{thm:algoirs_analysis}]\label{thm:main_RS}
   Consider integers $s,n,k\ge1$.
   For any interleaved Reed--Solomon code $\mathsf{RS}_{n,k,q^s}(\alpha_1, \dots, \alpha_n)$ with $\alpha_1, \dots, \alpha_n\in\F_q,$ there exists an efficient unique decoding algorithm 
   that, with high probability, can correct $(e_0,e)${-semi-adversarial} errors for any 
   \[e \leq \frac{s}{s+1}(n-k)\quad \text{and}\quad e_0\leq\min(e,n-k-e).\] The running time of the algorithm is at most $O\left(s^{O(1)}n\log^2(n)\log\log (n)\right)$. We refer the reader to our Algorithm~\ref{algo:BKY+ALEC} for more details.
\end{theorem}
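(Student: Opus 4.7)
The plan is to adapt the Bleichenbacher--Kiayias--Yung (BKY) key-equation method to the semi-adversarial setting and combine it with an Alekhnovich-style approximant-basis solver for near-linear runtime. Given the interleaved received word $(\vec{y}^{(1)},\ldots,\vec{y}^{(s)})$ with true codeword $\mathcal{I}(f_1,\ldots,f_s)$ and unknown error set $E\subseteq [n]$ of size $e$, I would set up the standard linear system in unknowns $(\Lambda,N_1,\ldots,N_s)$,
$$\Lambda(\alpha_i)\, y^{(j)}_i \;=\; N_j(\alpha_i), \qquad i\in[n],\; j\in[s],$$
subject to $\deg\Lambda\le e$ and $\deg N_j\le e+k-1$. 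The intended solution $(\Lambda_0,\Lambda_0 f_1,\ldots,\Lambda_0 f_s)$ with $\Lambda_0(x)=\prod_{i\in E}(x-\alpha_i)$ always lies in the solution space, so decoding reduces to certifying that this space is one-dimensional, after which each $f_j = N_j/\Lambda$ is recovered by a single polynomial long division.

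The uniqueness analysis is where the semi-adversarial structure has to be exploited. For any candidate $\tilde\Lambda$ of degree $\le e$ not proportional to $\Lambda_0$, the polynomials $g_j(x):=\tilde N_j(x)-\tilde\Lambda(x)f_j(x)$ have degree $\le e+k-1$ and must vanish at every non-error evaluation point, so each $g_j$ is determined by at most $2e+k-n$ free coefficients. The existence of a spurious solution then reduces to the existence of a nontrivial kernel vector of an $se \times \bigl((e+1)+s(2e+k-n)\bigr)$ matrix over $\F_q$ whose rows indexed by the random-error set $E_1$ have independent uniform entries and whose rows indexed by the adversarial-error set $E_0$ are deterministic. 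The key step is to partition this matrix into the adversarial block and the random block and show that their combined rank is almost full with high probability; I would accomplish this by a Schwartz--Zippel argument on the determinant of the random block restricted to a complement of the adversarial block, followed by a union bound over the at most $q^{e+1}$ choices of $\tilde\Lambda$ up to scaling. The BKY bound $e\le \frac{s}{s+1}(n-k)$ provides exactly the slack required by the dimension count, while the constraint $e_0\le n-k-e$ from \cref{prop:optimal} guarantees that the deterministic adversarial block cannot absorb more slack than the random block can restore.

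The main obstacle I foresee is precisely this rank tracking: translating the clean BKY analysis (which treats all errors uniformly) into an argument that handles adversarial and random rows asymmetrically while still saturating the information-theoretic tradeoff $e_0\le\min(e,n-k-e)$. A secondary complication is that Alekhnovich's solver is stated for a module-basis problem, so the BKY key equation must first be recast in that form; this repackaging is routine once uniqueness is established. Once the solution space is certified one-dimensional, the approximant-basis algorithm yields the generator of the solution space in $O(s^{O(1)}\, n\log^2 n\,\log\log n)$ time, and the closing polynomial divisions $f_j = N_j/\Lambda$ fit within the same budget. Unlike Guruswami--Sudan-type decoders, no root-finding or bivariate factorization is required, which is the source of the near-linear runtime and the polynomial (rather than exponential) dependence on the interleaving parameter $s$.
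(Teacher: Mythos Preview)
Your high-level plan matches the paper: set up the BKY key-equation system, reduce correctness to the solution space being one-dimensional (equivalently, a block-Vandermonde matrix having full column rank), and feed the system to Alekhnovich's approximant-basis solver for the running-time bound. Your substitution $g_j=\tilde N_j-\tilde\Lambda f_j$ is exactly the column operation the paper performs to replace $\bN_h$ by diagonal blocks $\Diag(c_{i,h}-y_{i,h})$, so your $se\times\bigl((e{+}1)+s(2e{+}k{-}n)\bigr)$ system is the paper's matrix $\bB''$ after eliminating the rows that become trivially zero.

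The gap is in the rank argument. You propose Schwartz--Zippel on the determinant of the random block \emph{followed by a union bound over the $q^{e+1}$ choices of $\tilde\Lambda$}. That union bound cannot close: a single Schwartz--Zippel application yields failure probability $O(e/q)$, and multiplying by $q^{e+1}$ gives a vacuous bound. Even if you instead bounded, for each fixed $\tilde\Lambda$, the probability that the random error vector falls in the corresponding feasible subspace, you would need failure $\le q^{-(e+1)}$ per $\tilde\Lambda$, and the dimension count gives no such slack at the extremal radius $e=\tfrac{s}{s+1}(n-k)$. The point is that once you have written the problem as the rank of a single matrix in which all candidate $\tilde\Lambda$'s already appear as unknowns, no union bound over $\tilde\Lambda$ is needed or appropriate.

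What the paper does instead is analyze the rank of the full system in one shot. It passes to a symbolic matrix $\bB^{\SYM}$ in which each random error $y_{i,h}$ is replaced by an indeterminate $Y_{i,h}$, selects a square submatrix, and exhibits a \emph{single} monomial in its determinant expansion that arises with nonzero coefficient from a \emph{unique} Laplace-expansion term. The semi-adversarial structure enters in how that monomial is chosen: the adversarial error columns are assigned greedily via the sets $T'_h=T_h\setminus(T_1\cup\cdots\cup T_{h-1})$ (so each adversarial column lands in a block where its diagonal entry is nonzero), and the random error columns are then distributed to equalize block sizes. The constraint $e_0\le n-k-e$ is exactly what guarantees each block has enough Vandermonde rows to accommodate its assigned adversarial columns. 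A single Schwartz--Zippel application to this one determinant, of degree at most $\bar e$ in the $Y_{i,h}$'s, then gives the $1-\bar e/q$ success probability. Your identification of ``rank tracking with asymmetric adversarial and random rows'' as the main obstacle is accurate; the missing idea is this monomial-selection / unique-expansion argument in place of the union bound.
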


In particular, note that we smoothly interpolate between the algorithms of Berlekamp--Welch for adversarial errors and the algorithm of BKY for random errors. Furthermore, note that as long as $e \leq \frac{s}{s+1}(n-k)$, our decoding bound \emph{perfectly} matches the combinatorial bound of \Cref{prop:optimal} and  \Cref{prop:RU_semi}. Furthermore, for $e > n - \sqrt{nk}$ we ``break'' the Johnson bound\footnote{In this paper ``Johnson bound'' refers to the error/rate tradeoff of $e \le n - \sqrt{nk}$. We do \emph{not} improve on Guruswami-Sudan algorithm in the adversarial model.} in the sense that our result is not implied by a combination of Guruswami-Sudan with a combinatorial bound on the list size.

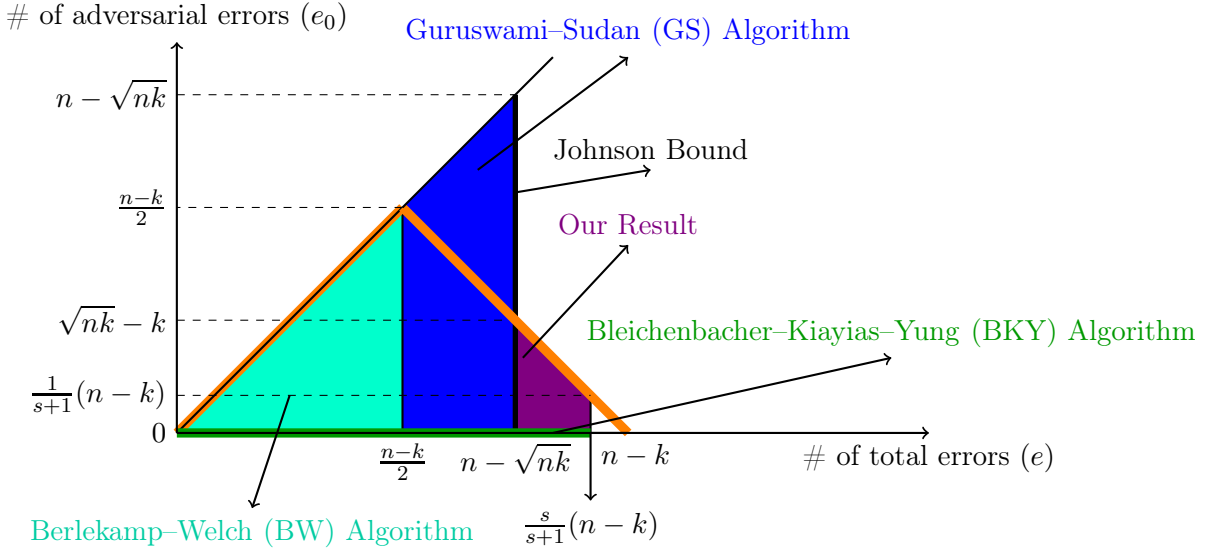
\begin{figure}
    \centering
\definecolor{cyan}{HTML}{00FFCC}
    
\begin{tikzpicture}
\usetikzlibrary{patterns,patterns.meta}
\usetikzlibrary{patterns,patterns.meta}

    \coordinate (A) at (0,0);
    \coordinate (B) at (3,3);
       \coordinate (F) at (3,0);
    \coordinate (C) at (4.5,4.5);
      \coordinate (G) at (4.5,0);
      \coordinate (H) at (4.5,1.5);
    \coordinate (D) at (6,0);
        \coordinate (E) at (5,5);
           \coordinate (I) at (5.5,0);
           \coordinate (L) at (5.5,0.5);
            \coordinate (X) at (0,1.5);
\coordinate (Y) at (0,0.5);
\coordinate (Z) at (0,4.5);
    \fill[cyan,opacity=0.5] (A) -- (B) -- (F) -- cycle;
    \fill[blue!100,opacity=0.5] (B) -- (F) -- (G) -- (C) -- cycle;
    \fill[violet,opacity=0.5] (G) -- (H) -- (L) -- (I) -- cycle;
   
    \draw[thick] (B) --++ (0,-3);
    \draw[line width =2 pt] (C) --++ (0,-4.5); 
            \draw[dashed] (X) -- (H);
    \draw[line width =3.5 pt, color=orange] (A) -- (B);
    \draw[thick] (A) -- (B);
     \draw[thick] (I) -- (L);
         \draw[line width =3.5 pt, color=orange] (B) -- (D);
    \draw[thick] (B) -- (E);
      \draw[dashed] (Z) -- (C);
          \draw[dashed] (Y) -- (L);
       \draw[thick] (A) -- (D);
     \draw[line width =3.5 pt, color=green!60!black] (A) -- (I);
    
    \draw[thick,->] (0,0) -- (10,0) node[anchor=north] {\# of total errors $(e)$};
     \draw[thick,->] (1.5,0.5) -- (1,-1) node[anchor=north] {\textcolor{cyan!80!black}{Berlekamp--Welch (BW) Algorithm}};
     \draw[thick,->] (4.5,3.2) -- (6.3,3.5) node[anchor=south] {{Johnson Bound}};
     \draw[thick,->] (5,0) -- (9.5,1) node[anchor=south] {\textcolor{green!60!black}{Bleichenbacher–Kiayias–Yung (BKY) Algorithm}};
     \draw[thick,->] (4,3.5) -- (6,5) node[anchor=south] {\textcolor{blue!100}{Guruswami--Sudan (GS) Algorithm}};
     \draw[thick,->] (4.6,1) -- (6,2.5) node[anchor=south]{\textcolor{violet}{Our Result}};
       \draw[thick,->] (5.5,0) -- (5.5,-0.9) node[anchor=north] {$\frac{s}{s+1}(n-k)$};
    \draw[thick,->] (0,0) -- (0,5.2) node[anchor=south] {\# of adversarial errors $(e_0)$};

    \draw[dashed] (B) --++ (-3,0) node[left] {\(\frac{n-k}{2}\)};

    \node[below] at (3,0) {\(\frac{n-k}{2}\)};
    \node[below] at (4.5,0) {\(n - \sqrt{nk}\)};

    \node[below] at (6.1,0) {\(n-k\)};
    \node[left] at (0,0.5) {\(\frac{1}{s+1}(n-k)\)};
    \node[left] at (0,4.5) {\(n-\sqrt{nk}\)};
     \node[left] at (0,1.5) {\({\sqrt{nk}-k}\)};
      \node[left] at (0,0) {\(0\)};
\end{tikzpicture}
 \caption{This figure is a comparison between our result with BW Algorithm, GS Algorithm, and the original BKY Algorithm for {$[n,k]$ {\bf Reed--Solomon codes} over alphabet $\F_{q^s}$ with $n$ evaluation points chosen from $\F_q$}. The horizontal axis depicts the total number of errors $e$ allowed in the semi-adversarial model, and the vertical axis depicts the number of adversarial errors $e_0$. The region below the two orange lines is where unique-decoding is combinatorially possible by \Cref{prop:RU_semi}. It is impossible to achieve unique decoding above the two orange lines by \cref{prop:optimal}.
 }
    \label{fig:mainresult}
\end{figure}

\paragraph{Folded Reed--Solomon (RS) Codes.}

Another common variant of RS codes which we study is that of \emph{folded} Reed--Solomon (FRS) codes~\cite{Guru06}. In contrast to IRS codes which bundle the evaluations of the same point $\alpha \in \F_q$ at $s$ different polynomials, FRS codes bundle $s$ correlated evaluations of the same polynomial. More precisely, let $\gamma$ be a multiplicative generator of $\F_q$ and consider $s \ge 1$. We say that $\alpha_1, \hdots, \alpha_n \in \F_{q}$ are \emph{appropriate} if the $sn$ values $\{\gamma^{j-1}\alpha_i : i \in [n], j \in [s]\}$ are distinct \cite{cz24}. Given $f \in \F_{q}[x]$, we define our evaluation map to be $\mathcal{C}(f):=(\cC^f_1,\cC^f_2,\dots,\cC^f_n)\in\left(\mathbb{F}_q^s\right)^n$, with $\cC^f_i:=\Bigl(f(\alpha_i),f(\gamma\alpha_i),\dots,f(\gamma^{s-1}\alpha_i)\Bigl).$
Then, for parameters $n,  k$ and appropriate evaluation points $\alpha_1, \hdots, \alpha_n \in \F_q$, we define our $(s,\gamma)$-folded RS code to be
\[\mathsf{FRS}^{s,\gamma}_{n,k,q}(\alpha_1,\alpha_2,\dots,\alpha_n):=\bigg\{\mathcal{C}(f):f\in\mathbb{F}_q[x],\text{ }\deg f<k\bigg\}\subseteq\left(\mathbb{F}_q^s\right)^n.
\]

We now state our results which apply to folded Reed–Solomon codes.
\begin{theorem}[Informal Version, See \cref{thm:algofrs_analysis}]\label{thm:FRS}
   Consider integers $s,L,n,k\ge1$ and a generator $\gamma$ of $\F_q^{\times}$ where $s\ge L$. 
   For any appropriate $(s,\gamma)$-folded Reed--Solomon code $\mathsf{FRS}_{n,k}^{s,\gamma}(\alpha_1, \dots, \alpha_n)$ over the alphabet $\F_{q}^s$, there exists an efficient unique decoding algorithm that, with high probability, can correct {$(e_0,e)$-semi-adversarial} errors for any 
   \[e \leq \frac{L}{L+1}\left(n-\frac{k}{s-L+1}\right)\quad and\quad e_0\leq\min\left(e,n-e-\frac{k}{s-L+1}\right).\] The running time of the algorithm is at most $O\left(s^{O(1)}n\log^2(n)\log\log (n)\right)$. We refer the reader to our Algorithm~\ref{algo:main} for more details.
\end{theorem}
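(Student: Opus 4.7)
The plan is to generalize the Bleichenbacher--Kiayias--Yung (BKY) approach from the interleaved Reed--Solomon setting to folded Reed--Solomon codes by setting up a Guruswami--Wang-style linear system that exploits the shift structure of folded evaluations, and then to leverage the randomness of the uncorrupted-but-flipped positions to recover the message polynomial $f$ via a single polynomial long division, replacing the multivariable root-finding required by Guruswami--Wang in the fully adversarial regime.

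Concretely, I would first set up a linear system in the spirit of Guruswami--Wang interpolation: seek polynomials $A_0(x), A_1(x), \ldots, A_L(x) \in \F_q[x]$ with degree bounds $\deg A_\ell < D$ for $\ell = 1, \ldots, L$ and $\deg A_0 < D + \lceil k/(s-L+1)\rceil$, where $D$ is tuned to just exceed $e$, such that for every evaluation point $\alpha_i$ and every window start $j \in \{0, 1, \ldots, s-L\}$ we have
\[
A_0(\gamma^j \alpha_i) + \sum_{\ell=1}^{L} A_\ell(\gamma^j \alpha_i) \cdot y_{i,j+\ell-1} = 0.
\]
Dimension counting produces a nontrivial solution, and I would argue the system can be solved in time $O\bigl(s^{O(1)} n \log^2 n \log\log n\bigr)$ by invoking an Alekhnovich-style fast algorithm for structured Hermite--Pad\'e approximation, appropriately adapted to the window structure of the folded code.

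For correctness I would next show the polynomial identity
\[
R(x) := A_0(x) + \sum_{\ell=1}^{L} A_\ell(x)\, f(\gamma^{\ell-1} x) \equiv 0,
\]
by observing that $R$ has degree at most roughly $D + k - 1$ and vanishes at each of the $(n-e)(s-L+1)$ distinct values $\gamma^j \alpha_i$ with $i$ a correct position and $j \in \{0,\dots,s-L\}$ (using the appropriateness hypothesis to ensure distinctness); for parameters meeting the bound $e \le \frac{L}{L+1}\bigl(n - \frac{k}{s-L+1}\bigr)$ this root count strictly exceeds the degree, forcing $R \equiv 0$. The key novelty is then the final recovery step: instead of performing multivariable root-finding on the operator polynomial $A_0 + A_1 y_0 + \cdots + A_L y_{L-1}$, I would argue that in the semi-adversarial model, with high probability over the random error symbols, the unique-up-to-scalar solution $(A_0, \dots, A_L)$ is rigid in the sense that $A_1(x)$ plays the role of an error locator, so that the identity $R \equiv 0$ collapses to an explicit divisibility $A_1(x) \mid P(x)$ for an explicit polynomial $P$ computable from the $A_\ell$'s, which yields $f = P / A_1$ by a single polynomial long division.

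The hardest part of the plan will be this final randomness argument: I need a union-bound showing that the probability (over the random error symbols) that the computed solution fails to be the rigid structured one is exponentially small in $n$. Concretely, at each random-error position $i$ the $s-L+1$ window equations involve $L$ i.i.d.\ uniform symbols, and I must show that these equations generically force $A_\ell(\gamma^j \alpha_i) = 0$ for every $\ell$ and $j$, contributing extra spurious roots so that the computed $(A_0, \dots, A_L)$ is pinned down (up to a scalar) to the expected structured form scaled by the true adversarial error locator. The condition $e_0 \le n - e - k/(s-L+1)$ emerges precisely from requiring that the combined spurious roots from random errors, together with the genuine roots at correct positions, exceed the degree of the minimal nonzero solution; combined with the existence threshold from dimension counting this yields the claimed decoding radius and matches the tightness established in \cref{prop:optimal}.
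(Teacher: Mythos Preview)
The central gap is that the interpolation you set up is the standard Guruswami--Wang linear system, whereas the paper's key innovation is a \emph{different}, BKY-style interpolation in which an error-locator polynomial $E(X)$ is an explicit unknown. Concretely, the paper seeks $E(X)$ of degree $\le \bar e$ together with $A_1,\dots,A_L$ of degree $\le k-1+\bar e$ satisfying
\[
A_h(\gamma^{i-1}\alpha_j)=y_{j,i+h-1}\,E(\gamma^{i-1}\alpha_j),\qquad h\in[L],\ i\in[s-L+1],\ j\in[n],
\]
so the received symbol multiplies $E$, not an $A_\ell$. The intended solution is $A_h(X)=f(\gamma^{h-1}X)E(X)$; once the solution space is shown to be one-dimensional, recovery is literally $f=A_1/E$. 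By contrast, your GW identity $A_0(x)+\sum_{\ell}A_\ell(x)f(\gamma^{\ell-1}x)\equiv 0$ is a functional equation in $f$: rewriting it as $A_1(x)f(x)=-A_0(x)-\sum_{\ell\ge 2}A_\ell(x)f(\gamma^{\ell-1}x)$ still involves $f$ at shifted arguments, so your assertion that ``$A_1$ plays the role of an error locator'' and ``$f=P/A_1$'' is not justified. Nothing in your plan forces $A_2=\cdots=A_L=0$ (which would throw away the folding and revert to $L=1$), nor do you supply any other mechanism collapsing the operator equation to a single division. This is precisely the root-finding step that GW-type decoders cannot avoid, and it is why the paper does \emph{not} use GW interpolation.

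Second, even granting some workable interpolation, your correctness sketch misses the main technical obstacle. The paper's argument is a matrix-rank statement: a block matrix $\bB$ built from the interpolation constraints has full column rank with probability at least $1-\bar e/q$. After column operations exposing the errors, they pass to a symbolic matrix and identify a monomial with nonzero coefficient in its determinant via Schwarz--Zippel. The difficulty specific to FRS (absent for IRS and absent from your plan) is that the same random symbol $y_{j,u}$ appears in up to $L$ different diagonal blocks, because the length-$L$ windows overlap. This breaks the simple ``each variable appears once'' monomial argument used for IRS. The paper handles it with a carefully designed greedy procedure that outputs a lexicographically extremal monomial and then proves that monomial arises from a \emph{unique} partition (their \cref{lem:frs-monomial} and \cref{clm:uniqueness}). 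Your proposed union bound (``the window equations generically force $A_\ell(\gamma^j\alpha_i)=0$'') neither addresses the repeated-variable issue nor explains where the threshold $e_0\le n-e-\frac{k}{s-L+1}$ would emerge; in the paper that bound comes out of the row-counting needed to make the reduced matrix square.
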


Compared to the well-known Guruswami--Wang algorithm \cite{guruswami2011linear,guruswami2013linear,kopparty2018improved,goyal2024fast}
, a key advantage of our novel algorithm (see Algorithm~\ref{algo:main}) is that it replaces the less-efficient root-finding step with a straightforward polynomial long division, resulting in a substantial efficiency boost. However, this is in tradeoff to the fact that the Guruswami--Wang algorithm can handle fully adversarial errors, whereas our approach is limited to semi-adversarial errors. 

\paragraph{Multiplicity Codes.} Our final main result is for another popular variant of RS codes known as \emph{(univariate) multiplicity codes} (e.g., \cite{guruswami2013linear,kopparty2014high,kopparty2015list}). 
Again, multiplicity codes are similar to IRS codes except this time instead of evaluating a point $\alpha$ at $s$ different polynomials, multiplicity codes evaluate $s$ Hasse derivatives of the same polynomial at $\alpha$. In particular, given $f \in \F_q[x]$ and symbolic variable $z$, we define $f^{(i)}(x)$ (the $i$-th Hasse derivative) to be the coefficient of $z^i$ in the expansion of $f(x+z)$. From this, we can define our evaluation function to be $\mathcal{M}(f):=(\cC^f_1,\cC^f_2,\dots,\cC^f_n)\in\left(\mathbb{F}_q^s\right)^n$, with $\cC^f_i:=\Bigl(f(\alpha_i),f^{(1)}(\alpha_i),\dots,f^{(s-1)}(\alpha_i)\Bigl)$. Then, for parameters $n, k, s$ and distinct evaluation points $\alpha_1, \hdots, \alpha_n \in \F_q$, our multiplicity code is defined to be 
\[\mathsf{MULT}_{n,k}^{s}(\alpha_1,\alpha_2,\dots,\alpha_n):=\bigg\{\mathcal{M}(f):f\in\mathbb{F}_q[x],\text{ }\deg f<k\bigg\}\subseteq\left(\mathbb{F}_q^s\right)^n.\]

\begin{remark}
RS, IRS, FRS, and MULT codes all coincide when $s=1$.
\end{remark}

\begin{theorem}[Informal Version, See \cref{thm-multAlgAna}]\label{thm:MULT}
   Consider integers $s,L,n,k\ge1$ and $n$ distinct evaluation points $\alpha_1,\dots,\alpha_n\in\F_q$.
   If $\operatorname{Char}(\F_q)>s$, then for any multiplicity code $\mathsf{MULT}_{n,k}^s(\alpha_1,\dots,\alpha_n)$  over the alphabet $\F_{q}^s$, there exists an efficient unique decoding algorithm that, with high probability, can correct {$(e_0,e)$-semi-adversarial} errors for any 
   \[e \leq \frac{L}{L+1}\left(n-\frac{k}{s-L+1}-1\right)\quad and\quad e_0\leq\min\left(e,n-e-\frac{k}{s-L+1}\right).\] The running time of the algorithm is at most $O\left(s^{O(1)}n\log^2(n)\log\log (n)\right)$. We refer the reader to our Algorithm \ref{algo:Multmain} for more details.
\end{theorem}

As in Theorem~\ref{thm:FRS}, Theorem~\ref{thm:MULT} has the advantage of the work of Guruswami--Wang algorithm \cite{guruswami2011linear,guruswami2013linear,kopparty2018improved,goyal2024fast} of replacing the less-efficient root-finding step with a straightforward polynomial long division, resulting in a substantial efficiency boost. 

\subsection{Literature Overview}
In this subsection, we provide an overview of previous work on both unique and list decoding of variants of Reed–Solomon (RS) codes, including RS codes themselves. In particular, we discuss the state-of-the-art decoding algorithms in both the adversarial and random error models.

\paragraph{Decoding Reed--Solomon Codes.}
Since the 1960s, efficiently decoding Reed--Solomon (RS) codes has become a central subject of study in coding theory, due to its importance in both practical error-correction applications and fundamental theoretical research in algorithms, complexity, and cryptography (e.g., \cite{ben2018fast,ben2020deep,ben2023proximity,arnon2024stir}). For an $[n,k]$ Reed--Solomon code, it can uniquely correct up to $(n-k)/2$ adversarial errors. During the past half century, numerous algorithms have been developed to uniquely decode Reed--Solomon codes up to this bound in the adversarial error model (e.g., \cite{peterson1960encoding,gorenstein1961class,welch1986error,berlekamp1996bounded,massey1969shift,gao2003new}). Among these many algorithms, two of the most popular are the Berlekamp--Welch algorithm \cite{welch1986error} and the Berlekamp--Massey algorithm \cite{massey1969shift}. 

The Berlekamp--Welch algorithm, with a running time complexity of $O(n^3)$, employs a clever algebraic technique to decode Reed--Solomon codes, using polynomial interpolation to identify and correct errors. More concretely, \cite{welch1986error} approaches the decoding problem by transforming it into a task of finding two polynomials: an error locator polynomial $E(x)$ and a numerator polynomial $Q(x)$. For a received word represented as points $(\alpha_i, y_i)$, where $\alpha_i$ are distinct evaluation points and $y_i$ are the received (corrupted) values, the algorithm sets up a system of linear equations based on the condition $Q(\alpha_i) = y_i E(\alpha_i)$ for all $i$. Here, $E(x)$ has degree equal to the number of errors $e$, and $Q(x)$ has degree less than $k + e$, where $k$ is the dimension of the code. When the number of errors $e$ is at most $(n-k)/2$, this system has a unique solution up to scaling. Solving it allows the decoder to determine $E(x)$, whose roots indicate the error locations, and subsequently recover the original message polynomial as $Q(x)/E(x)$. 

Unlike an interpolation-based framework, the Berlekamp--Massey algorithm \cite{massey1969shift}, with a better running time complexity of $O(n^2)$, takes a different approach by directly processing the syndrome sequence to find the error locator polynomial, taking advantage of shift registers for multiplication. Their method is recognized as a standard syndrome-based decoding framework. By building on these two algorithms, more efficient algorithms have been developed with a running time complexity of $O(n\cdot\poly(\log n))$ (e.g., \cite{reed1978fast,lin2016fft,tang2022new}).

Beyond unique decoding, to correct more than $(n-k)/2$ errors in the adversarial error model, list decoding algorithms are involved (e.g., \cite{sudan1997decoding,guruswami1998improved,roth2000efficient,alekhnovich2005linear,wu2008new}). In 1997, inspired by the Berlekamp--Welch algorithm, Sudan \cite{sudan1997decoding} provided the first list decoding algorithm of Reed--Solomon codes capable of correcting up to $n-\sqrt{2nk}$ errors.
Later, by introducing the notion of multiplicity, Guruswami and Sudan \cite{guruswami1998improved} designed an improved list decoding algorithm that can correct up to $n-\sqrt{nk}$ errors, matching the Johnson bound \cite{johnson2003new}. Two follow-up works by Roth--Ruckenstein \cite{roth2000efficient} and Alekhnovich \cite{alekhnovich2005linear} showed some fast implementations of the Guruswami--Sudan algorithm in time $O(n^2)$ and $O(n\cdot \poly(\log n))$ respectively.
Most recently, a long line of works \cite{rudra2014every,ST23,brakensiek2023generic,guo2023randomly,alrabiah2024randomly} have shown that Reed--Solomon codes can achieve list decodability beyond the Johnson bound, reaching up to list decoding capacity. However, it remains an open problem to design a list decoding algorithm for any subclasses of Reed--Solomon codes capable of correcting adversarial errors beyond the Johnson bound. We emphasize that our result on interleaved Reed--Solomon codes (see \cref{thm:main_RS}) addresses this question in the semi-adversarial error model.

\paragraph{Decoding Interleaved Reed--Solomon Codes.} Over the past two decades, considerable attention has been devoted to develop (unique) decoder (e.g., \cite{krachkovsky1997decoding,coppersmith2003reconstructing,bleichenbacher2003decoding,schmidt2009collaborative,puchinger2017decoding,yu2018simultaneous}) for interleaved RS codes --- an important subclass of RS codes --- especially under the random error model. In particular, two independent seminal works --- one by Coppersmith and Sudan \cite{coppersmith2003reconstructing} and the other by Bleichenbacher, Kiayias, and Yung \cite{bleichenbacher2003decoding} --- proposed two remarkable (probabilistic) unique decoding algorithms that demonstrate surprisingly large error correction capabilities under the random error model.\footnote{Although BKY's result is only stated in the weaker model in which the errors locations are randomly located, the proof can be adapted to also handle adversarial error locations. This fact was recently observed by \cite{guerrini2023simultaneous,abbondati2024decoding}, where they adapted the BKY algorithm for rational evaluation codes.
}

Coppersmith--Sudan is in some sense the dual of the multivariate interpolation decoding algorithm developed by Parvaresh Vardy~\cite{Parvaresh2004MultivariateID} (although Coppersmith Sudan predates the Parvaresh Vardy algorithm). 

The well-known Guruswami-Sudan algorithm does bivariate interpolation (i.e., a polynomial in $(X,Y)$) on the space of points of the form $\{(\alpha_i, y_i) : i \in [n]\}$, where $\alpha_i$ is the $i$-th evaluation point and $y_i$ is the $i$-th received symbol. They use this interpolation to list decode Reed Solomon codes up to an error threshold of $1-\sqrt{R}$. Multivariate interpolation decoding extends this idea to interleaved Reed-Solomon codes where one interpolates a $s+1$ variate polynomial over $(X,Y_1,\hdots,Y_s)$, where $(Y_1,\hdots,Y_s)$ corresponds to the $s$ components of each received symbol. If the error threshold is $1-R^{s/(s+1)}$ then one can show that the interpolated polynomials will lie in the ideal $\langle Y_1-f_1,\hdots,Y_s-f_s\rangle$. In general, this is not enough for decoding as $Y_i-f_i(x),i\in [n]$ may have many algebraic dependencies, but experimentally that occurs with negligible probability. Parvaresh and Vardy conjecture that for the setting of random errors multivariate interpolation decoding will succeed up to the error rate $1-R^{s/(s+1)}$ with high probability (around $1-n^{-cn}$). The only theoretical result studying this is a follow up work of, Parvaresh, Taghavi, and Vardy~\cite{MIDAnalysis} which shows that this algorithm succeeds for $s=2$ with error rate $1-(6R)^{2/3}-O(R^{5/3})$ and with failure probability at most $n^{-cn}$.

Interpolating over $X,Y_1,\hdots,Y_s$ for $n$ evaluation points $\alpha_1,\hdots,\alpha_n$ without multiplicity can be set up as finding an element in the row-kernel of a matrix where the columns are indexed by $i\in [n]$ and the rows are indexed by monomials $m$ in $X,Y_1,\hdots,Y_s$ with the entries being $m$  evaluated at $X=\alpha_i$ and $(Y_1,\hdots,Y_s)$ equals the $i$-th letter in the received word. Coppersmith-Sudan \cite{coppersmith2003reconstructing} picks an element in the {\em column kernel} of this matrix and uses its support (the coordinates $j\in [n]$ which are non-zero in the chosen vector) to help decode the codeword. Using multiplicities, they get further improvements. In the end, they obtain a (unique) decoder capable of correcting $1-R^{\frac{s}{s+1}}-R$ fraction of random errors with a failure probability at most $O(n^s/q)$. 

In contrast, Bleichenbacher, Kiayias, and Yung \cite{bleichenbacher2003decoding} draw inspiration from the original Berlekamp--Welch unique decoder \cite{welch1986error}, providing a (unique) decoder capable of correcting $\frac{s}{s+1}(1-R)$ fraction of random errors with a failure probability at most $O(n/q).$ The failure probability of \cite{bleichenbacher2003decoding} was further reduced to $O(1/q)$ by  Brown, Minder, and Shokrollahi \cite{brown2004probabilistic}.

Later, inspired by another famous unique decoder of Berlekamp and Massey \cite{berlekamp2015algebraic,massey1969shift}, Schmidt, Sidorenko, and Bossert \cite{schmidt2009collaborative} proposed a faster syndrome-based unique decoder\footnote{The IRS codes considered in \cite{schmidt2009collaborative} require a cyclic restriction on their evaluation points.} capable of correcting $\frac{s}{s+1}(1-R)$ fraction of random errors with a failure probability at most $O(1/q)$. The running time of their algorithm is $O_s(n^2)$ (i.e., $n^2$ times some function of $s$). More recently, using power decoding techniques, Puchinger and Nielsen \cite{puchinger2017decoding} proposed an improved algorithm with a conjectured error correction capability that can correct up to $1-R^{\frac{s}{s+1}}$ fraction of random errors. However, although \cite{puchinger2017decoding} has demonstrated numerous simulations for a wide variety of parameters that support the heuristic correctness of their approach, a formal proof of the theoretical correctness of their proposed algorithm remains open. Therefore, from a theoretical perspective, for $R>1/3$ and $s>2$, the relative decoding radius $\frac{s}{s+1}(1-R)$ for $s$-interleaved RS codes of rate $R$ remains state-of-the-art, as demonstrated in \cite{krachkovsky1997decoding,bleichenbacher2003decoding,schmidt2009collaborative}. 

\paragraph{Decoding Folded Reed--Solomon and Multiplicity Codes.} Folded Reed--Solomon (RS) codes and multiplicity codes are two prominent variants of RS codes, with list decoders (e.g., \cite{Guru06,kopparty2015list,guruswami2011linear,guruswami2013linear,kopparty2018improved,goyal2024fast}) that achieve the list decoding capacity even in the presence of adversarial errors. The first folded RS code list decoder was introduced in the seminal work of Guruswami and Rudra \cite{Guru06}, building on the ideas of the Guruswami--Sudan list decoder. The first list decoder for multiplicity codes was introduced independently by Guruswami--Wang \cite{guruswami2013linear} and Kopparty \cite{kopparty2015list}, and it remains the most widely used framework in this domain. 

In fact, Guruswami and Wang \cite{guruswami2011linear,guruswami2013linear} proposed a unified linear algebraic list decoding algorithm for folded RS and multiplicity codes capable of correcting adversarial errors up to capacity. However, for codes of rate $R$ and block length $n$, the list size guarantee from \cite{guruswami2013linear} is $O\bigl(n^{1/\epsilon}\bigr)$ when decoding up to radius $1 - R - \epsilon$. A follow-up work by Kopparty, Ron-Zewi, Saraf, and Wootters~\cite{kopparty2018improved} introduced a randomized algorithm for the root-finding step under the Guruswami--Wang decoding framework, significantly improving its efficiency and reducing the list size guarantee to $\bigl(1/\epsilon\bigr)^{O(1/\epsilon)}$. Another follow-up work by Goyal, Harsha, Kumar, and Shankar~\cite{goyal2024fast} adapted the approach of Alekhnovich~\cite{alekhnovich2005linear} into the Guruswami--Wang decoding framework, improving the running time to quasi-linear, which is $\widetilde{O}\bigl(2^{\mathrm{poly}(1/\epsilon)} \cdot n\bigr)$.

As a side remark, very recently, two concurrent and independent works by Srivastava~\cite{srivastava2025improved} and Chen--Zhang~\cite{cz24} provided dramatic improvements, reducing the list size from $\bigl(1/\epsilon\bigr)^{O(1/\epsilon)}$ to $O(1/\epsilon^2)$ and $O\bigl(1/\epsilon\bigr)$, respectively.
Moreover, the Chen--Zhang bound almost matches the generalized Singleton bound \cite{ST23} up to a constant
multiplicative factor. It is worth noting that both proofs of Srivastava and Chen--Zhang are purely combinatorial, and turning them into efficient algorithms remains an open problem. For further details, the reader is referred to the most recent comprehensive survey by Garg, Harsha, Kumar, Saptharishi, and Shankar \cite{garg2025exposition}.

Meanwhile, current efficient decoding algorithms for folded Reed--Solomon and multiplicity codes are heavily based on the Guruswami--Wang decoder framework~\cite{guruswami2011linear,guruswami2013linear}. It is an interesting open problem to design alternative, more efficient decoding algorithms for folded RS and multiplicity codes that do not rely on the Guruswami--Wang framework, and a primary contribution of this paper is making progress toward this exact question.

As previously mentioned, we define folded Reed--Solomon codes in the sense of Guruswami--Rudra~\cite{Guru06}. However, the term ``folded Reed–Solomon codes'' was first coined in a paper by Krachkovsky \cite{krachkovsky2003reed}. The codes defined in both papers are constructed by bundling $s$ classical Reed--Solomon codes, but they choose evaluation points in different ways. Concretely, recall that for the folded Reed--Solomon codes defined in \cite{Guru06}, for each $i\in[n]$ and message polynomimal $f$, the $i$-th symbol in the encoding of $f$ is $\left(f(\alpha_i),f(\gamma\alpha_i)\dots,f(\gamma^{s-1}\alpha_i)\right)\in\F^s_q$, where $\alpha_i\in\F_q$ and $\gamma$ is a generator of $\F^{\times}_q$. On the other hand, \cite{krachkovsky2003reed} defined it as $\left(f(\alpha^i),f(\alpha^{i+n})\dots,f(\alpha^{i+(s-1)n})\right)\in\F^s_q$, where $\alpha$ has order exactly $sn$ in $\F^{\times}_q$. This special choice on evaluation points enables \cite{krachkovsky2003reed} to perform FFT-related algorithms and obtain a syndrome-based decoding algorithm for random ``phased burst'' errors.

Subsequent to the first posting of our work, Ashvinkumar, Habib, and Srivastava~\cite{ashvinkumar2025algorithmic} presented new algorithms for decoding folded Reed-Solomon codes in the adversarial error model.

\subsection{Proof Overview}

We now give an overview of the main technical contributions of our paper.

\paragraph{Interleaved Reed--Solomon Codes.} We begin by discussing our new algorithm for interleaved Reed--Solomon codes (\cref{thm:main_RS}). Our algorithm and analysis builds on the techniques used by Bleichenbacher, Kiayias, and Yung \cite{bleichenbacher2003decoding}, but we deviate from their methods in a few significant ways. In this IRS decoding algorithm, our main technical contribution is that we use a novel monomial-tracking argument (see \cref{subsec:reduction_matrix}) to analyze the performance of this algorithm in the new semi-adversarial error setting.

To describe these improvements, we first discuss the interpolation algorithm of \cite{bleichenbacher2003decoding} (henceforth the ``BKY algorithm''). Consider the $s$-interleaved Reed--Solomon code $\cC := \mathsf{IRS}_{n,k,q,s}(\alpha_1, \dots, \alpha_n)$ with distinct $\alpha_1, \dots, \alpha_n\in\F_q\setminus\{0\}$. We choose polynomials $f_1(X), \hdots, f_s(X)$ of degree at most $k-1$ as our message, encoding them according to $\cC$ and transmit them across the channel. Assume that our received word $\vec{y}:=(\vec{y}_1,\vec{y}_2,\hdots,\vec{y}_n)\in(\mathbb{F}^s_q)^n$, where $\vec{y}_i:=({y}_{i,1},{y}_{i,2},\hdots,{y}_{i,s})\in\F_q^s$. 

The BKY algorithm begins by assuming $\bar{e}$ of the $n$ coordinates of $\vec{y}$ are corrupted. To identify these $\bar{e}$ coordinates, the BKY algorithm adapts the Berlekamp-Welch algorithm by solving the following interpolation problems. Let $E(X)$ be a polynomial of degree at most $\bar{e}$ with unknown coefficients and let $A_1(X), \hdots, A_s(X)$ be polynomials of degree at most $k-1+\bar{e}$ with unknown coefficients. We seek to solve the following interpolation equations:
\begin{align}
A_h(\alpha_i) = y_{i,h}E(\alpha_i) \qquad \forall i \in \{1,2,\hdots, n\}, \forall h \in \{1,2,\hdots, s\}.\label{eq:BKY-interp}
\end{align}
To analyze this system of equations, we consider the block matrix
\begin{align}
    \bB := \begin{pmatrix}
    \bM & & & & -\bN_1\\
    & \bM & & & -\bN_2\\
    & & \ddots & & \vdots\\
    & & & \bM & -\bN_s
    \end{pmatrix},\label{eq:B}
\end{align}
where $\bM \in \F_q^{n \times (k+\be)}$ such that $\bM_{i,j} = \alpha_i^{j-1}$, and for each $h \in [s]$ we have that $\bN_{h} \in \F_q^{n \times \be}$ with $(\bN_h)_{i,j} = y_{i,h}\alpha_i^j$. For the purposes of this overview, assume for simplicity that $\bB$ is a square matrix. That is, $s(k+\be)+\be = sn$. In this scenario, one can prove that the interpolation (\ref{eq:BKY-interp}) succeeds if and only if, we have that $\det \bB \neq 0$. That is, it suffices to bound the probability that $\det \bB \neq 0$ over the random errors in the channel. By the Schwartz-Zippel lemma, it suffices to replace the random errors $y_{i,h}$ with corresponding symbolic variables $Y_{i,h}$ to create a symbolic matrix $\bB^{\SYM}$ and argue that $\det \bB^{\SYM} \neq 0$.

Bleichenbacher, Kiayias, and Yung \cite{bleichenbacher2003decoding} proceeded to argue this by substituting fixed values for some of the $Y_{i,h}$'s and then using an ad-hoc analysis of the determinant to argue that it is nonzero. 
We improve on the BKY algorithm and analysis in two significant ways. First, we improve the runtime of the algorithm from roughly $O(n^4)$ time to near-linear time by using the interpolation framework of Alekhnovich~\cite{alekhnovich2005linear} to solve (\ref{eq:BKY-interp}) efficiently.\footnote{After the initial posting of our work, we also learned that the use of Alekhnovich~\cite{alekhnovich2005linear} and similar techniques for fast polynomial interpolation have appeared in related works~\cite{zappatore2020simultaneous,guerrini2021polynomial}.} In particular, the interpolation framework automatically finds the optimal choice of $\bar{e}$ rather than doing a brute-force search like in the BKY algorithm.

Second, we streamline the analysis of the determinant of (\ref{eq:B}). By applying a series of column operations to $\bB$, we can transform the rightmost blocks $\bN_1, \hdots, \bN_s$ into \emph{diagonal} matrices $\bD_1, \hdots, \bD_s$, where each $\bD_h$ corresponds precisely to the errors which occur in the transmission of the codeword. The exact sequence of column operations needed to reveal these errors depends on the original polynomials $f_1, \hdots, f_s$, but we can assume polynomials are known for the purposes of analyzing the rank of $\bB$.

By having the errors be presented transparently as diagonal matrices, arguing that $\det \bB^{\SYM} \neq 0$ can be done in a much more straightforward manner. In particular, it suffices to identify an explicit monomial $Y_{i_1,h_1} \cdots Y_{i_\ell, h_\ell}$ with the following properties:
\begin{itemize}
\item[\textbf{P1.}] There is exactly one expansion of the last $\be$ columns of $\bB^{\SYM}$ which produces that monomial.
\item[\textbf{P2.}] The minor induced by that monomial choice has nonzero determinant.
\end{itemize}
In the purely random error model, finding such a choice of monomials is straightforward--any choice of monomials from distinct columns for which an equal number of variables appear in each row block will work. However, the true power of this perspective is that it gives us a foothold for analyzing a mixture of adversarial and random errors, i.e., the semi-adversarial error model. More precisely, depending on the exact choices for the adversarial errors, the set of monomials which satisfy both P1 and P2 can change significantly. In particular, there is no ``oblivious'' choice for the monomial. Instead, we give a procedure which constructs a monomial based on the pattern of the adversarial errors--which we may assume we know for the purposes of the analysis. At a high level this procedure is a greedy algorithm, we iterated through the row blocks of $\bB$ and in each block we pick the maximal number of monomials possible such that P1 and P2 can still hold.

\paragraph{Folded Reed--Solomon Codes.} Next, we describe the algorithm and analysis leading to \cref{thm:FRS}. The new decoding algorithm is inspired by the previous IRS algorithm plus the Guruswami-Wang \cite{guruswami2013linear} interpolation. The analysis for semi-adversarial errors follows the similar framework as in the IRS case. However, the monomial-tracking is more difficult for FRS codes and we therefore use a new ``greedy algorithm'' to find the special monomial (see \cref{subsec:frs-monomial}). This is the main challenge of extending the analysis for IRS to FRS. 

Let $\mathcal{C}:=\mathsf{FRS}^{s,\gamma}_{n,k,q}(\alpha_1,\alpha_2,\dots,\alpha_n)\subseteq(\mathbb{F}^s_q)^n$ and consider a polynomial $f(X)$ of degree at most $k-1$. After encoding $f$ according to $\cC$ and transmitting it through the $(e_0, \be)$-semi-adversarial channel, let $\vec{y}:=(\vec{y}_1,\vec{y}_2,\hdots,\vec{y}_n)\in(\mathbb{F}^s_q)^n$, where $\vec{y}_i:=({y}_{i,1},{y}_{i,2},\hdots,{y}_{i,s})\in\F_q^s$ is the received message. Pick parameters $\be \in [n]$ and $L \in [s]$ and consider the following \emph{$L$-length} interpolation equations:
\begin{align}
A_{h}(\gamma^{i-1}\alpha_j)=y_{j,i+h-1}E\left(\gamma^{i-1}\alpha_j\right)\quad i\in [s-L+1],\ j\in [n],\ h \in [L].\label{eq:FRS-interp}
\end{align}
Like for BKY interpolation, these interpolation equations can be solved in near-linear time using the techniques of Alekhnovich~\cite{alekhnovich2005linear}. To the best of our knowledge, our interpolation is novel, blending characteristics of both BKY and Guruswami--Wang \cite{guruswami2013linear} interpolation. In particular, compared to other list-decoding algortihms for folded Reed--Solomon codes (such as Guruswami--Wang), the lower-order terms in the runtime for our approach are much smaller. More concretely, our running time is $\tilde{O}(\poly(s)n)$, while the current best implementation of the Guruswami--Wang algorithm takes time $\tilde{O}(2^{\poly(s)}n)$ \cite{goyal2024fast}. A key advantage of our algorithm is that it eliminates the need for the root-finding step required in the Guruswami--Wang algorithm. Instead, after our new interpolation, only a simple polynomial long division is necessary.

To analyze these interpolation equations, we construct a block matrix similar to that of (\ref{eq:B}). Now, the number of rows in each block is $(s-L+1)n$, which we index by a pair $(i,j) \in [s-L+1] \times [n]$. Then, $\bM \in \F_q^{(s-L+1)n \times (k+\be)}$ has the structure $\bM_{(i,j),\ell} = (\gamma^{i-1}\alpha_j)^{\ell-1}$ and for each $h \in [s]$, we have that $(\bN_h)_{(i,j),\ell} = y_{j,i+h-1}(\gamma^{i-1}\alpha_j)^{\ell}$.

Using the techniques of Bleichenbacher, Kiayias, and Yung \cite{bleichenbacher2003decoding}, this matrix would be very difficult to analyze directly, but by applying analogous simplifications to the ones we used for interleaved Reed--Solomon codes, we can again transform the last block-column of $\bB$ to consist only of diagonal matrices. However, since our interpolation is $L$-length, we now have the extra challenge that each error can appear in up to $L$ diagonal matrices (since each $y_{j,i+h-1}$ can appear up to $L$ times).

Like in the IRS case, we transform $\bB$ into a symbolic matrix $\bB^{\SYM}$ with symbolic variables as substitutes for the random variables. We then seek to find a monomial in $\bB^{\SYM}$ satisfying the aforementioned properties P1 and P2. With the combination of repeated symbolic variables as well as the presence of adversarial errors in the semi-adversarial model, the procedure of identifying a valid monomial becomes significantly more complex. It is still fundamentally a greedy algorithm like in the IRS case, but much more bookkeeping is needed to ensure that P1 and P2 remain satisfied. See \cref{subsec:frs-monomial} for a description of the monomial identification algorithm and its analysis.

\paragraph{Multiplicity Codes.} Next, we describe the adaptations of \cref{thm:FRS} which led to \cref{thm:MULT}. Fundamentally, the interpolation is similar to that of (\ref{eq:FRS-interp}), except we now replace multiplication by $\gamma$ with taking a derivative of the message polynomial $f$. That is, our length-$L$ interpolation equations are of the form,
\begin{align}
A_{h}^{(i-1)}(\alpha_j)=\sum_{\ell=0}^{i-1}\binom{i+h-2-\ell}{h-1} y_{j,i+h-1-\ell}E^{(\ell)}\left(\alpha_j\right)\quad i\in [s-L+1],\ j\in [n],\ h \in [L].\label{eq:MULT-interp}
\end{align}
where $E$ again has degree at most $\be$. Again, this interpolation can be done in near-linear time by using the interpolation algorithm of Alekhnovich~\cite{alekhnovich2005linear}.
Note that due to the chain rule for derivatives, the error term in (\ref{eq:MULT-interp}) turns into a sum. As a result, if we build an interpolation matrix $\bB$ similar to that of (\ref{eq:B}) to analyze (\ref{eq:MULT-interp}), we cannot use column operations to simplify the final column block to only consist of diagonal matrices. Instead, we will have block-diagonal matrices for which each block is a \emph{lower triangular} matrix. The corresponding symbolic matrix $\bB^{\SYM}$ will have each variable appear numerous times, with repetition coming both from the ``length-$L$'' of the interpolation as well as the repeated terms arising from the chain rule.

Again, we seek to identify a symbolic monomial in the expansion of $\det \bB^{\SYM}$ which satisfies properties P1 and P2. The key observation is that within one of these lower triangular matrices, variable appearing in the lower left corner appears exactly once. Thus, by using a greedy algorithm which starts by taking monomials from these lower left corners, we have much more control over the uniqueness of the monomial selected. See Claim~\ref{clm-multMono} for more details about the monomial-picking procedure.

\subsection*{Outline}

In Section~\ref{sec:notation}, we give an overview of the interpolation procedures we use throughout the paper as well as our notation for constructing block matrices. In Section~\ref{sec:BKY}, we give an improved analysis of the BKY algorithm for interleaved Reed-Solomon codes, proving Theorem~\ref{thm:main_RS}. In Section~\ref{sec:FRS}, we present a novel decoder for folded Reed--Solomon codes and prove \cref{thm:FRS}. In Section~\ref{sec:MULT}, we adapt the FRS decoder for multiplicity codes and prove \cref{thm:MULT}. In Section~\ref{sec:concl}, we look toward future work in the semi-adversarial error model, including understanding connections to list decoding. In Appendix~\ref{app:IRS-RS}, we prove an equivalence between decoding interleaved Reed--Solomon codes and subfield Reed--Solomon codes in the semi-adversarial model. In Appendix~\ref{app:omit}, we include some proofs omitted from the main body of the paper.

\subsection*{Acknowledgements}

We thank Sivakanth Gopi and Venkatesan Guruswami for many helpful discussions and much encouragement while writing this paper. We thank Matteo Abbondati, Eleonora Guerrini, and Romain Lebreton for taking the time to explain to us the contributions of \cite{guerrini2023simultaneous,abbondati2024decoding,abbondati2025simultaneous,abbondati2025communication}.  Zihan Zhang also thanks Gabrielle Beck for some helpful discussions. We thank Rishabh Kothary for some valuable corrections to the paper. Joshua Brakensiek was partially supported by a Simons Investigator award and National Science Foundation grant No. CCF-2211972. Yeyuan Chen was partially supported by the National Science Foundation grant No. CCF-2236931. Zihan Zhang was partially supported by the National Science Foundation grant No. CCF-2440926.

\section{Notations and Preliminaries}\label{sec:notation}

We now state a number of fundamental facts about computations involving polynomials. For many of these facts, we cite relevant theorems in the textbook of Von Zur Gathen and Gerhard~\cite{von2003modern}.
In what follows, we denote $\mathsf{M}(n)$
by the minimal number of arithmetic operations necessary to multiply two polynomials of degree $n.$ If the ground field
supports fast Fourier transform then $\mathsf{M}(n)= O(n\log n)$, otherwise $\mathsf{M}(n)= O(n\log n\log\log n).$ (Theorem 8.18 and Theorem 8.22 in \cite{von2003modern}). Note that univariate polynomial division can likewise be done in $O(\mathsf{M}(n))$ time, where $n$ is the maximum degree of any of the involved polynomials (Theorem~9.6~\cite{von2003modern}).

\paragraph{Lagrange Interpolation}

Given a field $\F_q$ and $n$ pairs $(\alpha_i, \beta_i) \in \F_q^2$, we define the \emph{Lagrange interpolation} $\Lagrange(X; (\alpha_1,\beta_1), \hdots, (\alpha_n, \beta_n))$ to be the minimal-degree polynomial which passes through these points:
\[
    \Lagrange(X; (\alpha_1,\beta_1), \hdots, (\alpha_n, \beta_n)) := \sum_{i=1}^n \beta_i \prod_{j \in [n] \setminus \{i\}} \frac{X - \alpha_j}{\alpha_i - \alpha_j}.
\]
We note that Von Zur Gathen and Gerhard~\cite{von2003modern} presented a fast implementation of Lagrange interpolation.
\begin{theorem}[Algorithm 10.11 and Corollary 10.12~\cite{von2003modern}]\label{thm:fast-lagrange}
$\Lagrange(X; (\alpha_1,\beta_1), \hdots, (\alpha_n,\beta_n))$ can be computed in time $O(\mathsf{M}(n)\log n).$ 
\end{theorem}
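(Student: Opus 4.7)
The plan is to use the standard subproduct-tree and divide-and-conquer approach that underlies fast multipoint evaluation. Starting from the closed form
\[
\Lagrange(X; (\alpha_1,\beta_1), \ldots, (\alpha_n,\beta_n)) = \sum_{i=1}^n \frac{\beta_i}{M'(\alpha_i)} \cdot \frac{M(X)}{X-\alpha_i},\qquad M(X) := \prod_{i=1}^n(X - \alpha_i),
\]
the algorithm proceeds in three phases: (i) build a balanced subproduct tree for $\alpha_1,\ldots,\alpha_n$; (ii) obtain the scalars $c_i := \beta_i / M'(\alpha_i)$ by fast multipoint evaluation of $M'$ on that tree; and (iii) aggregate $\sum_i c_i \cdot M(X)/(X-\alpha_i)$ bottom-up along the same tree.

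For phase (i) I would set up a balanced binary tree of depth $\lceil \log_2 n \rceil$ whose $n$ leaves store $X - \alpha_i$ and whose internal nodes store the product of their children. At level $\ell$ above the leaves there are $O(n/2^\ell)$ nodes, each of which requires one multiplication of two degree-$2^{\ell-1}$ polynomials at cost $O(\mathsf{M}(2^\ell))$, so each level costs $O(\mathsf{M}(n))$ by the standard superadditivity $2\mathsf{M}(d) \le \mathsf{M}(2d)$, totaling $O(\mathsf{M}(n)\log n)$. The root yields $M(X)$, and all intermediate subproducts are retained for later phases.

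For phase (ii), differentiate $M$ in linear time to obtain $M'$, and then evaluate $M'$ at each $\alpha_i$ using the subproduct tree: descending the tree, reduce the incoming polynomial modulo each child, where each reduction at a node of degree $d$ costs $O(\mathsf{M}(d))$, again giving $O(\mathsf{M}(n)\log n)$ in total. Dividing each $\beta_i$ by the resulting $M'(\alpha_i)$ yields the $c_i$ in $O(n)$ additional work.

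Phase (iii) is the divide-and-conquer aggregation. For any index set $S$, define $M_S(X) := \prod_{i \in S}(X - \alpha_i)$ and $L_S(X) := \sum_{i \in S} c_i \cdot M_S(X)/(X-\alpha_i)$. Splitting $S = S_1 \cupdot S_2$, one has the identity
\[
L_S(X) = M_{S_2}(X) \cdot L_{S_1}(X) + M_{S_1}(X) \cdot L_{S_2}(X),
\]
so the $L_S$ at an internal node can be assembled from its two children using two polynomial multiplications of degree $|S|/2$ at cost $O(\mathsf{M}(|S|))$. Walking up the same subproduct tree yields the recurrence $T(n) = 2T(n/2) + O(\mathsf{M}(n))$, solving to $T(n) = O(\mathsf{M}(n)\log n)$. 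The only subtlety is consistently invoking superadditivity of $\mathsf{M}$ to collapse each level's work to $O(\mathsf{M}(n))$; both the FFT bound $\mathsf{M}(n) = O(n\log n)$ and the general bound $\mathsf{M}(n) = O(n\log n\log\log n)$ used earlier satisfy this, so everything else is routine bookkeeping and matches Algorithm 10.11 and Corollary 10.12 of \cite{von2003modern}.
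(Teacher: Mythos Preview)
Your proposal is correct and is precisely the standard subproduct-tree algorithm that Algorithm~10.11 and Corollary~10.12 of \cite{von2003modern} describe; the paper itself does not give a proof but simply cites this result, so your sketch faithfully supplies what the reference contains.
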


We will also need \emph{Hermite interpolation} when we also want to interpolate Hasse derivatives. Given a field $\F_q$ and $n$ tuples $(\alpha_1,\beta_{1,1},\hdots,\beta_{1,s})\in \F_q^{s+1}$, we define $\Hermite(X; (\alpha_i,\beta_{i,1},\hdots,\beta_{i,s})_{i=1}^n)$ as the minimal degree polynomial $f$ which satisfies $$f^{(i-1)}(\alpha_j)=\beta_{j,i},$$
for $i\in \{1,\hdots,s\}$ and $j\in \{1,\hdots,n\}.$

\begin{theorem}[Algorithm~10.22 and Corollary~10.23 \cite{von2003modern}]\label{thm-fast-Hermite}
    $\Hermite(X; (\alpha_i,\beta_{i,1},\hdots,\beta_{i,s})_{i=1}^n)$ can be computed in time $O(\mathsf{M}(sn)\log(sn))$.
\end{theorem}

In our algorithms, once we find a polynomial $f$ as the potential solution, we need to evaluate it on the chosen evaluation points and check its Hamming distance from the received word. We will need the following folklore fast algorithm for multipoint evaluation.
\begin{theorem}[Algorithm 10.7 and Corollary 10.8 \cite{von2003modern}]\label{thm:evaluation} For any finite field $\F_q$, given $\alpha_1,\dots,\alpha_n\in\F_q$ and a polynomial $f(X)\in\F_q[X]$ with degree $k$ the evaluations $f(\alpha_1),\dots,f(\alpha_n)$ can be computed in time $O(\mathsf{M}(n)\log{n})$.

\end{theorem}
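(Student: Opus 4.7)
The plan is to use the classical subproduct-tree approach. First I would build a balanced binary tree whose $n$ leaves store the linear polynomials $X-\alpha_i$ and whose internal nodes store the product of their two children. The root thus stores $M(X):=\prod_{i=1}^{n}(X-\alpha_i)$ of degree $n$, and a node at depth $\ell$ holds a polynomial of degree roughly $n/2^{\ell}$. Using the fact that polynomial multiplication of degree $d$ costs $\mathsf{M}(d)$ and that $\mathsf{M}$ is superlinear, the total work at depth $\ell$ is $O(2^{\ell}\cdot\mathsf{M}(n/2^{\ell}))=O(\mathsf{M}(n))$, so building the entire tree across the $O(\log n)$ depths costs $O(\mathsf{M}(n)\log n)$.

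Second, I would reduce to the case $\deg f<n$. If $k\ge n$, replace $f$ by $\wf:=f\bmod M(X)$, computed in time $O(\mathsf{M}(n+k))$ using the fast division routine (Theorem~9.6 of \cite{von2003modern}). Since $\wf(\alpha_i)=f(\alpha_i)$ for all $i$, we now only need to evaluate a polynomial of degree less than $n$.

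Third, I would propagate remainders down the tree. At each node $v$ with stored polynomial $M_v(X)$, maintain $r_v(X):=\wf\bmod M_v(X)$; the root remainder is $\wf$ itself. For a node $v$ with children $v_L, v_R$, compute $r_{v_L}=r_v\bmod M_{v_L}$ and $r_{v_R}=r_v\bmod M_{v_R}$ via fast division. When we reach a leaf indexed by $i$ we have $r_{\text{leaf}_i}=\wf\bmod(X-\alpha_i)=f(\alpha_i)$. As before, each division at depth $\ell$ costs $O(\mathsf{M}(n/2^\ell))$ and the per-depth cost sums to $O(\mathsf{M}(n))$, giving total propagation cost $O(\mathsf{M}(n)\log n)$. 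Summing the three phases yields $O(\mathsf{M}(n+k)+\mathsf{M}(n)\log n)=O(\mathsf{M}(n+k)\log n)$.

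The only real technical point is the level-by-level cost analysis: one must invoke superlinearity of $\mathsf{M}$ (i.e., $2\,\mathsf{M}(d/2)\le\mathsf{M}(d)$, which holds for both $\mathsf{M}(d)=d\log d$ and $\mathsf{M}(d)=d\log d\log\log d$) to conclude that doubling the number of nodes while halving their size keeps the per-level cost at $O(\mathsf{M}(n))$ rather than growing. Everything else is a direct assembly of fast multiplication, fast division, and a standard divide-and-conquer recursion.
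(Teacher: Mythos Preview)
Your proposal is correct and is precisely the standard subproduct-tree algorithm that Algorithm~10.7 and Corollary~10.8 of \cite{von2003modern} describe; the paper itself does not give a proof but simply cites that reference, so your write-up matches the intended approach.
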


\paragraph{Interpolation Based on Alekhnovich Algorithm.}\text{}
A result of Alekhnovich \cite{alekhnovich2005linear} is provided below to solve the following optimization problem.
\begin{problem}[Special Case of \text{\cite[Problem 1]{alekhnovich2005linear}}]\label{prob:alec} Consider an optimization problem stated as follows. Given an input of $h+1$ polynomials $Q_0, Q_1, \dots,Q_h\in\F_q[X]$, we want to output a tuple of polynomials 
$E(X),C_1(X),C_2(X),\dots,C_h(X)\in\F_q[X]$ and integers
$d_0,d_1,d_2,\dots,d_h$ such that
\[
\begin{cases}
\deg\bigl(X^{k-1}E(X)\bigr)=d_0\\
\deg\bigl(Q_1(X)\,E(X)+Q_0(X)\,C_1(X)\bigr)=d_1\\
\deg\bigl(Q_2(X)\,E(X)+Q_0(X)\,C_2(X)\bigr)=d_2\\
\quad\quad\quad\quad\quad\quad\vdots\\
\deg\bigl(Q_h(X)\,E(X)+Q_0(X)\,C_h(X)\bigr)=d_h
\end{cases}
\quad \text{with} \quad
\max(d_0,d_1,\dots,d_h)\to\min\neq 0.
\]
\end{problem}
The $\max \to \min \neq 0$ notation of \cite{alekhnovich2005linear} means that we are minimizing the maximum degree, although we exclude the case that all polynomials are constant (or zero). Alekhnovich \cite{alekhnovich2005linear} provided a quasi-linear algorithm for the above \cref{prob:alec}. In this paper, we will make use of the algorithm as a pure black box, and the theorem of Alekhnovich is stated below.

\begin{theorem}[\text{See \cite[Theorem 1.1]{alekhnovich2005linear}}]\label{thm:alek}
Given any $h+1$ polynomials $Q_0, Q_1, \dots,Q_h\in\F_q[X]$, each of degree at most $n$, there exists an efficient algorithm $\mathsf{Interpolate}\left(Q_0, Q_1, \dots,Q_h\right)$ that outputs a tuple of polynomials $E(X),C_1(X),C_2(X),\dots,C_h(X)\in\F_q[X]$ solving \cref{prob:alec} in time at most $h^{O(1)}\mathsf{M}(n)\log n$.
\end{theorem}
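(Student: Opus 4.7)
The plan is to view \cref{prob:alec} as computing a minimum-weighted-degree nonzero element in a free $\F_q[X]$-module and to solve it via a divide-and-conquer reduction akin to the half-GCD algorithm. Concretely, I would consider the $(h+1)\times(h+1)$ polynomial matrix
\[
\bT := \begin{pmatrix}
X^{k-1} & 0 & 0 & \cdots & 0 \\
Q_1 & Q_0 & 0 & \cdots & 0 \\
Q_2 & 0 & Q_0 & \cdots & 0 \\
\vdots & & & \ddots & \vdots \\
Q_h & 0 & 0 & \cdots & Q_0
\end{pmatrix}.
\]
A row vector $(E, C_1, \dots, C_h)$ left-multiplied into $\bT$ produces exactly $(X^{k-1}E,\ Q_1E + Q_0C_1, \dots, Q_hE + Q_0C_h)$, so \cref{prob:alec} is equivalent to finding a nonzero, nondegenerate element of the row-module $L$ generated by the rows of $\bT$ whose maximum coordinate degree is minimized.

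Second, I would compute a \emph{weak Popov form} of $\bT$, namely a row-equivalent basis $\bT'$ whose rows have leading terms in pairwise distinct columns. From such a basis the shortest nonzero element of $L$ can be read off as one of the basis rows, because in weak Popov form no $\F_q[X]$-combination of rows can simultaneously cancel all leading monomials. A standard iterative reduction that repeatedly eliminates the highest-degree row against another row sharing the same leading column already solves the problem in $h^{O(1)} n^2$ operations, so the actual challenge is to accelerate the row reduction from $O(n^2)$ down to $\tilde O(n)$.

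Third, to achieve near-linear time I would employ a divide-and-conquer recursion generalizing the Knuth--Sch\"onhage half-GCD algorithm. The key structural observation is that the unimodular transformation which reduces the top half of the degrees in $\bT$ is determined, up to a controlled low-degree correction, by the top $n/2$ coefficients of the entries of $\bT$. Thus I would (i) truncate $\bT$ to its top $n/2$ coefficients, (ii) recursively compute a reducing transformation $\bU_1$ on the truncation, (iii) apply $\bU_1$ to the full-precision $\bT$ via fast polynomial-matrix multiplication at cost $h^{O(1)}\mathsf{M}(n)$, and (iv) recurse on the resulting lower-degree system. The recurrence $T(n) = 2T(n/2) + h^{O(1)}\mathsf{M}(n)$ resolves to $h^{O(1)}\mathsf{M}(n)\log n$, giving the claimed bound.

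The main obstacle I expect is controlling degree growth across the recursion. Intermediate transformation matrices can \emph{a priori} accumulate total degree exceeding $n$, which would inflate step (iii) and break the recurrence. Handling this requires a structural invariant showing that during weak-Popov reduction the cumulative degree of a unimodular transformation that halves the problem size is $O(n)$, together with a careful choice of the truncation point so that the transformation computed on the truncation matches, up to a low-degree correction, the corresponding transformation for the full matrix. Once these degree bounds are in place, correctness of the recursion propagates from the base case to the root and the stated $h^{O(1)}\mathsf{M}(n)\log n$ running time follows.
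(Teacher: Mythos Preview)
The paper does not prove this theorem; it is quoted verbatim as a black-box result from \cite{alekhnovich2005linear} (the text immediately preceding the statement even says ``we will make use of the algorithm as a pure black box''). So there is no in-paper proof to compare your proposal against.

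That said, your sketch is essentially a faithful outline of Alekhnovich's own argument: he too casts the problem as finding a short vector in the $\F_q[X]$-lattice generated by the rows of the matrix you wrote down, reduces via row operations to a normal form, and accelerates this with a half-GCD--style divide-and-conquer to get the $h^{O(1)}\mathsf{M}(n)\log n$ bound. The degree-control obstacle you flag is exactly the technical point that needs work in \cite{alekhnovich2005linear}. For the purposes of this paper no proof is required, only the citation.
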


\paragraph{Hasse Derivatives.} 

We will use some simple properties of Hasse derivatives (for more details see Proposition~4 in \cite{Dvir2009ExtensionsTT}).

\begin{lemma}\label{lem:HasseProp}
Let $f,g\in \F[x]$ be polynomials and $i,j \in \Z_{\ge 0}$. Then:
\begin{enumerate}
    \item $f^{(i)}(x)+g^{(i)}(x)=(f+g)^{(i)}(x),$
    \item $(fg)^{(i)}(x)=\sum\limits_{\ell=0}^i f^{(i-\ell)}(x) g^{(\ell)}(x),$
    \item $(f^{(i)})^{(j)}(x) = \binom{i+j}{i} f^{(i+j)}(x).$
\end{enumerate}

\end{lemma}

\paragraph{Special Matrices.} To ascertain the decoding properties of various algorithms, we bound the ranks of various block matrices. We present some notation to describe the most common components of these block matrices.
Given $\vec{y} \in \F^n_q$, we indicate the diagonal matrix corresponding to $\vec{y}$ to be
\[
    \Diag(\vec{y}) := \begin{pmatrix}
    y_1 & & &\\
    & y_2 & &\\
    & & \ddots &\\
    & & & y_n
    \end{pmatrix}.
\]
In particular, we let $\bI_n := \Diag(1,\hdots, 1)$ be the identity matrix. We also let,
$$\Lower(\vec{y}):=\begin{pmatrix}
y_1  & &  \\
\vdots &\ddots & \\
y_{n} &\hdots & y_{1}
\end{pmatrix}, $$
be the lower triangular matrix associated with $\vec{y}$. Given square matrices $A_1,\hdots,A_t$ of sizes $n_1,\hdots,n_t$ we define $\BlockDiag(A_1,\hdots,A_t)$ as the  block diagonal matrix of size $\sum_{i=1}^t n_t\times \sum_{i=1}^t n_t$ with $A_i$ on the diagonal. 

We also consider \emph{Vandermonde} matrices. Given $\vec{\alpha} \in \F_q^n$ and integers $d_1 \le d_2$, we define the $n \times (d_2 - d_1 + 1)$ Vandermonde matrix to be
\[
    \Vand_{[d_1,d_2]}(\vec{\alpha}) := \begin{pmatrix}
    \alpha_1^{d_1} & \alpha_1^{d_1+1} & \alpha_1^{d_1+2} & \cdots & \alpha_1^{d_2}\\
    \alpha_2^{d_1} & \alpha_2^{d_1+1} & \alpha_2^{d_1+2} & \cdots & \alpha_2^{d_2}\\
    \vdots & \vdots & \vdots & \ddots & \vdots\\
    \alpha_n^{d_1} & \alpha_n^{d_1+1} & \alpha_n^{d_1+2} & \cdots & \alpha_n^{d_2}
    \end{pmatrix}.
\]

To work with multiplicity codes we also need evaluation matrices which encode evaluation of a monomial and its derivatives till a given order. For convenience we let $|w|=\sum_{i=1}^n w_i$ for $w\in \Z^n$ and $m_\ell(x)=x^\ell$ for $\ell \ge 0$. Given $\vec{\alpha} \in \F_q^n, \vec{w}\in \Z_{\ge 0}^n$, integers $d_1\le d_2$, we define the $|w|\times (d_2-d_1+1)$ evaluation matrix to be,

$$\Eval_{[d_1,d_2]}(\vec{\alpha},\vec{w})=\begin{pmatrix}
m_{d_1}(\alpha_1)& m_{d_1+1}(\alpha_1)&\cdots& m_{d_2}(\alpha_1)\\
\vdots & \vdots &\ddots&\vdots\\
m^{(w_1-1)}_{d_1}(\alpha_1)& m^{(w_1-1)}_{d_1+1}(\alpha_1)&\cdots& m^{(w_1-1)}_{d_2}(\alpha_1)\\
\vdots & \vdots &\ddots&\vdots\\
m_{d_1}(\alpha_n)& m_{d_1+1}(\alpha_n)&\cdots& m_{d_2}(\alpha_n)\\
\vdots & \vdots &\ddots&\vdots\\
m^{(w_n-1)}_{d_1}(\alpha_n)& m^{(w_n-1)}_{d_1+1}(\alpha_n)&\cdots& m^{(w_n-1)}_{d_2}(\alpha_n)
\end{pmatrix}.$$

\paragraph{Matrix slices.} Given a matrix $\bM \in \F_q^{m \times n}$ and sets $I \subseteq [m]$ and $J \subseteq [n]$, we let $\bM_{I,J}$ denote the $|I| \times |J|$ matrix induces by the rows indexed by $I$ and the rows indexed by $J$.
A common choice of $I$ is $[a,b] := \{a,a+1, \hdots, b\}$.

To simplify common use cases, we let $\bM_I^{\row} := \bM_{I,[n]}$, and $\bM_{J}^{\col} := \bM_{[m],J}$. Furthermore, if either $I$ or $J$ is a singleton, we can replace the set with its single element. That is, $\bM_{i,J} = \bM_{\{i\},J}$, etc.

\section{Improved Analysis of Bleichenbacher–Kiayias–Yung Algorithm for Interleaved RS Codes}\label{sec:BKY}

In this section, we prove our main results on probabilistically unique decoding a special class of RS codes (interleaved RS codes) well beyond the unique decoding radius for semi-adversarial errors (see \Cref{thm:main_RS}). 
\subsection{A Modified Implementation of the Algorithm}
To start, we present Algorithm~\ref{algo:BKY+ALEC}, which provides a near-linear-time implementation of the IRS decoding algorithm of Bleichenbacher--Kiayias--Yung~\cite{bleichenbacher2003decoding} by employing the interpolation algorithm of Alekhnovich~\cite{alekhnovich2005linear}.
\begin{algorithm}[h]\label{algo:BKY+ALEC}
    \caption{Near-linear time interleaved RS decoding algorithm}
    \SetAlgoLined
    \KwIn{Parameters $n,k,q,s,e$\;
    Evaluation points $(\alpha_1, \hdots, \alpha_n) \in \F_q^n$\;
    Received message $\vec{\bf{y}} := ((y_{1,1}, \hdots, y_{1,s}), \hdots, (y_{n,1}, \hdots, y_{n,s})) \in (\F_q^s)^n$. }
    \KwOut{$s$ polynomials $\{f_j(X)\}_{j=1}^s$, each of degree at most $k-1$ or $\Fail$}
    {
    $Q_{0}(X):=\prod_{j=1}^{n}\left(X-\alpha_j\right)$\;
    \ForEach{$h=1,\dots,s$}{
        $Q_h(X):= \Lagrange(X; (\alpha_1, y_{1,h}), \hdots, (\alpha_n, y_{n,h}))$\;
    }
    $E,C_1,\hdots, C_s := \mathsf{Interpolate}(Q_0, Q_1, \dots,Q_s)$\;
    \ForEach{$h=1,\dots,s$}{
        $B_h(X):=Q_h(X)E(X)+Q_0(X)C_h(X)$\;
        $f_h(X) := B_h(X) / E(X)$\;
        \If{$f_h \not\in \F_q[X]$ \textbf{\emph{or}} $\deg{f_h}\ge k$}{
        \Return{$\Fail$}\;
        }
    }
    \If{$\Delta(\vec{\mathbf{y}},\{(f_1(\alpha_i), \hdots, f_s(\alpha_i))\}_{i=1}^n) > e$ }{
        \Return{$\Fail$}\;
    }

    \Return{$(f_1, \hdots, f_s)$}\;
    }
\end{algorithm}

\begin{remark}
    Of note, Eldridge et al. \cite{EBGHJ24} proposed similar lattice-based algorithms for IRS codes under different error models while investigating certain real-world security applications. However, we derive our algorithm by directly integrating the work of Alekhnovich~\cite{alekhnovich2005linear} into the algorithm of Bleichenbacher--Kiayias--Yung \cite{bleichenbacher2003decoding}, whereas Eldridge et al. were inspired by the work of Cohn and Heninger~\cite{cohn2013approximate}. Moreover, as an application-focused paper, \cite{EBGHJ24} did not provide a theoretical analysis on the success probability of their algorithms, and all their algorithmic results are heuristic and experimental. After the initial posting of our work, we also learned that the use of Alekhnovich~\cite{alekhnovich2005linear} and similar techniques for fast polynomial interpolation have appeared in related works~\cite{zappatore2020simultaneous,guerrini2021polynomial}.%
\end{remark}

\paragraph{Running time analysis of Algorithm \ref{algo:BKY+ALEC}.} We can use divide-and-conquer trick plus fast polynomial multiplication to compute $Q_0(X)$ in time $O(\mathsf{M}(n)\log{n})$. Then, for each $h\in[s]$ we need to compute the Lagrange interpolation polynomial $Q_h(X)$. Each of them can be computed in time $O(\mathsf{M}(n)\log{n})$ by \cref{thm:fast-lagrange}. Then, from \cref{thm:alek} we can compute $E,C_1,\dots,C_s$ in time $O(s^{O(1)}\mathsf{M}(n)\log{n})$. For each of $h\in[s]$, $B_h(X)$ and $f_h(X)$ can be computed in time $O(\mathsf{M}(n))$. Finally, we need to use fast multipoint evaluation as \cref{thm:evaluation} running in $O(s\mathsf{M}(n)\log{n})$ time to get the codeword, and compare it to the received word to see whether they are close to each other. The total running time can be bounded by $O(s^{O(1)}\mathsf{M}(n)\log{n})\leq O(s^{O(1)}n\log^2{n}\log{\log{n}})$.

\subsection{Improved Analysis for the Semi-Adversarial Errors}

In this section, our aim is to prove the following main theorem of our analysis on Algorithm~\ref{algo:BKY+ALEC} for the semi-adversarial errors.

\begin{theorem}\label{thm:algoirs_analysis}
Consider integers $s,n,k\ge1,e\leq \frac{s}{s+1}(n-k)$.
   For any $s$-interleaved Reed--Solomon code $\mathsf{IRS}_{n,k,q,s}(\alpha_1, \dots, \alpha_n)$ with distinct $\alpha_1, \dots, \alpha_n\in\F_q\setminus \{0\},$ Algorithm~\ref{algo:BKY+ALEC} with input parameters $(n,k,q,s,e)$ can {uniquely decode} any message polynomials $f_1(X),\dots,f_s(X)$ of degree at most $k-1$ from {$(e_0,e)$-semi-adversarial} errors, with probability at least $1-{e}/{q}$, for any $ e_0\leq\min(e,n-k-e).$ The running time of Algorithm~\ref{algo:BKY+ALEC} is $O(s^{O(1)}n\log^2{n}\log{\log{n}})$.
\end{theorem}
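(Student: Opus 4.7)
The plan is to establish correctness of Algorithm~\ref{algo:BKY+ALEC}; the running-time bound $O(s^{O(1)}n\log^2 n\log\log n)$ was already derived in the paragraph preceding the theorem. Let $f_1,\dots,f_s$ be the true transmitted polynomials, let $T\subseteq[n]$ with $|T|\le e$ be the set of corrupted coordinates, and consider the error locator $\widetilde E(X):=\prod_{i\in T}(X-\alpha_i)$ of degree at most $e$. Because $\widetilde E$ vanishes on $T$, one has $y_{i,h}\widetilde E(\alpha_i)=f_h(\alpha_i)\widetilde E(\alpha_i)$ for every $i\in[n]$ and every $h\in[s]$, so the polynomial $f_h(X)\widetilde E(X)-Q_h(X)\widetilde E(X)$ vanishes on $\{\alpha_1,\dots,\alpha_n\}$ and is therefore divisible by $Q_0(X)$. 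Thus the tuple $(\widetilde E,\widetilde C_1,\dots,\widetilde C_s)$, with $\widetilde C_h:=(f_h\widetilde E-Q_h\widetilde E)/Q_0$, is a feasible point of Problem~\ref{prob:alec} with maximum output degree $\le k-1+e$. Hence the optimum returned by $\mathsf{Interpolate}$ (Theorem~\ref{thm:alek}) has $\max_h\deg B_h\le k-1+\deg E$, guaranteeing that the division $f_h^\star:=B_h/E$ in Algorithm~\ref{algo:BKY+ALEC} produces polynomials of degree $<k$. The final Hamming-distance check rejects any spurious output, so it suffices to show that, with high probability, the Alekhnovich solution is a unit multiple of $(\widetilde E,f_1\widetilde E,\dots,f_s\widetilde E)$, which forces $f_h^\star=f_h$.

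To analyze when this rigidity holds, I would follow the Berlekamp--Welch-style reformulation~\eqref{eq:BKY-interp} and observe that, for $\be=e$, the admissible solutions form the kernel of the block matrix $\mathbf{B}$ of~\eqref{eq:B}. Under the parameter assumptions $e\le\tfrac{s}{s+1}(n-k)$, this matrix is square (up to trivially padding rows). Knowing $f_1,\dots,f_s$ for the sake of the analysis, I apply the upper-triangular column operations that replace $y_{i,h}$ by the error $y_{i,h}-f_h(\alpha_i)$, turning each $\mathbf{N}_h$ into a matrix supported only on the $|T|$ coordinates of $T$; adversarial coordinates contribute deterministic entries while random coordinates contribute independent uniform entries of $\F_q$. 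The solution space is one-dimensional (and hence correct) precisely when $\det\mathbf{B}\ne 0$, so it remains to lower bound $\Pr[\det\mathbf{B}\ne 0]$ over the random-error entries.

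The heart of the argument, and the main obstacle, is to show the symbolic determinant is not identically zero. Replacing each random entry $y_{i,h}$ by an indeterminate $Y_{i,h}$ gives $\mathbf{B}^{\SYM}$, whose determinant is a polynomial of total degree at most $e$ in the $Y$-variables; by Schwarz--Zippel, $\Pr[\det\mathbf{B}=0]\le e/q$ provided $\det\mathbf{B}^{\SYM}\not\equiv 0$. To witness non-vanishing I exhibit, following the strategy of the proof overview, a single monomial $\prod_{(i,h)\in S}Y_{i,h}$ that satisfies (P1) it arises from a unique expansion of the last $\be$ columns, and (P2) the complementary minor is a Vandermonde-type block with nonzero determinant. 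In the fully random regime any balanced set $S$ (the same number of indices per row block) works, but under semi-adversarial errors the adversarial pattern can obstruct specific balanced choices. I resolve this via a greedy construction that processes the $s$ row blocks in order and, within each, picks the maximum number of $Y$-variables compatible with both (P1) (no row index is used twice across blocks) and (P2) (the rows left behind in every block still form a nondegenerate Vandermonde submatrix). The key feasibility inequality is $e_0\le n-k-e$, which guarantees that each greedy step still has enough untouched random-error rows available to complete the selection, and simultaneously $e\le\tfrac{s}{s+1}(n-k)$ ensures the balance across blocks can be achieved. Combining this monomial witness with Schwarz--Zippel yields the claimed success probability $1-e/q$.
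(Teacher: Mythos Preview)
Your high-level approach matches the paper's: construct the intended error-locator solution, reduce correctness to a rank statement about $\mathbf{B}$, perform column operations to expose the errors, pass to the symbolic matrix, exhibit a surviving monomial, and apply Schwarz--Zippel. But there is a genuine gap in how you set the interpolation degree.

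You fix $\bar e = e$ and then claim the BKY solution space is one-dimensional iff $\mathbf{B}$ (with parameter $\bar e = e$) has full column rank. This fails whenever the actual number of corrupted coordinates $|T|$ is strictly less than $e$. Indeed, if $\deg\widetilde E = |T| < e$, then for \emph{every} polynomial $p(X)$ of degree at most $e-|T|$ the tuple $\bigl(p\widetilde E,\,f_1 p\widetilde E,\dots,f_s p\widetilde E\bigr)$ is still a valid BKY solution of degree $e$; hence the solution space has dimension at least $e-|T|+1>1$ and $\mathbf B$ provably does \emph{not} have full column rank. Equivalently, after your own column operations each $\mathbf N'_h$ is supported only on the $|T|$ error rows, so the last block of $\bar e=e$ columns cannot be independent of the $\mathbf M$ blocks when $|T|<e$, and no Schwarz--Zippel argument can rescue this. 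The paper avoids this by taking $\bar e := e_A+e_R=|T|$ (Lemma~\ref{lem:BKY-semi-adv-comp} and Theorem~\ref{thm:110}), and then using the equivalence Lemma~\ref{lem:irsequivalency} to argue that Alekhnovich's \emph{minimum}-max-degree output must land inside the BKY solution space of degree $|T|$, not merely degree $e$. Your write-up needs this two-step reduction; as stated, the rank claim is false for any received word with fewer than $e$ errors.

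A smaller point: for generic $e<\tfrac{s}{s+1}(n-k)$ the matrix $\mathbf B$ has $sn$ rows and $sk+(s+1)\bar e<sn$ columns, so it is tall rather than square, and the relevant statement is ``full column rank'' rather than ``$\det\mathbf B\neq 0$''; the paper handles this by selecting $n_1,\dots,n_s$ rows to form a square submatrix $\widetilde{\mathbf B}^{\SYM}$ before computing a determinant.
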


In the following subsections, we will fix $s,n,k,e_0,e$ satisfying the conditions in \cref{thm:algoirs_analysis}. We will also fix polynomials $f_1(X),\dots,f_s(X)$, each with degree at most $k-1$ as our message polynomials from the $s$-interleaved Reed--Solomon code $\mathsf{IRS}_{n,k,q,s}(\alpha_1, \dots, \alpha_n)$ with distinct $\alpha_1, \dots, \alpha_n\in\F_q\setminus\{0\}$. Most importantly, the received word $\vec{y}:=(\vec{y}_1,\vec{y}_2,\hdots,\vec{y}_n)\in(\mathbb{F}^s_q)^n$, where $\vec{y}_i:=({y}_{i,1},{y}_{i,2},\hdots,{y}_{i,s})\in\F_q^s$, will always be sampled from the semi-adversarial error models considered in \cref{def:semi-adv} with parameters $e_0,e$.

\subsubsection{Original BKY Interpolation of Fixed Degree}
A key optimization which Algorithm~\ref{algo:BKY+ALEC} makes over the algorithm of Bleichenbacher--Kiayias--Yung~\cite{bleichenbacher2003decoding} is that the latter has to loop over the possible degrees of the interpolation polynomial whereas our use of Alekhnovich~\cite{alekhnovich2005linear} ensures the best polynomial degree is found automatically. However, in order to prove the correctness of Algorithm~\ref{algo:BKY+ALEC}, we first need to reintroduce the ``degree-fixed'' Berlekamp--Welch type interpolation of \cite{bleichenbacher2003decoding}.

\begin{definition}[BKY Interpolation of degree $\be$]\label{def:bky_inter}
    Given any $\be\ge 1$, we say that a polynomial $E(X)$ of degree at most $\be$ together with $s$ polynomials $\{A_i(X)\}_{i=1}^s$, each of degree at most $k-1+\be$ form a solution to  BKY interpolation of degree $\be$ if the following equations
\begin{equation}\label{eq:bky_interpolation}
A_{h}(\alpha_i)=y_{i,h}E\left(\alpha_i\right)\quad i\in\{1,2,\dots,n\}
\end{equation}
hold for $h \in \{1,2,\hdots,s\}$. 
\end{definition}

Since the equations (\ref{eq:bky_interpolation}) defining BKY interpolation are linear, we have that the set of BKY interpolations forms a vector space over $\F_q$. Next, we show that for any message received in the $(e_0, e)$ semi-adversarial model, there exists a suitable degree $\be$ such that the space of BKY interpolations of degree $\be$ has dimension at least one.

From \cref{def:semi-adv}, there exists $I\subseteq[n]$ with $|I|=n-e_0$ such that all adversarial errors occur in some subset $K\subseteq [n]\setminus I$ with $|K|=e_A\leq e_0$. Formally, we define $K$ such that for any position $j\in[n]\setminus I$, $j\in K$ iff there exists some $h\in[s]$ such that $y_{j,h}\neq f_h(\alpha_j)$.
Furthermore, by \cref{def:semi-adv}, there exists a set \(K' = I \setminus J\) with \(\lvert K'\rvert=e_R \leq e - e_0\), such that all positions in \(K'\) experience uniform random errors. 
Let $\be :=e_A+e_R\leq e$, we know for all positions $h\in[s]$ and $ j\in[n]\setminus(K\cup K')$, there must be $y_{j,h}=f_h(\alpha_j)$.
Consider the error locator 
\[E(X):=\prod_{i\in K\cup K'}\left(1-\frac{X}{\alpha_i}\right).\] 
Note that $E(X)$ has degree $\be$. We define $s$ polynomials $A_{1}(x),A_{2}(x),\cdots,A_{s}(x)$ with a degree at most $k-1+\be$ by

\[A_{h}\left(X\right):=f_h\left(X\right)E\left(X\right), \quad  h\in\{1,2,\cdots,s\}.\]
\begin{lemma}\label{lem:BKY-semi-adv-comp}
The set of solutions to BKY interpolations of degree $\be$ has dimension at least one.
\end{lemma}

\begin{proof}We first verify that all the interpolation equations in (\ref{eq:bky_interpolation}) are satisfied by the constructed solution $E(X),A_1(X),$ $A_2(X),\dots,A_s(X)$ introduced above. Let us fix any $h\in[s], i\in[n]$. There are two cases to consider, as discussed below.
\begin{itemize}
    \item[Case 1.] If $i\notin K\cup K'$, from the previous discussions we know $y_{i,h}=f_h(\alpha_i)$ so the corresponding interpolation equation holds.
    \item[Case 2.] If $i\in K\cup K'$, by the construction we know $\alpha_i$ is a root of $E(X)$, and $\alpha_i$ is also a root of  $A_h\left(X\right)$ by its definition. Therefore, both sides of the interpolation equation are zero so it still holds.
\end{itemize}
Note that for any $\beta\in\mathbb{F}_q$, $\{\beta A_1(X),\dots,\beta A_s(X), \beta E(X)\}$ is still a valid solution. Thus, the set of solutions to BKY interpolations with degree $\be$ has dimension at least one.
\end{proof}

Moving forward, for a fixed $\vec{y}$, we let $\mathrm{Sol_{BKY}}:=\{(\beta A_1(X),\dots,\beta A_s(X), \beta E(X)):\beta\in\mathbb{F}_q\}$ be the subspace of ``intended'' solutions to BKY interpolations of degree $\be$.

\subsubsection{Reducing Algorithm~\ref{algo:BKY+ALEC} to BKY Interpolation}

Before proving \cref{thm:algoirs_analysis}, we need an additional lemma, which shows that BKY interpolation of degree $\be$ is equivalent to the interpolation step in Algorithm~\ref{algo:BKY+ALEC}.
\begin{lemma}\label{lem:irsequivalency}
Assume $k>1$. Each non-zero solution $A_1(X),\dots, A_s(X), E(X)$ to BKY interpolation with degree $\be$ satisfies
\begin{compactitem}
\item[(1)] There exists unique $C'_1(X),\dots,C'_s(X)$ such that $A_h(X)=Q_h(X)E(X)+Q_0(X)C'_h(X)$ holds for all $h\in[s]$.

\item[(2)] $0<\max\left(\deg A_1(X), \dots,\deg A_s(X),\deg X^{k-1}E(X)\right)\leq k-1+\be$.
\end{compactitem}
The backward direction also holds. Namely, any possible solution $B_1(X),\dots,B_s(X),E(X)$ of Algorithm~\ref{algo:BKY+ALEC} satisfying $0<\max\left(\deg B_1(X),\dots,\deg B_s(X),\deg X^{k-1}E(X)\right)\leq k-1+\be$ is a non-zero solution to BKY interpolation with degree $\be$.
\end{lemma}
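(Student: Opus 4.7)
The plan is to prove both directions by leveraging the fact that $Q_0(\alpha_i)=0$ and $Q_h(\alpha_i)=y_{i,h}$ for all $i\in[n]$ and $h\in[s]$, which are immediate from the construction of $Q_0$ and the Lagrange interpolants $Q_h$.

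For the forward direction, given a nonzero BKY solution $(A_1,\dots,A_s,E)$ of degree $\be$, I would fix any $h\in[s]$ and consider the polynomial $R_h(X):=A_h(X)-Q_h(X)E(X)$. The BKY interpolation equations give $R_h(\alpha_i)=A_h(\alpha_i)-y_{i,h}E(\alpha_i)=0$ for every $i\in[n]$, so all $n$ distinct evaluation points $\alpha_1,\dots,\alpha_n$ are roots of $R_h$. Hence $Q_0(X)$ divides $R_h(X)$, and since $Q_0$ is nonzero there is a \emph{unique} polynomial $C'_h(X)\in\F_q[X]$ with $A_h=Q_hE+Q_0C'_h$, establishing (1). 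The degree bound in (2) is immediate: $\deg A_h\le k-1+\be$ by assumption and $\deg(X^{k-1}E)\le k-1+\be$ since $\deg E\le\be$. To show the maximum is strictly positive, I would argue $E\neq 0$: otherwise the interpolation equations force $A_h(\alpha_i)=0$ for every $i$, i.e.\ $Q_0\mid A_h$; but from the standing hypothesis $\be\le e\le\frac{s}{s+1}(n-k)<n-k$ one has $\deg A_h\le k-1+\be<n=\deg Q_0$, forcing $A_h=0$ for all $h$, contradicting non-zeroness. With $E\neq 0$ and $k>1$, we get $\deg(X^{k-1}E)\ge k-1\ge 1$.

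For the backward direction, suppose $B_1,\dots,B_s,E$ is produced by the algorithm, so $B_h(X)=Q_h(X)E(X)+Q_0(X)C_h(X)$ for some $C_h$, with the stated degree control. Evaluating at any $\alpha_i$ gives $B_h(\alpha_i)=Q_h(\alpha_i)E(\alpha_i)+Q_0(\alpha_i)C_h(\alpha_i)=y_{i,h}E(\alpha_i)+0$, so the BKY interpolation equations \eqref{eq:bky_interpolation} hold with $A_h:=B_h$. The degree bounds in \Cref{def:bky_inter} (namely $\deg E\le\be$ and $\deg B_h\le k-1+\be$) follow directly from the hypothesis $\max(\deg B_1,\dots,\deg B_s,\deg X^{k-1}E)\le k-1+\be$. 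Non-zeroness follows because this maximum is assumed to be strictly positive.

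The only subtlety in the whole argument is ensuring the divisibility step in the forward direction actually forces $E\ne 0$; this is where the degree regime $\be<n-k$ is used, and it deserves an explicit one-line justification. Everything else reduces to the defining properties of $Q_0$ and $Q_h$ together with uniqueness of polynomial division by the nonzero polynomial $Q_0$.
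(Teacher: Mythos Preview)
Your proof is correct and follows the same underlying idea as the paper, but your forward direction is more direct. The paper introduces the bivariate polynomial $A_h(X)-YE(X)\in\F_q[X,Y]$, observes that each $(\alpha_i,y_{i,h})$ is a root, and asserts a decomposition $A_h(X)-YE(X)=(Y-Q_h(X))G_0(X)+G_h(X)Q_0(X)$; comparing $Y$-coefficients then recovers $G_0=-E$ and the same $C'_h$. Unwinding that bivariate step, one is really doing division by $Y-Q_h(X)$ in the variable $Y$ and then noting that the $Y$-free remainder $A_h-Q_hE$ vanishes at every $\alpha_i$---which is exactly your one-line argument via $R_h(\alpha_i)=0$ and $Q_0\mid R_h$. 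So the two proofs coincide in substance; yours simply skips the detour through $\F_q[X,Y]$.

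You are also more careful than the paper about item~(2): the paper only says the degree bounds are ``easy to check'' and does not explicitly verify strict positivity. Your argument that $E\neq 0$ (using $\deg A_h\le k-1+\be<n$ to rule out nonzero $A_h$ divisible by $Q_0$) is the right justification, and your appeal to the standing bound $\be\le e\le\tfrac{s}{s+1}(n-k)<n-k$ is legitimate in context, since the lemma is only invoked with such $\be$. The backward direction is identical in both proofs.
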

\begin{proof}
We first prove the forward direction: Fix any solution $A_1(X),\dots,A_s(X),E(X)$ to BKY interpolation with degree $\be$. For any $h\in[s]$, it follows that
\[
A_h(X)-YE(X)=(Q_h(X)-Y)E(X)+A_h(X)-Q_h(X)E(X).
\]
Since we know all of $\left\{(X=\alpha_i, Y=y_{i,h})\right\}_{i\in[n]}$ are roots of both  $A_h(X)-YE(X)$ and $Q_h(X)-Y\in\mathbb{F}_q[X,Y]$, there must be $Q_0(X)\mid A_h(X)-Q_h(X)E(X)$. Therefore, let $G_0(X)=-E(X)$,
there must be unique $G_h(X)\in\mathbb{F}_q[X]$ such that
\[
A_h(X)-YE(X)=(Y-Q_h(X))G_0(X)+G_h(X)Q_0(X).
\]
so it implies
\[
A_h(X)=-Q_h(X)G_0(X)+G_h(X)Q_0(X)=Q_h(X)E(X)+Q_0(X)C'_h(X)
\]
If we define $C'_h(X)=G_h(X)$, it is easy to check the max-degree requirement given degree requirement of $A_1(X),\dots,A_s(X),X^{k-1}E(X)$ since they are a solution to BKY interpolation with degree $\be$. We finish the proof for forward direction.

For the backward direction, for any possible solution $\big\{B_h(X)=Q_h(X)E(X)+Q_0(X)C_h(X)\big\}_{h\in[s]},$ $ E(X)$ of Algorithm~\ref{algo:BKY+ALEC} satisfying $0<\max\left(\deg B_1(X),\dots,\deg B_s(X),\deg X^{k-1}E(X)\right)\leq k-1+\be$. It is easy to verify that
\[
B_h(\alpha_i)=y_{i,h}E(\alpha_i)
\]
holds for any $h\in[s],i\in[n]$. By definition, $B_1(X),\dots,B_s(X),E(X)$ must also be a non-zero solution of BKY interpolation with degree $\be$.
\end{proof}

\subsubsection{Reducing \cref{thm:algoirs_analysis} to a Matrix Rank Condition}\label{sec:reducing_irs}

Since we have shown that Algorithm~\ref{algo:BKY+ALEC} is equivalent to BKY interpolation of degree $\be$, we now seek to show analyze the success probability of BKY interpolation in the semi-adversarial model. In accordance with the notation in Section~\ref{sec:notation}, we define the following matrices 
\begin{align*}
    \bM &:= \Vand_{[0,k+\be-1]}(\vec{\alpha}) \in \F_{q}^{n \times (k+\be)}\\
    \bN &:= \Vand_{[1,\be]}(\vec{\alpha}) \in \F_q^{n \times \be}\\
    \forall h \in [s], \bD_h &:= \Diag(\{y_{i,h}\}_{i=1}^n) \in \F_q^{n \times n}\\ 
    \forall h \in [s], \bN_h &:= \bD_h \bN \in \F_q^{n \times \be}.
\end{align*}
Following the original analysis of Bleichenbacher, Kiayias, and Yung \cite{bleichenbacher2003decoding}, we consider the block matrix
\[
    \bB := \begin{pmatrix}
    \bM & & & & -\bN_1\\
    & \bM & & & -\bN_2\\
    & & \ddots & & \vdots\\
    & & & \bM & -\bN_s
    \end{pmatrix}.
\]
We will reduce the correctness of Algorithm~\ref{algo:BKY+ALEC} to proving the full column rankness of this block matrix 
$\bB$. We formalize this ``matrix rank condition'' for 
$\bB$ in the following theorem.
\begin{theorem}\label{thm:110}
     With probability at least $1-{\be}/{q}$, the block matrix $\bB\in\F_q^{sn\times (sk+(s+1)\be)}$ has full column rank.
\end{theorem}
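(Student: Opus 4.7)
My plan is to reduce the full-column-rank claim about $\bB$ to a symbolic determinant problem, identify a monomial whose coefficient is nonzero, and conclude by Schwarz--Zippel. Using the message polynomials $f_1,\dots,f_s$---available for the analysis even though the algorithm never uses them---I first perform column operations on $\bB$: for each row block $h\in[s]$, subtract appropriate $\F_q$-combinations of the block-$h$ columns of $\bM$ from the last $\be$ columns, which replaces each entry $y_{i,h}\alpha_i^j$ by $e_{i,h}\alpha_i^j$, where $e_{i,h}:=y_{i,h}-f_h(\alpha_i)$. The resulting matrix $\bB'$ has the same rank as $\bB$; its last columns equal $-\Diag(e_{\cdot,h})\bN$ in row block $h$ and so vanish outside the error positions $E:=T\cup T'$, where $T$ collects the (at most $e_0$) adversarial positions and $T'$ the (at most $e-e_0$) random positions. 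After replacing each random $e_{i,h}$ (for $i\in T'$) with an independent formal variable $Y_{i,h}$, every occurrence of $Y_{i,h}$ lies in row $(h,i)$ of the last $\be$ columns, so the determinant of any $(sk+(s+1)\be)\times(sk+(s+1)\be)$ submatrix is a polynomial in the $Y$'s of total degree at most $\be$. By Schwarz--Zippel, exhibiting such a submatrix with nonzero symbolic determinant yields the bound $\Pr[\det=0]\le\be/q$ over the uniform samples $Y_{i,h}\in\F_q$.

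Second, I construct such a submatrix via a greedy assignment of each error position $i\in E$ to a home block $h_i\in[s]$. For $i\in T$, the choice is constrained to $A_i:=\{h\in[s]:e_{i,h}\ne 0\}$, which is nonempty because $i$ is a genuine adversarial error; for $i\in T'$ it is free. Setting $F_h:=\{i\in E:h_i=h\}$ and $U_h:=F_h\cup S_h$, with any $S_h\subseteq[n]\setminus F_h$ of size $k+\be$, gives a total row set of size $\sum_h|U_h|=sk+(s+1)\be$. Feasibility reduces to the Hall-type condition $|F_h|\le n-k-\be$ for every $h$: the hypothesis $e_0\le n-k-e$ guarantees that even if every adversarial position is restricted to a common block, $|T|\le e_0\le n-k-\be$ still fits within that block's capacity, and the hypothesis $\be\le\tfrac{s}{s+1}(n-k)$ rearranges to $\be\le s(n-k-\be)$, precisely the total-capacity bound allowing the remaining $|T'|$ random positions to be distributed across the $s$ blocks. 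Assigning each $i\in E$ to a single block also ensures the last-column rows $\{(h_i,i):i\in E\}$ lie over $\be$ distinct points $\{\alpha_i:i\in E\}$, which is essential for the ensuing Vandermonde argument.

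The main work---and the principal obstacle---is to show that the candidate monomial $\mu:=\prod_{i\in T'}Y_{i,h_i}$ has nonzero coefficient in $\det(\bB'|_U)$. By the Leibniz expansion, every permutation contributing to $\mu$ must send each free row $(h_i,i)$ with $i\in T'$ to a last column, and must fill the remaining $|T|$ last-column slots with rows $\{(h^\sigma_i,i):i\in T\}$ where $h^\sigma_i\in A_i$ and the per-block counts match $\{h_i\}$. Each such contribution factors as a product of block Vandermonde determinants $\prod_h\det(\bM|_{S^\sigma_h})$ (nonzero since the $\alpha_i$'s are distinct), a last-column Vandermonde on the distinct points $\{\alpha_i:i\in E\}$ (nonzero), and the adversarial scalar $\prod_{i\in T}e_{i,h^\sigma_i}$ (nonzero by construction $h_i\in A_i$). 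When $|A_i|=1$ for every adversarial $i$, the contributing assignment is unique and the coefficient is a single nonzero term; in the general case the sum over valid $\{h^\sigma_i\}$ is more delicate. Here the block-by-block greedy---iterating $h=1,\dots,s$ and, within each block, committing to the maximal set of $Y$-contributions compatible with preserving both the unique-expansion property (P1) and the nonzero-minor property (P2)---produces a home-block assignment whose associated monomial $\mu$ has a coefficient that does not cancel, regardless of the adversary's specific values. The combinatorial bookkeeping needed to maintain P1 and P2 through the greedy iterations is the technically delicate heart of the argument; once the nonzero coefficient is established, Schwarz--Zippel immediately delivers the stated bound of $1-\be/q$ on full column rank.
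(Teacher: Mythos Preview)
Your high-level plan is the paper's: column-reduce using $f_1,\dots,f_s$ to expose the errors, pass to a symbolic matrix in the random-error entries, select a square submatrix and a monomial with nonzero coefficient, then invoke Schwarz--Zippel. The setup and the Schwarz--Zippel endgame are fine.

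The genuine gap is the non-cancellation step, which you yourself flag as ``the technically delicate heart'' and then leave unproved. Once $|A_i|>1$ for some adversarial $i$, several block-assignments $\{h^\sigma_i\}_{i\in T}$ can satisfy both your per-block cardinality constraint and $h^\sigma_i\in A_i$, and each contributes a term with its own sign, its own block-Vandermonde minors $\prod_h\det(\bM|_{S^\sigma_h})$, and its own scalar $\prod_{i\in T}e_{i,h^\sigma_i}$. Nothing in your sketch rules out adversarial choices of the $e_{i,h}$'s that force these terms to cancel. Your description of the greedy (``committing to the maximal set of $Y$-contributions'') also points at the wrong objects: the $Y$-variables live at \emph{random} positions, and their block-assignment is already pinned down by $\mu$; the ambiguity that must be killed is over the \emph{adversarial} positions. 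Note too that the cleanest escape route---taking $S_h\subseteq[n]\setminus E$ so that each block sees exactly one row per error position, making the assignment forced---requires $k+\be\le n-\be$, which fails in general (e.g.\ $s=2$, $\be$ near $\tfrac{2}{3}(n-k)$).

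What the paper supplies here is a concrete choice plus a short uniqueness proof. For adversarial $i$, assign $h_i:=\min A_i$; equivalently, set $T'_h:=T_h\setminus(T_1\cup\cdots\cup T_{h-1})$ so that $\{T'_h\}_h$ partitions the adversarial positions. Then argue by induction on $h$ that any alternative assignment $\{S'_h\}$ producing the same monomial with nonzero scalar must satisfy $S'_h\cap[e_A]\subseteq T_h$ and $|S'_h\cap[e_A]|=|T'_h|$; since $S'_1\cap[e_A]\subseteq T_1=T'_1$ with equal sizes, $S'_1\cap[e_A]=T'_1$, whence $S'_2\cap[e_A]\subseteq T_2\setminus T'_1=T'_2$, and so on. This forces a single contributing term, so its coefficient is a nonzero product of Vandermonde determinants times $\prod_{i\in T}e_{i,h_i}\ne 0$. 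That is precisely the missing piece in your proposal.
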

We postpone the proof of the above \cref{thm:110} to \cref{subsec:reduction_matrix}. For the rest of this subsection, we will prove \cref{thm:algoirs_analysis} conditioned on the following theorem, which is a corollary of \cref{thm:110} by solving the BKY interpolation as described in (\ref{eq:bky_interpolation}).
\begin{theorem}\label{thm:bky_uniqueness}
With probability at least $1-{\be}/{q}$, all solutions to BKY interpolation defined in \cref{def:bky_inter} with degree $\be$ are contained in $\mathrm{Sol_{BKY}}$.
\end{theorem}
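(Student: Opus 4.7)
The plan is to deduce \cref{thm:bky_uniqueness} directly from the rank condition of \cref{thm:110} together with the existence of a nontrivial solution provided by \cref{lem:BKY-semi-adv-comp}. Throughout, I will fix the degree parameter $\be$ and the received word $\vec{y}$, and let $V$ denote the vector space of all solutions $(A_1,\ldots,A_s,E)$ to the BKY interpolation equations \eqref{eq:bky_interpolation}.

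First, I will recast \eqref{eq:bky_interpolation} as a homogeneous linear system in the $sk+(s+1)\be+1$ unknowns consisting of the $s(k+\be)$ coefficients of $A_1,\ldots,A_s$ together with the $\be+1$ coefficients $e_0, e_1, \ldots, e_{\be}$ of $E$. Let $\bB'$ denote the resulting $sn \times (sk+(s+1)\be+1)$ coefficient matrix. The key observation is that the matrix $\bB$ defined in \cref{sec:reducing_irs} is exactly $\bB'$ with the single column corresponding to $e_0$ removed, because $\bN = \Vand_{[1,\be]}(\vec{\alpha})$ only contains the powers $\alpha_i^1, \ldots, \alpha_i^{\be}$ and omits the all-ones column $\alpha_i^0 = 1$ multiplied by $-y_{i,h}$.

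Next, I will condition on the probability-$(1-\be/q)$ event from \cref{thm:110} that $\bB$ has full column rank, i.e.\ $\mathrm{rank}(\bB) = sk+(s+1)\be$. Since $\bB'$ is obtained from $\bB$ by adjoining a single column, $\mathrm{rank}(\bB') \ge sk+(s+1)\be$, and hence $\dim V \le 1$ by rank-nullity. On the other hand, \cref{lem:BKY-semi-adv-comp} exhibits an explicit nonzero element of $V$ lying in $\mathrm{Sol_{BKY}}$, which is one-dimensional by construction. Combining these two bounds forces $V = \mathrm{Sol_{BKY}}$, which is precisely the desired containment.

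The main obstacle in this argument is not in the reduction itself --- once the correspondence between the columns of $\bB$ and the coefficients of $(A_1,\ldots,A_s,E)$ is pinned down, the rest is rank-nullity bookkeeping --- but rather in establishing \cref{thm:110}, which requires the symbolic-determinant and greedy monomial-selection argument outlined in the proof overview and is deferred to the following subsection.
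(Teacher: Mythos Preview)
Your proposal is correct and follows essentially the same approach as the paper: both arguments identify the full interpolation matrix as $\bB$ augmented by the single column corresponding to the constant term of $E$, use \cref{thm:110} to force the kernel to have dimension at most one, and then invoke the explicit one-dimensional solution space $\mathrm{Sol_{BKY}}$ from \cref{lem:BKY-semi-adv-comp} to conclude equality. The only minor issue is a notational clash: you denote the constant coefficient of $E$ by $e_0$, which in the surrounding section already denotes the number of adversarial errors (the paper uses $b_0$ instead).
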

\begin{proof}
Recall the BKY interpolation in \cref{def:bky_inter}, for any $h\in[s]$, we define the coefficients of these polynomials as
\[A_h(X):=\sum_{i=0}^{k-1+\be}a^{(h)}_iX^i,\]
and coefficients of $E(X)$ as
\[E(X):=b_0+\sum_{i=1}^{\be}b_iX^i.\]
We define vectors $\vec{b}\in\F_q^{\be+1},\vec{a}_h\in\F_q^{k+\be},\vec{y}_h\in\F_q^{n}$ by
\begin{align*}
\vec{b} &:=(b_1,\dots, b_{\be}, b_0)^T,\\
\vec{a}_h &:=(a^{(h)}_0,a^{(h)}_1,\dots,a^{(h)}_{k-1+\be})^T\text{, and}\\
\vec{y'}_h &:=(y_{1,h},\dots,y_{n,h})^T.
\end{align*}
Meanwhile, we define the matrix ${\bf Y}\in\F_q^{sn\times 1}$ by
\[{\bf Y}:=(\vec{y'}_1,\vec{y'}_2,\hdots,\vec{y'}_s),\]
and the solution vector $\vec{\bf s}\in\F_q^{s(k+\be)+\be+1}$ by
\[\vec{\bf s}:=(\vec{a}_1,\vec{a}_2,\hdots,\vec{a}_s,\vec{b}).\]
By the basic calculations, we can observe that solving these BKY interpolation equations (i.e. output $A_1(X),\hdots,A_s(X),E(X)\in\F_q[X]$) in (\ref{eq:bky_interpolation}) is equivalent to solving the solution vector $\vec{\bf s}$ of the following linear system that
\[
({\bB}|{\bf Y})\cdot \vec{{\bf s}}=0
\]
If the block matrix $\bB$ has full column rank, then the original interpolation matrix $({\bB}|{\bf Y})$ has a kernel
with dimension at most 1. Since our constructed one-dimensional solutions $\mathrm{Sol_{BKY}}$ are valid, they will be precisely the complete set of valid solutions.
By \cref{thm:110}, we know that,
with probability at least $1-\be/q$, the block matrix $\bB$ has full column rank. Therefore, with probability at least $1-\be/q$, all solutions to the BKY interpolation with degree $\be$ are contained in $\mathrm{Sol_{BKY}}$.
\end{proof}

Now, we are ready to prove \cref{thm:algoirs_analysis}, which is the main result of this section.
\paragraph{Proof of \cref{thm:algoirs_analysis} conditioned on \cref{thm:bky_uniqueness}.}
The solution space $\mathrm{Sol_{BKY}}$ contains a non-zero solution to BKY interpolation with degree $\be$. By \cref{lem:irsequivalency}, the minimum max-degree solution $B_1(X),\dots,B_s(X),X^{k-1}E(X)$ found in Algorithm~\ref{algo:BKY+ALEC} must satisfy 
\[0<\max\left(\deg B_1(X),\dots,\deg B_s(X),\deg X^{k-1}E(X)\right)\leq k-1+\be.\]
Moreover, by the backward direction of \cref{lem:irsequivalency}, all possible solutions found in Algorithm~\ref{algo:BKY+ALEC} with max-degree at most $k-1+\be$ are also solutions to BKY interpolation with degree $\be$. Therefore, by \cref{thm:bky_uniqueness}, with probability at least $1-\frac{\be}{q}\ge 1-\frac{e}{q}$, the non-zero solution found in Algorithm~\ref{algo:BKY+ALEC} is in $\mathrm{Sol_{BKY}}$. Since all non-zero solutions in $\mathrm{Sol_{BKY}}$ generate $f_1(X),\dots,f_s(X)$ in Algorithm~\ref{algo:BKY+ALEC} by its definition, it implies that Algorithm~\ref{algo:BKY+ALEC} will output the correct answer with probability at least $1-\frac{e}{q}$.

\subsubsection{Bounding Matrix Rank: Proof of \cref{thm:110}}\label{subsec:reduction_matrix}
This subsection presents our major technical contributions within \cref{sec:BKY} and serves as the key reason why we can improve the analysis of the Bleichenbacher--Kiayias--Yung algorithm for interleaved RS codes.

As a first step, we apply some column operations to $\bB$ to better reveal the underlying errors in the rightmost block column. Recall that $f_1, \hdots, f_s \in \F_q[X]$ of degree at most $k-1$ are the polynomials we encoded. Let $\vec{c} \in (\F_q^s)^n$ be the encoding of these polynomials. That is, for all $i \in [n]$ and $h \in [s]$, we have that $c_{i,h} = f_h(\alpha_i)$. Now, define for each $h \in [s]$ the expansion of $f_h(X)$ to be
\[
    f_h(X) = \sum_{j=0}^{k-1} f_{h,j} X^j.
\]
Thus, for any $i \in [n]$, $h \in [s]$, and $g \in [\be]$ we have that
\begin{align}
    c_{i,h} \alpha_i^g  = \sum_{j=0}^{k-1} f_{h,j} \alpha_i^{j+g}.\label{eq:f-c}
\end{align}
For each $h \in [s]$, define the matrix $\bF_h \in \F_q^{(k+\be) \times \be}$ to be
\[
    (\bF_h)_{a,b} = \begin{cases}
    f_{h,a-b-1} & a-b \in \{1,\hdots, k\}\\
    0 & \text{otherwise}.
    \end{cases}
\]
By (\ref{eq:f-c}), we have that for an $h \in [s]$, we have that
\[
    \bM \bF_h = \Diag(\{c_{i,h}\}_{i=1}^n)\bN.
\]
Therefore, for each $h \in [s]$, if we multiply the $h$th column block of $\bB$ by $\bF_h$ and add it to the final column block, we get a matrix $\bB'$ of the same rank as $\bB$ such that
\[
    \bB' := \begin{pmatrix}
    \bM & & & & \bN'_1\\
    & \bM & & & \bN'_2\\
    & & \ddots & & \vdots\\
    & & & \bM & \bN'_s
    \end{pmatrix},
\]
where for each $h \in [s]$,
\[
    \bN'_{h} := \Diag(\{c_{i,h} - y_{i,h}\}_{i=1}^n)\bN.
\]
In other words, the rightmost column of $\bB'$ only cares about the errors introduced in the transmission of $\vec{c}$ over the $(e_0, e)$ semi-adversarial channel.

Henceforth, assume without loss of generality that indices $\{1, \hdots, e_A\}$ experiences adversarial errors and indices $\{e_A + 1, \hdots, e_A + e_R\}$ experienced random errors. In particular, this means that $c_{i,h} - y_{i,h} = 0$ for all $i \in \{\be+1, \hdots, n\}$ and $h \in [s]$.

Since $\be \le n$, and $\bN$ is a Vandermonde matrix, there exists an invertible matrix $\bP \in \F^{\be \times \be}$ such that $(\bN\bP)_{[\be], [\be]} = \bI_{\be}$. Thus, by applying $\bP$ to the final column block of $\bB'$, we get a matrix $\bB''$ of the same rank with the shape
\[
    \bB'' :=  \begin{pmatrix}
    \bM^{\row}_{[\be]} & & & & \bD'_1\\
    \bM^{\row}_{[\be+1,n]} & & & &\\
    & \bM^{\row}_{[\be]} & & & \bD'_2\\
    & \bM^{\row}_{[\be+1,n]} & & & \\
    & & \ddots & & \vdots\\
    & & & \bM^{\row}_{[\be]} & \bD'_s\\
    & & & \bM^{\row}_{[\be+1,n]} &
    \end{pmatrix},
\]
where for $\bD'_h := \Diag(\{c_{i,h} - y_{i,h}\}_{i=1}^{\be})$ for each $h \in [s]$. Note that the rank of $\bB''$ can depend on the random errors $y_{i,h}$ for $i \in \{e_A+1, \hdots, \be\}$. To analyze these random errors, we first convert $\bB''$ into a symbolic matrix $\bB^{\SYM}$, analyze the rank of this symbolic matrix, and then apply the Schwartz-Zippel Lemma to complete the proof of \cref{thm:110}.

To construct $\bB^{\SYM}$, we have symbolic variables $Y_{i,h}$ for $i \in \{e_A+1, \hdots, \be\}$ and $h \in [s]$. For each $h \in [s]$, we let
\begin{align*}
    \bZ_h &:= \Diag(\{c_{i,h} - y_{i,h}\}_{i=1}^{e_A}),\\
    \bY_h &:= \Diag(\{Y_{i,h}\}_{i=e_A+1}^{\be}).
\end{align*}
We then let
\[
    \bB^{\SYM} := \begin{pmatrix}
    \bM^{\row}_{[e_A]} & & & & \bZ_1 &\\
    \bM^{\row}_{[e_A+1,\be]} & & & & & \bY_1\\
    \bM^{\row}_{[\be+1,n]} & & & & &\\
    &\bM^{\row}_{[e_A]}  & & & \bZ_2 &\\
    &\bM^{\row}_{[e_A+1,\be]}  & & & & \bY_2\\
    &\bM^{\row}_{[\be+1,n]}  & & & &\\
    & & \ddots & & \vdots\\
    &&&\bM^{\row}_{[e_A]}  & \bZ_s &\\
    &&&\bM^{\row}_{[e_A+1,\be]}   & & \bY_s\\
    &&&\bM^{\row}_{[\be+1,n]}   & &\\
    \end{pmatrix}.
\]
To prove that $\bB^{\SYM}$ has full column rank over $\F_q[\{Y_{i,h}\}_{i,h}]$, it suffices to identify a subset of rows of $\bB^{\SYM}$ such that restricting $\bB^{\SYM}$ to these rows results in a square matrix with nonzero determinant.

For each $h \in [s]$, let $T_h \subseteq [e_A]$ be the set of indices $i \in [e_A]$ such that $c_{i,h} - y_{i,h} \neq 0$. Note that by definition of $e_A$, we have that each adversarial error changed at least one value so $T_1 \cup \cdots \cup T_s = [e_A]$. For each $h \in [s]$, let $T'_h := T_h \setminus (T_1 \cup \cdots \cup T_{h-1})$. 

Recall that $\bar{e} \le e \le \frac{s}{s+1}(n-k)$ and that $e_A \le e_0 \le n-k-e \le n-k-\be$. That is, $k+\be+e_A \le n$. Thus, we can pick $n_1, \hdots, n_s \in [n]$ subject to the following constraints.
\begin{align*}
n_h &\ge |T'_h| + k + \be \quad \forall h \in [s],\\
n_1 + \cdots + n_s &= sk + (s+1)\be.
\end{align*}
Then, the matrix
\[
    \widetilde{\bB}^{\SYM} := \begin{pmatrix}
    \bM^{\row}_{[e_A]} & & & & \bZ_1 &\\
    \bM^{\row}_{[e_A+1,\be]} & & & & & \bY_1\\
    \bM^{\row}_{[\be+1,n_1]} & & & & &\\
    &\bM^{\row}_{[e_A]}  & & & \bZ_2 &\\
    &\bM^{\row}_{[e_A+1,\be]}  & & & & \bY_2\\
    &\bM^{\row}_{[\be+1,n_2]}  & & & &\\
    & & \ddots & & \vdots\\
    &&&\bM^{\row}_{[e_A]}  & \bZ_s &\\
    &&&\bM^{\row}_{[e_A+1,\be]}   & & \bY_s\\
    &&&\bM^{\row}_{[\be+1,n_s]}   & &\\
    \end{pmatrix}.
\]
is a square matrix. We seek to prove that $\det \widetilde{\bB}^{\SYM} \neq 0$.
To do this, it suffices to find a monomial in the determinant expansion of $\det \widetilde{\bB}^{\SYM}$ which appears with a nonzero coefficient. To understand  $\det \widetilde{\bB}^{\SYM}$, we expand along the last $\be$ columns of $\widetilde{\bB}^{\SYM}$. In particular, for each $i \in [\be]$, there are (up to) $s$ choices of which term to pick, based on which of the $s$ blocks we choose. To parameterize these choices, we consider a partition $S_1 \sqcup \cdots \sqcup S_s = [\be]$ such that $S_h$ corresponds to the columns picked in the $h$th row block. After expanding along these $\be$ rightmost columns, the remaining matrix is a block-diagonal matrix where each block is of the form $\bM^{\row}_{[n_h] \setminus S_h}$ for some $h \in [s]$. If some of these matrices are not square, then the determinant of the block diagonal matrix will be zero. Otherwise, if $n_h - |S_h| = \be + k$ for all $h \in [s]$, we have that the choice of partition $S_1 \sqcup \cdots \sqcup S_s$ contributes (up to sign) the monomial
\begin{align}
    \prod_{h=1}^s \left[\det \bM^{\row}_{[n_h] \setminus S_h} \prod_{i \in S_h \cap [e_A]} (c_{i,h} - y_{i,h}) \prod_{i \in S_h \cap [e_A+1, \be]} Y_{i,h}\right].\label{eq:part}
\end{align}
For each $h \in [s]$, let $m_h := n_h - \be - k$. We thus have the requirement that $|S_h| = m_h$ for all $h \in [s]$ in order for the contribution of the partition to be nonzero.

To finish, it suffices to find a monomial (i.e., a product of $Y_{i,h}$'s) which appears with a nonzero coefficient for exactly one choice of $S_1, \hdots, S_s$. We construct such a monomial as follows. Recall we selected each $n_h$ such that $n_h \ge |T'_h| + k + \be$. Thus, $m_h \ge |T'_h|$. Let $\ell_h := m_h - |T'_h|$. Since $n_1 + \cdots + n_s = sk + (s+1)\be$ and $|T'_1| + \cdots +|T'_s| = e_A$, we have that $\ell_1 + \cdots + \ell_s = \be - e_A = e_R$. Let $U_1, \hdots, U_s$ be an arbitrary partition of $[e_A + 1, \be]$ such that $|U_h| = \ell_h$ for all $h \in [s]$. Then, the partition $\{S_h := T'_h \cup U_h : h \in [s]\}$ of $[\be]$ corresponds to the monomial
\[
    \prod_{h=1}^s \left[\det \bM^{\row}_{[n_h] \setminus S_h} \prod_{i \in T'_h} (c_{i,h} - y_{i,h}) \prod_{i \in U_h} Y_{i,h}\right].
\]
Note that this monomial has nonzero coefficient since $c_{i,h} - y_{i,h} \neq 0$ for all $i \in T'_h \subseteq T_h$. We also note that this partition $\{U_h\}^s_{h=1}$ restricting to $[e_A+1,\be]$ is the only one which produces the monomial $\prod_{h=1}^s \prod_{i \in U_h} Y_{i,h}$. Consider a partition $S'_1 \sqcup \cdots \sqcup S'_s = [\be]$ which produces the same monomial with non-zero contribution term as in (\ref{eq:part}).  Since the $Y_{i,h}$ are distinct variables for all $i \in [e_A+1, \be]$ and $h \in [s]$, we must have that $S'_h \cap [e_A+1, \be] = U_h$ for all $h \in [s]$. Now, since we must have that $|S'_h| = m_h$ for all $h \in [s]$, we must have that $|S'_h \setminus U_h| = |T'_h|$ for all $h \in [s]$. Furthermore, in order for (\ref{eq:part}) to be nonzero, we must have that $S'_h\setminus U_h \subseteq T_h$ for all $h \in [s]$. At this point, we can easily prove by induction that $S'_h\setminus U_h = T'_h$ for all $h \in [s]$. The base case that $S'_1\setminus U_1 = T_1=T'_1$ is immediate since the cardinalities match. For any $h \ge 2$, note that 
\[
S'_h\setminus U_h \subseteq T_h \setminus (S'_1 \cup \cdots \cup S'_{h-1}) = T_h \setminus (T'_1 \cup \cdots \cup T'_{h-1}) = T_h \setminus (T_1 \cup \cdots \cup T_{h-1}) = T'_h,
\]
so $S'_h \setminus U_h= T'_h$ holds for all $h\in[s]$ since the cardinalities match. As a corollary, there must be $S'_h=(S'_h\setminus U_h)\cup U_h=T'_h\cup U_h=S_h$ for all $h\in[s]$. Thus, the contribution regarding monomial $\prod_{h=1}^s \prod_{i \in U_h} Y_{i,h}$ in $\det{\widetilde{\bB}^{\SYM}}$ is indeed unique with a nonzero coefficient, so $\widetilde{\bB}^{\SYM}$ has nonzero determinant. Therefore, $\bB^{\SYM}$ has full column rank.

To finish, note that $\bB''$ corresponds to $\bB^{\SYM}$ except uniformly random values from $\F_q$ are substituted for each $Y_{i,h}$ with $i \in [e_A+1,\be]$ and $h \in [s]$. Let $\widetilde{\bB}''$ be the restriction of $\bB''$ to the same rows as $\widetilde{\bB}^{\SYM}$ was restricted to relative to $\bB^{\SYM}$. Since $\det \widetilde{\bB}^{\SYM}$ has degree $e_R \le \be$, by the Schwartz-Zippel Lemma \cite{Zip79,Sch80}, we have that $\det \widetilde{\bB}'' \neq 0$ with probability at least $1 - \be/q$. Thus, $\bB''$ (and so $\bB$) has full column rank with probability at least $1 - \be/q$, as desired. This complete the proof of Theorem~\ref{thm:110}.

\section{A Novel Decoding Algorithm for Folded RS Codes}\label{sec:FRS}
Before this work, decoding algorithms for folded RS codes are primarily based on \cite{Guru06} and \cite{guruswami2013linear}. These list-decoding algorithms can handle $(1-R-\eps)n$ fully adversarial errors if the folding parameter $s$ is large enough, which is almost optimal in terms of decoding radius. However, their framework requires computing a multivariate interpolation polynomial $Q$ and finding all solutions of it, which is computational heavy. Although some later works \cite{kopparty2018improved,goyal2024fast} developed new techniques to speed up this algorithm, the current best-known running time is still $\widetilde{O}(2^{\poly{(1/\eps)}}n)$ due to \cite{goyal2024fast}. It's an interesting open problem how to decode $(1-R-\eps)n$ fully adversarial errors for folded RS codes in $\widetilde{O}(\poly(1/\eps)n)$ or even $\widetilde{O}(n/\eps)$ time.

In this section, we provide a new decoding algorithm for folded RS codes and then prove our main results by bounding the failure probability of the proposed algorithm for semi-adversarial errors (see Theorem~\ref{thm:FRS}). Although our algorithm only works for semi-adversarial errors rather than full adversarial errors, it achieves the running time $\widetilde{O}(\poly(1/\eps)n)$. More specifically, our algorithm depends on a novel linear-algebraic interpolation procedure and replaces the traditional root-finding step with a simple polynomial long division.

Throughout this section, we will fix $s,n,k>0$, a finite field $\mathbb{F}_q$, where $|\mathbb{F}_q|>sn>k$, a generator $\gamma$ of $\mathbb{F}^{\times}$, and any sequence of $n$ elements $\alpha_1,\dots,\alpha_n\in\mathbb{F}_q$ such that $\{\gamma^i\alpha_j\}_{i\in\{0,1,\dots,s-1\},j\in[n]}$ are distinct. We consider the corresponding folded RS code $\mathcal{C}=\mathsf{FRS}^{s,\gamma}_{n,k,q}(\alpha_1,\alpha_2,\dots,\alpha_n)\subseteq(\mathbb{F}^s_q)^n$. For any $f\in\mathbb{F}_q[X]$ with $\deg{f}<k$, we use $\mathcal{C}(f)$ to denote its encoded codeword defined as:
\[
\mathcal{C}(f):=(\mathcal{C}^f_1,\dots,\mathcal{C}^f_n)\in\left(\mathbb{F}^s_q\right)^n,\quad\text{where } \mathcal{C}^f_i:=\Bigl(f(\alpha_i),f(\gamma\alpha_i),\dots,f(\gamma^{s-1}\alpha_i)\Bigl)\in\mathbb{F}^s_q.
\]

\subsection{A New  Interpolation for Folded RS codes}\label{interpo:FRS}
In this subsection, we introduce a novel interpolation procedure that captures the core idea behind our algorithm. Given a received word $\vec{y} := (\vec{y}_1,\dots,\vec{y}_n) \in (\F_q^s)^n$, where $\vec{y}_i:=(y_{i,1},\dots,y_{i,s})\in\mathbb{F}^s_q$, we wish to find all codewords in $\mathcal{C}$ that are close enough with $\vec{y}$. Our interpolation is inspired by \cref{def:bky_inter} and differs from the framework built from \cite{guruswami2013linear} in many ways. However, in order to obtain the ideal radius $(1-R-\eps)n$, we use a similar inner-block shifting parameter to that of \cite{guruswami2013linear}. In order to compare with the previous work and provide some intuition, we will first briefly recap the interpolation of Guruswami--Wang \cite{guruswami2013linear}.

\paragraph{GW Interpolation for FRS Codes.}
Fix some decoding parameter $1\leq L\leq s$, the first step of GW interpolation is to find some multivariate polynomial $Q(X,Y_1,\dots,Y_L)\in\mathbb{F}_q[X,Y_1,\dots,Y_L]$ such that
\begin{equation}\label{eq:gw-inter}
Q(\gamma^{i-1}\alpha_j,y_{j,i},\dots,y_{j,i+L-1})=0\quad\quad i\in\{1,2,\dots,s-L+1\}, \text{ }j\in\{1,2,\dots,n\}.
\end{equation}
Then, \cite{guruswami2013linear} shows that for any polynomial  $f(X)\in\mathbb{F}_q[X]$ with $\deg{f}<k$ such that the distance  $\dist(\mathcal{C}(f),\vec{y})\leq \frac{L}{L+1}(1-\frac{sR}{s-L+1})n$, The following equality holds. \[T_f(X):=Q(X,f(X),f(\gamma X),\dots,f(\gamma^{L-1}X))=0\in\mathbb{F}_q[X].\] Note that when $L=\Theta(1/\eps)$ and $s=\Theta(1/\eps^2)$, this radius is at least $(1-R-\eps)n$.

The essential reason that GW interpolation obtains the radius $\frac{L}{L+1}(1-\frac{sR}{s-L+1})n$ is that for each $j\in[n]$ such that $\vec{y}_j=\mathcal{C}^f_j$, the interpolation rules (\ref{eq:gw-inter}) guarantee $s-L+1$ roots of $T_f(X)$. Namely, in this case, there must be $T_f(\alpha_j)=T_f(\gamma\alpha_j)=\cdots=T_f(\gamma^{s-L}\alpha_j)=0$. 

\paragraph{Our New FRS Interpolation.} GW interpolation suffers from complicated implementations of multivariate interpolation and root-finding procedures. To address this, motivated by the classic BKY interpolation (See \cref{def:bky_inter}), we introduce a new lightweight interpolation rule, which is defined as follows.

\begin{definition}[$L$-length FRS interpolation of degree $\be$]\label{def:frs-inter}
Given any $\be\ge 1$, and decoding parameter $1\leq L\leq s$, we say that a polynomial $E(X)$ of degree at most $\be$ together with $L$ polynomials $\{A_i(X)\}_{i=1}^L$ each of degree at most $k-1+\be$ form a solution to $L$-length FRS interpolation of degree $\be$ if the following equations
\begin{equation}\label{eq:frs_interpolation}
A_{h}(\gamma^{i-1}\alpha_j)=y_{j,i+h-1}E\left(\gamma^{i-1}\alpha_j\right)\quad i\in\{1,2,\dots,s-L+1\},j\in\{1,2,\dots,n\}
\end{equation}
hold for $h \in \{1,2,\hdots,L\}$. 
\end{definition}

 Our new interpolation only consists of lightweight Berlekamp--Welch like equations, and the solution can be obtained by simply a single long division as we will illustrate later. However, in order to achieve the target radius $(1-R-\eps)n$, we choose a specific decoding parameter $1\leq L\leq s$. Moreover, in (\ref{eq:frs_interpolation}) we choose $(s-L+1)n$ ``proposed roots'' $\{\gamma^{i-1}\alpha_j\}_{i\in[s-L+1],j\in[n]}$ as in (\ref{eq:gw-inter}). \cref{def:frs-inter} serves as the ``degree-fixed'' version of our main FRS decoding algorithm, and plays the similar role as BKY interpolation \cref{def:bky_inter} does in Algorithm \ref{algo:BKY+ALEC}. In the next subsection we first describe our main FRS decoding algorithm and will explain its relation with \cref{def:frs-inter} after that.

\subsection{Description of Our Algorithm}
The following pseudo-code (See Algorithm \ref{algo:main}) describes our fast folded RS codes decoding algorithm that integrates our new ``degree-fixed'' FRS interpolation (\cref{def:frs-inter}) with \cite{alekhnovich2005linear} (see \cref{thm:alek}).

\begin{algorithm}\label{algo:main}
    \caption{Near-linear time folded RS decoding algorithm}
    \SetAlgoLined
    \KwIn{Parameters $n,k,q,s, L, e$\;
    Evaluation points $(\alpha_1, \hdots, \alpha_n) \in \F_q^n$\;
    Received message $\vec{\bf{y}} := ((y_{1,1}, \hdots, y_{1,s}), \hdots, (y_{n,1}, \hdots, y_{n,s})) \in (\F_q^s)^n$. }
    \KwOut{A polynomials $f(X)$ of degree at most $k-1$ or $\Fail$}
    {
    $Q_{0}(X):=\prod_{i=1}^{s-L+1}\prod_{j=1}^{n}\left(X-\gamma^{i-1}\alpha_j\right)$\;
    \ForEach{$h=1,\dots,L$}{
        $Q_h(X):= \Lagrange\Bigl(X; \{(\gamma^{i-1}\alpha_1, y_{1,i+h-1})\}_{i\in[s-L+1]}, \hdots, \{(\gamma^{i-1}\alpha_n, y_{n,i+h-1})\}_{i\in[s-L+1]}\Bigl)$\;
    }
    $E,C_1,\hdots, C_L := \mathsf{Interpolate}(Q_0, Q_1, \dots,Q_L)$\;
    $B(X):=Q_1(X)E(X)+Q_0(X)C_1(X)$\;
    $f(X):=B(X)/E(X)$\;
    \If{$f \not\in \F_q[X]$ \textbf{\emph{or}} $\deg{f}\ge k$}{
        \Return{$\Fail$}\;
        }
    \If{$\Delta(\vec{\mathbf{y}},\{(f(\alpha_i),f(\gamma\alpha_i),\hdots,f(\gamma^{s-1}\alpha_i))\}_{i=1}^n) > e$}{
        \Return{$\Fail$}\;
    }
    \Return{f}\;
    }
\end{algorithm}
\paragraph{Running Time of Our Algorithm \ref{algo:main}.} We can use divide-and-conquer trick plus fast polynomial multiplication to compute $Q_0(X)$ in time $O(\mathsf{M}(sn)\log{sn})$. Then, for each $h\in[L]$ we need to compute the Lagrange interpolation polynomial $Q_h(X)$. Each of them can be computed in time $O(\mathsf{M}(sn)\log{sn})$ by \cref{thm:fast-lagrange}. Then, from \cref{thm:alek} we can compute $E,C_1,\dots,C_L$ in time $O(L^{O(1)}\mathsf{M}(sn)\log{sn})$, and so in time $O(s^{O(1)}\mathsf{M}(n)\log{n})$ since $L\leq s$. $B(X)$ and $f(X)$ can be computed in time $O(\mathsf{M}(sn))$. Finally, we need to use fast multipoint evaluation as \cref{thm:evaluation} running in $O(\mathsf{M}(sn)\log{sn})$ time to get the codeword, and compare it to the received word to see whether they are close to each other. The total running time can be bounded by $O(s^{O(1)}\mathsf{M}(n)\log{n})\leq O(s^{O(1)}n\log^2{n}\log{\log{n}})$.
\subsection{Analysis of Algorithm \ref{algo:main} for the Semi-Adversarial Errors}

In this section, our aim is to prove the following main theorem of our analysis on Algorithm \ref{algo:main} for
the semi-adversarial errors.

\begin{theorem}\label{thm:algofrs_analysis}
Consider integers $s,L,n,k,e\ge 1$, where $L\leq s, e \leq \frac{L}{L+1}\left(n-\frac{k}{s-L+1}\right)$, and a generator $\gamma$ of $\mathbb{F}^{\times}_q$. For any $(s,\gamma)$-folded Reed-Solomon code $\mathsf{FRS}_{n,k}^{s,\gamma}(\alpha_1, \dots, \alpha_n)$ with appropriate evaluation points $\alpha_1,\hdots,\alpha_n\in\mathbb{F}_q\setminus\{0\}$, Algorithm~\ref{algo:main} with input parameters $(n,k,q,s,L,e)$ can uniquely decode any message polynomial $f$ of degree at most $k-1$ from {$(e_0,e)$-semi-adversarial} errors, with probability at least $1-e(s-L+1)/q$, for any $e_0\leq\min\left(e,n-e-\frac{k}{s-L+1}\right)$. The running time of Algorithm~\ref{algo:main} is at most $O(s^{O(1)}n\log^2{n}\log{\log{n}})$.
\end{theorem}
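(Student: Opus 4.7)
The plan is to mirror the three-step structure of the proof of \cref{thm:algoirs_analysis} for interleaved RS codes: (i) exhibit a one-dimensional subspace of ``intended'' solutions to $L$-length FRS interpolation of an appropriate degree $\be$, (ii) show this subspace coincides with what Algorithm~\ref{algo:main} outputs via Alekhnovich-style interpolation, and (iii) argue with high probability that no further interpolation solutions exist by bounding the rank of a block matrix. The new ingredient that makes step (iii) substantially harder than the IRS case is that a single random sub-symbol $y_{j,c}$ appears as a diagonal entry in up to $\min(L,s-L+1)$ different row-blocks of the interpolation matrix.

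For step (i), set $\bar e := e_A + e_R \le e$ and $\be := (s-L+1)\bar e$. The intended solutions are
\[
E(X) := \prod_{j \in K \cup K'} \prod_{i=1}^{s-L+1} \bigl(1 - X/(\gamma^{i-1}\alpha_j)\bigr), \qquad A_h(X) := f(\gamma^{h-1}X)\,E(X), \quad h \in [L],
\]
together with their scalar multiples. A case split on whether $(j,i+h-1)$ is corrupted verifies~(\ref{eq:frs_interpolation}): in the uncorrupted case $A_h(\gamma^{i-1}\alpha_j) = f(\gamma^{i+h-2}\alpha_j) E(\gamma^{i-1}\alpha_j) = y_{j,i+h-1} E(\gamma^{i-1}\alpha_j)$, and otherwise $E$ vanishes at $\gamma^{i-1}\alpha_j$ by construction. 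Step (ii) then follows from a bivariate polynomial-division argument identical to \cref{lem:irsequivalency} applied to $A_h(X) - Y\cdot E(X)$ with common roots $\{(\gamma^{i-1}\alpha_j, y_{j,i+h-1})\}_{i,j}$, together with the fact that $Q_0,Q_1,\ldots,Q_L$ in Algorithm~\ref{algo:main} encode precisely these roots.

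For step (iii), I would assemble the block matrix
\[
\bB := \begin{pmatrix}
\bM & & & & -\bN_1 \\
& \bM & & & -\bN_2 \\
& & \ddots & & \vdots \\
& & & \bM & -\bN_L
\end{pmatrix} \in \F_q^{L(s-L+1)n \times (Lk + (L+1)\be)},
\]
whose $(s-L+1)n$ rows of each block are indexed by $(i,j) \in [s-L+1]\times[n]$, with $\bM_{(i,j),\ell} = (\gamma^{i-1}\alpha_j)^{\ell-1}$ and $(\bN_h)_{(i,j),\ell} = y_{j,i+h-1}\,(\gamma^{i-1}\alpha_j)^{\ell}$. Full column rank of $\bB$ with probability at least $1-\be/q$ suffices by the linear-algebraic argument of \cref{thm:bky_uniqueness}. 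I would then apply the same two-stage cleanup as in Section~\ref{subsec:reduction_matrix}: column operations using matrices $\bF_h$ that encode the coefficients of $f(\gamma^{h-1}X)$ replace each $\bN_h$ with a matrix depending only on the entrywise errors $y_{j,i+h-1} - f(\gamma^{i+h-2}\alpha_j)$, and a Vandermonde-based invertible right-multiplication then makes the resulting error block supported on exactly the $\be$ rows associated with the $\bar e$ corrupted positions, yielding diagonal blocks $\bD'_h$.

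The main obstacle is the final symbolic-rank step. Substituting formal variables $Y_{j,c}$ for the $e_R$ random sub-errors to form $\bB^{\SYM}$, the new difficulty is that each $Y_{j,c}$ appears as a diagonal entry in every $\bD'_h$ with $h \in [\max(1,c-s+L),\min(L,c)]$, so distinct partitions $S_1 \sqcup \cdots \sqcup S_L$ of the $\be$ rightmost columns may produce the same monomial in the determinant expansion. To identify a monomial arising from a unique partition with non-zero coefficient, I would perform a greedy block-by-block construction: letting $T'_h$ denote the set of adversarially-non-zero columns first activated in block $h$, commit $T'_h$ to $S_h$, and then assign each random-error column $(i,j)$ to the smallest admissible block in its appearance range $[\max(1,c-s+L),\min(L,c)]$ (with $c = i+h-1$) that has not yet filled its row-count quota $m_h := n_h - k - \be$. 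The constraint $e_0 \le n - e - k/(s-L+1)$ guarantees enough ``clean'' rows that choices $n_h \ge |T'_h| + k + \be$ summing to $Lk + (L+1)\be$ exist; the greedy rule then forces any alternative partition producing the same monomial to coincide with the constructed one, while Vandermonde non-singularity on the complementary rows supplies a non-zero minor. An application of Schwarz--Zippel to $\det \widetilde{\bB}^{\SYM}$, whose total degree is at most $\be \le (s-L+1)e$, yields the claimed failure probability $e(s-L+1)/q$. The near-linear running time bound was already established when describing Algorithm~\ref{algo:main}.
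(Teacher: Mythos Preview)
Your three-step plan matches the paper's architecture, and steps (i)--(ii) are essentially correct. However, there are two genuine gaps in step (iii).

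\textbf{Gap 1: the error locator is too coarse.} You take $\be = (s-L+1)\bar e$ and let $E$ vanish at \emph{every} sub-position $\gamma^{i-1}\alpha_j$ with $j\in K\cup K'$ and $i\in[s-L+1]$. After the column-operation cleanup, the diagonal entry of $\bD'_h$ at column $(i,j)$ is $f(\gamma^{i+h-2}\alpha_j)-y_{j,i+h-1}$. For an adversarial position $j\in K$, it can happen that the $L$-length window $(y_{j,i},\dots,y_{j,i+L-1})$ is \emph{not} corrupted (only some other sub-symbol of $\vec y_j$ was changed); then this entry is zero simultaneously for all $h\in[L]$, and column $(i,j)$ of the last $\be$ columns of $\bB''$ is identically zero. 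This destroys full column rank outright, so your Schwarz--Zippel argument cannot start. The paper avoids this by defining the adversarial error set $P$ at the granularity of $L$-length windows rather than full positions (their equation defining $P$), setting $e_A:=|P|$ and $\be:=e_A+e_R$; by construction every adversarial column then lies in some $T_h$, i.e.\ $T_1\cup\cdots\cup T_L=[e_A]$.

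\textbf{Gap 2: the monomial-uniqueness step is underspecified and partly incoherent.} For a random-error \emph{column} $(i,j)$ there is no ``appearance range'': it may be placed in any block $h\in[L]$, producing the variable $Y_{j,i+h-1}$; the range $[\max(1,c-s+L),\min(L,c)]$ you write down is a \emph{variable}-centric notion (for fixed $Y_{j,c}$), and defining $c=i+h-1$ while choosing $h$ is circular. More importantly, the bare assertion that the greedy rule ``forces any alternative partition producing the same monomial to coincide'' is exactly the nontrivial content. The paper's construction processes the random columns in a fixed \emph{decreasing-weight} order (weight $\mathrm{wt}(i,j)=i+(s+1)j$) and assigns each to the \emph{largest} $h$ with remaining capacity $\ell_h$. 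Uniqueness then follows from a careful argument: any valid alternative partition $\{U'_h\}$ satisfies $\sum_{t\ge h}|U'_t|\le\sum_{t\ge h}|U_t|$ (this is where the one-sided prefix inequality from the non-zero-coefficient requirement enters), so at the largest-weight column where the partitions differ the alternative must use a strictly smaller $h$, and a weight comparison shows no column $(i',j')\le(i^*,j^*)$ can supply the missing factor $Y_{j^*,i^*+d[i^*,j^*]-1}$. Your ``smallest block'' rule with no processing order specified does not yield this extremality, and you have not indicated how you would establish uniqueness.
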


The remaining part of this section serves for proving \cref{thm:algofrs_analysis}. We will fix $s,L,n,k,e_0,e,\gamma$ satisfying the conditions in \cref{thm:algofrs_analysis}. We will also fix the message polynomial $f(X)\in\mathbb{F}_q[X]$ with degree at most $k-1$ from the $(s,\gamma)$-folded RS code $\mathsf{FRS}_{n,k}^{s,\gamma}(\alpha_1, \dots, \alpha_n)$ with appropriate evaluation points $\alpha_1,\hdots,\alpha_n\in\mathbb{F}_q\setminus\{0\}$. Most importantly, the received word $\vec{y}:=(\vec{y}_1,\vec{y}_2,\hdots,\vec{y}_n)\in(\mathbb{F}^s_q)^n$, where $\vec{y}_i:=(y_{i,1},y_{i,2},\hdots,y_{i,s})\in\mathbb{F}^s_q$ will always be sampled from the semi-adversarial error models considered in \cref{def:semi-adv} with parameters $e_0,e$.
\subsubsection{Reducing Algorithm \ref{algo:main} to $L$-length FRS Interpolation of Fixed Degree}
Recall that Algorithm \ref{algo:BKY+ALEC} is actually an efficient and automatic way to find the exact degree for fixed-degree BKY interpolation (See \cref{def:bky_inter}). Here, the mission of Algorithm \ref{algo:main} is to find an appropriate degree $\be$ so that $L$-length FRS interpolation of degree $\be$ (See \cref{def:frs-inter}) succeeds with high probability. Therefore, we will first identify a specific $\be$ and see how it works well.

\paragraph{Choose Interpolation Degree and Error Set.} First let's identify the set of $L$-length blocks that are truly corrupted adversarially. From \cref{def:semi-adv}, there exists $I\subseteq [n]$ with $|I|=n-e_0$ such that all adversarial errors occur in $[n]\setminus I$. Therefore, we can identify the set of adversarial $L$-length ``inner-block'' errors. Formally, the elements in this set indexed by their ``head indices'' are defined as follows. 
\begin{equation}\label{eq:frs-adv-set}
P:=\Bigl\{(i,j)\in[s-L+1]\times([n]\setminus I):(y_{j,i},\dots,y_{j,i+L-1})\neq (f(\gamma^{i-1}\alpha_j),\dots,f(\gamma^{i+L-2}\alpha_j))\Bigl\}.
\end{equation}
We define $e_A:=|P|$ as the number of such $L$-length ``inner-block'' adversarial errors.

We need to also include $L$-length random errors. By \cref{def:semi-adv}, there must be some $K'=I\setminus J$ with $|K'|=e'\leq e-e_0$ such that all position in $K'$ experience random errors. Let $P':=[s-L+1]\times K'$ and $e_R=|P'|=e'(s-L+1)$, we know for all $(i,j)\notin P\cup P'$, there must be
\begin{equation}\label{eq:frs-agree}
(y_{j,i},\dots,y_{j,i+L-1})= (f(\gamma^{i-1}\alpha_j),\dots,f(\gamma^{i+L-2}\alpha_j)).
\end{equation}
This tells us all corrupted $L$-length errors occur in $L$-length segments with indices in $P\cup P'$. Therefore, let $\be:=e_A+e_R$, we can consider the error locator 
\[
E(X):=\prod_{(i,j)\in P\cup P'}\left(1-\frac{X}{\gamma^{i-1}\alpha_j}\right)
.\]

Note that $E(X)$ has degree $\be$. We define $L$ polynomials $A_{1}(x),A_{2}(x),\cdots,A_{L}(x)$ with degree at most $k-1+\be$ as

\begin{equation}\label{eq:frs-defsol}
A_{h}\left(\frac{X}{\gamma^{h-1}}\right)=f\left(X\right)E\left(\frac{X}{\gamma^{h-1}}\right) \quad h\in\{1,2,\cdots,L\}
\end{equation}

\paragraph{``Intended'' Solutions to FRS Interpolation.} As we have identified the ``intended'' interpolation degree $\be$, next we aim to identify a specific one-dimensional subspace of ``intended'' solutions to FRS interpolations. This resorts to the following lemma.
\begin{lemma}
The set of solutions to $L$-length FRS interpolations of degree $\be=e_A+e_R$ (See \cref{def:frs-inter}) has dimension at least one.
\end{lemma}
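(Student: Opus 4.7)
The plan is to mirror the strategy of \cref{lem:BKY-semi-adv-comp}: exhibit a concrete non-zero solution $(E(X), A_1(X), \ldots, A_L(X))$, verify the degree bounds and all interpolation equations, and conclude dimension at least one by closure under scalar multiplication. The explicit candidate is already handed to us in the paragraphs preceding the lemma: take $E(X) := \prod_{(i,j) \in P \cup P'}(1 - X/(\gamma^{i-1}\alpha_j))$ and define $A_h$ via (\ref{eq:frs-defsol}), i.e., $A_h(X/\gamma^{h-1}) = f(X) E(X/\gamma^{h-1})$.

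First I would dispatch the degree checks. By construction $\deg E = |P \cup P'| = e_A + e_R = \be$. Substituting $Y = X/\gamma^{h-1}$ in (\ref{eq:frs-defsol}) gives the equivalent form $A_h(Y) = f(\gamma^{h-1} Y) E(Y)$, so $\deg A_h \le (k-1) + \be$, as required by \cref{def:frs-inter}. Note also that $E \neq 0$ (indeed $E(0)=1$), so the proposed tuple is non-zero.

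Next I would verify the interpolation equations (\ref{eq:frs_interpolation}) for every $i \in [s-L+1]$, $j \in [n]$, $h \in [L]$. Evaluating the rewritten form of $A_h$ at $\gamma^{i-1}\alpha_j$ yields
\[
A_h(\gamma^{i-1}\alpha_j) \;=\; f(\gamma^{i+h-2}\alpha_j)\, E(\gamma^{i-1}\alpha_j).
\]
Split into two cases indexed by whether $(i,j) \in P \cup P'$. If $(i,j) \in P \cup P'$, then $\gamma^{i-1}\alpha_j$ is by construction a root of $E$, so both sides of (\ref{eq:frs_interpolation}) are zero. Otherwise, by (\ref{eq:frs-agree}) applied to the $L$-length block starting at index $(i,j)$, we have $y_{j, i+h-1} = f(\gamma^{i+h-2}\alpha_j)$ for every $h \in [L]$, and the equation collapses to an identity. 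Here the appropriateness assumption on $\alpha_1, \ldots, \alpha_n$ (ensuring the $sn$ points $\gamma^{t}\alpha_j$ are distinct) is what makes the block-by-block indexing unambiguous and the degree of $E$ come out to exactly $\be$.

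Finally, since (\ref{eq:frs_interpolation}) is a homogeneous linear system in the coefficients of $(E, A_1, \ldots, A_L)$, for every $\beta \in \F_q$ the scaled tuple $(\beta E, \beta A_1, \ldots, \beta A_L)$ is again a valid solution, which exhibits a one-dimensional subspace of solutions and establishes the lemma. I do not anticipate a genuine obstacle here: the only subtlety is keeping track of the inner-block shift by $\gamma^{h-1}$ when rewriting $A_h$, and making sure the root set of $E$ matches exactly those $(i,j)$ for which the corresponding interpolation constraint might fail.
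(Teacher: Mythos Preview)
Your proposal is correct and follows essentially the same approach as the paper's proof: exhibit the explicit tuple $(E,A_1,\ldots,A_L)$ defined just before the lemma, verify the interpolation equations by the same two-case split on whether $(i,j)\in P\cup P'$, and conclude via scalar closure. You add a few details the paper leaves implicit (the rewritten form $A_h(Y)=f(\gamma^{h-1}Y)E(Y)$, the degree check, and noting $E(0)=1$), but the argument is the same.
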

\begin{proof}
We now verify that all the interpolation equations in (\ref{eq:frs_interpolation}) are satisfied by the constructed solution $E(X),A_1(X),$ $A_2(X),\dots,A_L(X)$ introduced above. Let us fix any $h\in[L], i\in[s-L+1],j\in[n]$. There are two cases to consider, as discussed below.
\begin{itemize}
    \item[Case 1.] If $(i,j)\notin P\cup P'$, from (\ref{eq:frs-agree}) we know $y_{j,i+h-1}=f(\gamma^{i+h-2}\alpha_j)$. Then, there must be $A_h(\gamma^{i-1}\alpha_j)=y_{j,i+h-1}E(\gamma^{i-1}\alpha_j)$ by (\ref{eq:frs-defsol}). Therefore, the corresponding interpolation equation holds.
    \item[Case 2.] If $(i,j)\in P\cup P'$, by the construction we know $\gamma^{i-1}\alpha_j$ is a root of $E(X)$, and $\gamma^{i-1}\alpha_j$ is also a root of  $A_h\left(X\right)$ by its definition. Therefore, both sides of the interpolation equation are zero so it still holds.
\end{itemize}
Note that for any $\beta\in\mathbb{F}_q$, $\{\beta A_1(X),\dots,\beta A_s(X), \beta E(X)\}$ is still a valid solution. Thus, the set of solutions to $L$-length FRS interpolations with degree $\be$ has dimension at least one.
\end{proof}

Moving forward, for a fixed $\vec{y}$, we let $\mathrm{Sol_{FRS}}:=\{(\beta A_1(X),\dots,\beta A_s(X), \beta E(X)):\beta\in\mathbb{F}_q\}$ be the subspace of ``intended'' solutions to $L$-length FRS interpolations of degree $\be$.

\paragraph{Equivalency Between Solution Spaces.} We have shown an ``intended'' set of solutions $\mathrm{Sol_{FRS}}$ in the solution space of $L$-length FRS interpolation of degree $\be$. In order to reduce the solution space of Algorithm \ref{algo:main} to it, we need the following equivalency lemma in degree-bounded scenarios, which is similar to \cref{lem:irsequivalency}.
\begin{lemma}\label{lem:frsequivalency}
Assume $k>1$. Each non-zero solution $A_1(X),\dots, A_L(X), E(X)$ to $L$-length FRS interpolation of degree $\be$ satisfies
\begin{compactitem}
\item[(1)] There exists unique $C'_1(X),\dots,C'_L(X)$ such that $A_h(X)=Q_h(X)E(X)+Q_0(X)C'_h(X)$ holds for all $h\in[L]$.

\item[(2)] $0<\max\left(\deg A_1(X),\dots,\deg A_L(X),\deg X^{k-1}E(X)\right)\leq k-1+\be$.
\end{compactitem}
The backward direction also holds. Namely, any possible solution $B_1(X),\dots,B_L(X),E(X)$ of Algorithm~\ref{algo:main} satisfying $0<\max\left(\deg B_1(X),\dots,\deg B_L(X),\deg X^{k-1}E(X)\right)\leq k-1+\be$ is a non-zero solution to $L$-length FRS interpolation of degree $\be$.
\end{lemma}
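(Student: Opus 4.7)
The plan is to adapt the argument of Lemma \ref{lem:irsequivalency} essentially verbatim, substituting the evaluation points $\{\alpha_i\}_{i\in[n]}$ with the folded shifted points $\{\gamma^{i-1}\alpha_j\}_{i\in[s-L+1],\,j\in[n]}$. The key observation is that the $L$-length FRS interpolation equations (\ref{eq:frs_interpolation}) say precisely that for each fixed $h\in[L]$, every pair $(X=\gamma^{i-1}\alpha_j,\;Y=y_{j,i+h-1})$ with $i\in[s-L+1]$, $j\in[n]$ is a common zero of the bivariate polynomial $A_h(X)-Y\cdot E(X)\in\mathbb{F}_q[X,Y]$. By construction of Algorithm \ref{algo:main}, these are exactly the $(s-L+1)n$ points at which $Q_0(X)$ vanishes and at which $Q_h(X)$ agrees (via Lagrange interpolation) with $y_{j,i+h-1}$. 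Since the evaluation points are distinct by assumption, the appropriate existence/uniqueness of polynomial division carries over.

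For the forward direction, I would fix $h\in[L]$ and divide $A_h(X)-Y\cdot E(X)$ by $Y-Q_h(X)$ as polynomials in $(\mathbb{F}_q[X])[Y]$, writing
\[
A_h(X)-Y\cdot E(X) = \bigl(Y-Q_h(X)\bigr)G_0(X) + R_h(X),
\]
and then, since $R_h(X)$ vanishes at all the roots of $Q_0(X)$, further divide $R_h(X) = G_h(X)\,Q_0(X)$. Comparing coefficients of $Y$ in the overall decomposition yields $G_0(X)=-E(X)$, and setting $C'_h(X):=G_h(X)$ gives the desired identity $A_h(X)=Q_h(X)E(X)+Q_0(X)C'_h(X)$. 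Uniqueness of $C'_h$ follows from the standard uniqueness of quotient/remainder in polynomial division. Part (2) is immediate from \cref{def:frs-inter}: the upper bound $k-1+\bar{e}$ is built into the definition, and positivity of the maximum follows from ruling out the degenerate case---if $E\equiv 0$ and every $A_h$ were a (possibly zero) constant, then the interpolation equations would force a nonzero constant $A_h$ to vanish at all $(s-L+1)n$ points $\gamma^{i-1}\alpha_j$, which is impossible, forcing the entire tuple to be zero and contradicting non-triviality.

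For the backward direction, I would just directly verify: given $B_h(X):=Q_h(X)E(X)+Q_0(X)C_h(X)$ output by Algorithm \ref{algo:main} together with $E(X)$, evaluate at $X=\gamma^{i-1}\alpha_j$. By construction $Q_0(\gamma^{i-1}\alpha_j)=0$, and by the Lagrange property $Q_h(\gamma^{i-1}\alpha_j)=y_{j,i+h-1}$, so $B_h(\gamma^{i-1}\alpha_j)=y_{j,i+h-1}\cdot E(\gamma^{i-1}\alpha_j)$, which is exactly (\ref{eq:frs_interpolation}). Non-triviality and the degree constraint on the tuple $(B_1,\dots,B_L,E)$ are assumed in the hypothesis, so these transfer directly.

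I do not anticipate a real obstacle here: the entire argument is a direct mirror of \cref{lem:irsequivalency}, and the only slightly delicate bookkeeping is verifying that the folded points $\{\gamma^{i-1}\alpha_j\}$ are indeed distinct (which is the ``appropriate evaluation points'' assumption in the statement of \cref{thm:algofrs_analysis}) so that the Lagrange interpolation and vanishing-set arguments go through. The positivity part of (2) is the only place where one must be momentarily careful to exclude a degenerate tuple, but the same reasoning used in the IRS case adapts without change once the number of interpolation points $(s-L+1)n$ is used in place of $n$.
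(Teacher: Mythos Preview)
Your proposal is correct and follows essentially the same approach as the paper's proof: both argue the forward direction by writing $A_h(X)-YE(X)=(Y-Q_h(X))G_0(X)+G_h(X)Q_0(X)$, comparing $Y$-coefficients to identify $G_0=-E$, and then verifying the backward direction by direct evaluation at the points $\gamma^{i-1}\alpha_j$. If anything, you are slightly more careful than the paper in justifying the positivity clause of (2) and the uniqueness of $C'_h$.
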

\begin{proof}
We first prove the forward direction: Fix any solution $A_1(X),\dots,A_L(X),E(X)$ to $L$-length FRS interpolation of degree $\be$. For any $h\in[L]$, we know all of $\left\{(X=\gamma^{i-1}\alpha_j, Y=y_{j,i+h-1})\right\}_{i\in[s-L+1],j\in[n]}$ are roots of  $A_h(X)-YE(X)\in\mathbb{F}_q[X,Y]$. From the similar argument to the proof of \cref{lem:irsequivalency}, there must exist unique $G_0(X)=-E(X), G_h(X)\in\mathbb{F}_q[X]$ such that
\[
A_h(X)-YE(X)=(Y-Q_h(X))G_0(X)+G_h(X)Q_0(X)
\]
so it implies
\[
A_h(X)=-Q_h(X)G_0(X)+G_h(X)Q_0(X)=Q_h(X)E(X)+Q_0(X)C'_h(X)
\]
If we define $C'_h(X)=G_h(X)$, it is easy to check the max-degree requirement given degree requirement of $A_1(X),\dots,A_L(X),X^{k-1}E(X)$ since they are a solution to $L$-length FRS interpolation of degree $\be$. We finish the proof for forward direction.

For the backward direction, for any possible solution $\big\{B_h(X)=Q_h(X)E(X)+Q_0(X)C_h(X)\big\}_{h\in[L]}$, $ E(X)$ of Algorithm~\ref{algo:main} satisfying $0<\max\left(\deg B_1(X),\dots,\deg B_L(X),\deg X^{k-1}E(X)\right)\leq k-1+\be$, it is easy to verify that
\[
B_h(\gamma^{i-1}\alpha_j)=y_{j,i+h-1}E(\gamma^{i-1}\alpha_j)
\]
holds for any $h\in[L],i\in[s-L+1],j\in[n]$. By definition, $B_1(X),\dots,B_L(X),E(X)$ must also be a non-zero solution of $L$-length FRS interpolation of degree $\be$.
\end{proof}

\subsubsection{Reducing \cref{thm:algofrs_analysis} to a Matrix Rank Condition}

Since we have shown that the solutions spaces of Algorithm~\ref{algo:main} and $L$-length FRS interpolation of degree $\be$ are equivalent, we now formally show how analyzing the success probability of fixed-degree FRS interpolation lead to \cref{thm:algofrs_analysis} in the semi-adversarial model. In accordance with the notation in Section~\ref{sec:notation}, we define the following matrices 
\begin{align*}
    \bM &:= \Vand_{[0,k+\be-1]}(\{\gamma^{i-1}\alpha_j\}_{i\in[s-L+1],j\in[n]}) \in \F_{q}^{(s-L+1)n \times (k+\be)}\\
    \bN &:= \Vand_{[1,\be]}(\{\gamma^{i-1}\alpha_j\}_{i\in[s-L+1],j\in[n]}) \in \F_q^{(s-L+1)n \times \be}\\
    \forall h \in [L], \bD_h &:= \Diag(\{y_{j,i+h-1}\}_{i\in[s-L+1],j\in[n]}) \in \F_q^{(s-L+1)n \times (s-L+1)n}\\ 
    \forall h \in [L], \bN_h &:= \bD_h \bN \in \F_q^{(s-L+1)n \times \be}.
\end{align*}
Similar to \cref{sec:reducing_irs}, we consider the block matrix
\[
    \bB := \begin{pmatrix}
    \bM & & & & -\bN_1\\
    & \bM & & & -\bN_2\\
    & & \ddots & & \vdots\\
    & & & \bM & -\bN_L
    \end{pmatrix}.
\]
We will reduce the correctness of Algorithm~\ref{algo:main} to proving the full column rankness of this block matrix 
$\bB$. We formalize this ``matrix rank condition'' for 
$\bB$ in the following theorem.
\begin{theorem}\label{thm:frs110}
     With probability at least $1-{\be}/{q}$, the block matrix $\bB\in\F_q^{(s-L+1)Ln\times (Lk+(L+1)\be)}$ has full column rank.
\end{theorem}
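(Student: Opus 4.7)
The plan is to mirror the structure of the proof of \cref{thm:110} for the IRS case, but with substantial additional complexity arising from the $L$-length overlapping structure of folded RS codes. In contrast to the IRS setting, where each received symbol appears in exactly one block row of the system, here each random error $y_{j,i^*}$ can appear in up to $L$ different block rows of $\bB$---namely, in position $(i^* - h + 1, j)$ of block $h$ for every $h\in[L]$ with $i^* - h + 1 \in [s-L+1]$. This overlap is what makes the combinatorial argument for identifying a monomial with a unique, nonzero expansion significantly more delicate, and it is the main obstacle.

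First, I would perform column operations on $\bB$ to expose the error structure in the rightmost column block. For each $h\in[L]$, one can construct a matrix $\bF_h\in\F_q^{(k+\be)\times\be}$, built from the coefficients of the true message $f$ together with appropriate powers of $\gamma$, that satisfies $\bM\bF_h = \Diag\bigl(\{f(\gamma^{i+h-2}\alpha_j)\}_{(i,j)}\bigr)\bN$. Adding $\bM\bF_h$ (extracted from the $h$-th column block) into the last column block turns $-\bN_h$ into $\bD'_h\bN$, where $\bD'_h := \Diag\bigl(\{f(\gamma^{i+h-2}\alpha_j) - y_{j,i+h-1}\}_{(i,j)}\bigr)$ encodes only the error signal. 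These operations preserve the rank of $\bB$.

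Next, I would localize the errors. Define $R\subseteq [s-L+1]\times[n]$ as the set of positions $(i,j)$ for which $\bD'_h$ could be nonzero at $(i,j)$ for some $h\in[L]$. A careful accounting with $P$ and $P'$ shows $|R|\le e_A + e_R = \be$: for $j\in [n]\setminus I$, the positions $(i,j)\in R$ are exactly those $(i,j)\in P$, while each $j\in K'$ contributes all $s-L+1$ windows. Since $\bN$ is a Vandermonde matrix with distinct evaluation points (by appropriateness of $\alpha_1,\dots,\alpha_n$), its restriction to the rows in $R$ is invertible; pick $\bP\in\F_q^{\be\times\be}$ with $(\bN\bP)_{R,[\be]} = \bI_{\be}$. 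After applying $\bP$ to the last $\be$ columns, the rightmost block column of the transformed matrix $\bB''$ becomes row-sparse: in each row block $h$, only the rows indexed by $R$ can be nonzero, and there they coincide with $\bD'_h|_R$. Substituting indeterminates $Y_{j,i^*}$ for each random $y_{j,i^*}$ with $j\in K'$ yields a symbolic matrix $\bB^{\SYM}$ whose entries are polynomials of degree at most $1$.

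To argue $\bB^{\SYM}$ has full column rank, I would pick block sizes $n_1,\dots,n_L$ with each $n_h\ge k+\be$ and $\sum_h n_h = Lk + (L+1)\be$, and include all rows of $R$ in every block, yielding a square submatrix $\widetilde{\bB}^{\SYM}$. Expanding $\det\widetilde{\bB}^{\SYM}$ along the last $\be$ columns reduces the task to constructing a monomial in the $Y_{j,i^*}$'s admitting exactly one nonzero contribution from a partition $S_1\sqcup\cdots\sqcup S_L = [\be]$, where $S_h$ indexes the columns whose nonzero entry is taken from block $h$. I would design a greedy procedure iterating $h=1,\dots,L$: at step $h$, first assign unexhausted positions of $R$ corresponding to adversarial errors, then fill in random-error positions while insisting that each indeterminate $Y_{j,i^*}$ is selected at most once across blocks (even though it occurs in up to $L$ of them), all while maintaining a nonzero Vandermonde minor on the remaining rows of block $h$. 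The crux is a structural lemma showing that any alternative partition $S'_1\sqcup\cdots\sqcup S'_L$ reproducing the same monomial with nonzero coefficient must coincide with the greedy one block-by-block; this combines the fact that $R$ must lie inside every block's selection with an inductive matching argument on where each variable can be realized. Once uniqueness and a nonzero product of Vandermonde minors are established, $\det\bB^{\SYM}$ is a nonzero polynomial of degree at most $\be$ in the $Y_{j,i^*}$'s, and the Schwarz--Zippel Lemma yields $\det\bB\neq 0$ with probability at least $1-\be/q$.
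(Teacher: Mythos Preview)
Your high-level structure matches the paper's proof closely: column operations via the $\bF_h$ matrices to expose errors, localizing to $\be$ error rows, applying an invertible $\bP$ to diagonalize, passing to a symbolic matrix, selecting row counts $n_h$ to square the matrix, and finishing with Schwarz--Zippel. You also correctly identify the central difficulty, namely that each random symbol $Y_{j,u}$ occurs in up to $L$ distinct diagonal positions of $\bB^{\SYM}$.

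However, the monomial-identification step is where essentially all the new work lives, and your sketch does not supply a method that clearly succeeds. Two specific issues:

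\textbf{(1) The greedy order is wrong, and ``use each variable once'' is not sufficient for uniqueness.} You propose iterating over the blocks $h=1,\dots,L$ and filling each block while never reusing a variable. But the same variable $Y_{j,u}$ can be produced by many different (position,\,block) pairs $\bigl((u-h+1,j),\,h\bigr)$, so a multilinear monomial $M$ does not by itself pin down which pair generated each factor. The paper instead iterates over \emph{positions} $(i,j)$ in decreasing order of a weight function $\mathrm{wt}(i,j)=i+(s+1)j$ and, at each position, assigns it to the \emph{largest} block $h$ with remaining capacity. This produces a ``lexicographically largest'' monomial, and uniqueness is then proved by a contradiction argument using the weight ordering (Claim~4.9 in the paper).

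\textbf{(2) The suffix-sum constraint is missing.} A key observation (Claim~4.6) is that \emph{any} partition $\{S_h\}$ contributing a nonzero term must satisfy $\sum_{t\ge h}|S_t\cap[e_A+1,\be]|\le \sum_{t\ge h}(m_t-|T'_t|)$ for all $h$. The paper's greedy construction saturates these inequalities, and the uniqueness proof uses them to force any alternative partition to agree with the greedy one on all positions of larger weight, leading to a contradiction at the first disagreement. Your sketch does not mention this constraint, and without it the ``inductive matching argument'' you allude to has no traction.

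A smaller point: you require only $n_h\ge k+\be$, but accommodating the adversarial sets $T'_h$ needs the stronger $n_h\ge |T'_h|+k+\be$; the paper verifies this is feasible using $e_A\le (s-L+1)e_0\le (s-L+1)n-k-\be$.
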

We postpone the proof of the above \cref{thm:frs110} to \cref{subsec:frs-reduction_matrix}. For the rest of this subsection, we will prove \cref{thm:algofrs_analysis} conditioned on the following theorem, which is a corollary of \cref{thm:frs110} by solving the $L$-length FRS interpolation of degree $\be$ as described in (\ref{eq:frs_interpolation}).
\begin{theorem}\label{thm:frs_uniqueness}
With probability at least $1-{\be}/{q}$, all solutions to $L$-length FRS interpolation defined in \cref{def:frs-inter} of degree $\be$ are contained in $\mathrm{Sol_{FRS}}$.
\end{theorem}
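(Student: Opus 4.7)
The plan is to mirror the argument used to derive \cref{thm:bky_uniqueness} from \cref{thm:110}, since the matrix $\bB$ set up in the FRS section plays exactly the role that its IRS counterpart did. The target statement asserts uniqueness (up to the known one-dimensional subspace $\mathrm{Sol_{FRS}}$) of solutions to the $L$-length FRS interpolation of degree $\be$, so I would simply recast the interpolation equations (\ref{eq:frs_interpolation}) as a homogeneous linear system whose coefficient matrix is the augmentation of $\bB$ by a single ``constant-term'' column, and then invoke \cref{thm:frs110}.

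First I would parameterize the unknowns. For each $h\in[L]$, let $\vec{a}_h=(a^{(h)}_0,\dots,a^{(h)}_{k-1+\be})^T\in\F_q^{k+\be}$ encode the coefficients of $A_h(X)$, and let $\vec{b}=(b_1,\dots,b_{\be},b_0)^T\in\F_q^{\be+1}$ encode $E(X)=b_0+\sum_{\ell=1}^{\be}b_\ell X^\ell$ with the constant term deliberately placed last. Stack these into $\vec{\mathbf{s}}=(\vec{a}_1,\dots,\vec{a}_L,\vec{b})$. Each interpolation equation $A_h(\gamma^{i-1}\alpha_j)=y_{j,i+h-1}E(\gamma^{i-1}\alpha_j)$ indexed by $(h,i,j)\in[L]\times[s-L+1]\times[n]$ is linear in $\vec{\mathbf{s}}$: the coefficient of $a^{(h)}_\ell$ is $(\gamma^{i-1}\alpha_j)^\ell$, the coefficient of $b_\ell$ for $\ell\ge 1$ is $-y_{j,i+h-1}(\gamma^{i-1}\alpha_j)^\ell$, and the coefficient of $b_0$ is $-y_{j,i+h-1}$. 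Reading these off row by row, the contributions from $\vec{a}_1,\dots,\vec{a}_L$ together with the columns corresponding to $b_1,\dots,b_{\be}$ assemble precisely into the block matrix $\bB$ defined just before \cref{thm:frs110}, while the $b_0$-coefficients form the column vector
\[
\bY=\big(\{-y_{j,i+h-1}\}_{(h,i,j)\in[L]\times[s-L+1]\times[n]}\big)^T\in\F_q^{L(s-L+1)n}.
\]
Thus the full interpolation system collapses to $(\bB\mid\bY)\,\vec{\mathbf{s}}=\mathbf{0}$.

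Next I would apply \cref{thm:frs110}. With probability at least $1-\be/q$, $\bB$ has full column rank $L(k+\be)+\be$, so the augmented matrix $(\bB\mid\bY)$ has rank at least $L(k+\be)+\be$, and hence its kernel has dimension at most one. Because the construction preceding the theorem exhibits $\mathrm{Sol_{FRS}}$ as a one-dimensional subspace of valid solutions to the very same system, the kernel coincides with $\mathrm{Sol_{FRS}}$. Every solution to the $L$-length FRS interpolation of degree $\be$ therefore lies in $\mathrm{Sol_{FRS}}$, as claimed.

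I do not expect any genuine obstacle here, since the heavy lifting has already been absorbed into \cref{thm:frs110}; this theorem is purely the linear-algebraic packaging step. The only point requiring slight care is the bookkeeping on the placement of the $b_0$-column outside $\bB$ (so that the clean rank count on $\bB$ alone suffices), exactly as in the IRS version. The success probability bound $1-\be/q$ is inherited verbatim from \cref{thm:frs110}.
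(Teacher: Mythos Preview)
Your proposal is correct and follows essentially the same approach as the paper: both recast the interpolation equations \eqref{eq:frs_interpolation} as the homogeneous system $(\bB\mid\bY)\vec{\mathbf s}=0$, invoke \cref{thm:frs110} to conclude $\bB$ has full column rank with probability at least $1-\be/q$, and deduce that the kernel is at most one-dimensional and hence equals $\mathrm{Sol_{FRS}}$. The only cosmetic difference is the sign you put on the entries of $\bY$, which is immaterial for the rank/kernel argument.
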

\begin{proof}
Recall the $L$-length FRS interpolation in \cref{def:frs-inter}, for any $h\in[L]$, we define the coefficients of these polynomials as
\[A_h(X):=\sum_{i=0}^{k-1+\be}a^{(h)}_iX^i,\]
and coefficients of $E(X)$ as
\[E(X):=b_0+\sum_{i=1}^{\be}b_iX^i.\]
We define vectors $\vec{b}\in\F_q^{\be+1},\vec{a}_h\in\F_q^{k+\be},\vec{y'}_h\in\F_q^{(s-L+1)n}$ by
\begin{align*}
\vec{b} &:=(b_1,\dots, b_{\be}, b_0)^T,\\
\vec{a}_h &:=(a^{(h)}_0,a^{(h)}_1,\dots,a^{(h)}_{k-1+\be})^T\text{, and}\\
\vec{y'}_h &:=(\{y_{1,i+h-1}\}^{s-L+1}_{i=1},\dots,\{y_{n,i+h-1}\}^{s-L+1}_{i=1})^T.
\end{align*}
Meanwhile, we define the matrix ${\bf Y}\in\F_q^{(s-L+1)Ln\times 1}$ by
\[{\bf Y}:=(\vec{y'}_1,\vec{y'}_2,\hdots,\vec{y'}_L),\]
and the solution vector $\vec{\bf s}\in\F_q^{L(k+\be)+\be+1}$ by
\[\vec{\bf s}:=(\vec{a}_1,\vec{a}_2,\hdots,\vec{a}_L,\vec{b}).\]
By the basic calculations, we can observe that solving these $L$-length FRS interpolation equations (i.e. output $A_1(X),\hdots,A_L(X),E(X)\in\F_q[X]$) in (\ref{eq:frs_interpolation}) is equivalent to solving the solution vector $\vec{\bf s}$ of the following linear system that
\[
({\bB}|{\bf Y})\cdot \vec{{\bf s}}=0
\]
If the block matrix $\bB$ has full column rank, then the original interpolation matrix $({\bB}|{\bf Y})$ has a kernel
with dimension at most 1. Since our constructed one-dimensional solutions $\mathrm{Sol_{FRS}}$ are valid, they will be precisely the complete set of valid solutions.
By \cref{thm:frs110}, we know that,
with probability at least $1-\be/q$, the block matrix $\bB$ has full column rank. Therefore, with probability at least $1-\be/q$, all solutions to the $L$-length FRS interpolation of degree $\be$ are contained in $\mathrm{Sol_{FRS}}$.
\end{proof}

Now, we are ready to prove \cref{thm:algofrs_analysis}, which is the main result of this section.
\paragraph{Proof of \cref{thm:algofrs_analysis} conditioned on \cref{thm:frs_uniqueness}.}
The solution space $\mathrm{Sol_{FRS}}$ contains a non-zero solution to $L$-length FRS interpolation of degree $\be$. By \cref{lem:frsequivalency}, the minimum max-degree solution $B_1(X),\dots,B_L(X),X^{k-1}E(X)$ found in Algorithm~\ref{algo:main} must satisfy 
\[0<\max\left(\deg B_1(X),\dots,\deg B_L(X),\deg X^{k-1}E(X)\right)\leq k-1+\be.\]
Moreover, by the backward direction of \cref{lem:frsequivalency}, all possible solutions found in Algorithm~\ref{algo:main} with max-degree at most $k-1+\be$ are also solutions to $L$-length FRS interpolation of degree $\be$. Therefore, by \cref{thm:frs_uniqueness}, with probability at least $1-\be/q\ge 1-e(s-L+1)/q$, the non-zero solution found in Algorithm~\ref{algo:main} is in $\mathrm{Sol_{FRS}}$. Since all non-zero solutions in $\mathrm{Sol_{FRS}}$ generate $f(X)$ in Algorithm~\ref{algo:main} by its definition, it implies that Algorithm~\ref{algo:main} will output the correct answer with probability at least $1-e(s-L+1)/q$.
\subsubsection{Bounding Matrix Rank: Proof of \cref{thm:frs110}} \label{subsec:frs-reduction_matrix}
The only remaining task is to prove \cref{thm:frs110}. The basic framework of the proof is similar to \cref{subsec:reduction_matrix}. However, in folded RS codes case there is one more technical difficulty. Concretely, in our previous proof for Interleaved RS codes there are no repeated formal variables in the interpolation matrix, which facilitates our arguments on the monomial uniqueness. However, there are repeated formal variables when we are addressing folded RS codes, so we need to choose the monomial in a more clever way. We will illustrate it in detail later.   

We first perform a series of column operations on $\bB$ to better reveal the underlying errors in the rightmost block column. Recall that $f \in \F_q[X]$ of degree at most $k-1$ is the message polynomials we would like to decode. Let $\vec{c}=(\vec{c}_1,\hdots,\vec{c}_n)=\mathcal{C}(f) \in (\F_q^s)^n$ be the corresponding codeword where $\vec{c}_i=(f(\alpha_i),f(\gamma\alpha_i),\hdots,f(\gamma^{s-1}\alpha_i))=(c_{i,1},\dots,c_{i,s})\in\mathbb{F}^s_q$. Now, we explicitly write the coefficients of $f(X)$ as
\[
    f(X) = \sum_{j=0}^{k-1} f_{j} X^j.
\]
Thus, for any $i \in [s-L+1],j\in[n]$, $h \in [L]$, and $g \in [\be]$ we have that
\begin{align}
    c_{j,i+h-1} (\gamma^{i-1}\alpha_j)^g  = \sum_{t=0}^{k-1} f_{t}(\gamma^{i+h-2}\alpha_j)^t(\gamma^{i-1} \alpha_j)^{g}=\sum_{t=0}^{k-1}f_t\gamma^{(h-1)t}(\gamma^{i-1}\alpha_j)^{t+g}.\label{eq:frs-f-c}
\end{align}
For each $h \in [L]$, define the matrix $\bF_h \in \F_q^{(k+\be) \times \be}$ to be
\[
    (\bF_h)_{a,b} = \begin{cases}
    f_{a-b-1}\gamma^{(h-1)(a-b)} & a-b \in \{1,\hdots, k\}\\
    0 & \text{otherwise}.
    \end{cases}
\]
By (\ref{eq:frs-f-c}), it follows that for an $h \in [L]$, we have that
\[
    \bM \bF_h = \Diag(\{c_{j,i+h-1}\}_{i\in[s-L+1],j\in[n]})\bN.
\]
Therefore, for each $h \in [L]$, if we multiply the $h$-th column block of $\bB$ by $\bF_h$ and add it to the final column block, we get a matrix $\bB'$ of the same rank as $\bB$ such that
\[
    \bB' := \begin{pmatrix}
    \bM & & & & \bN'_1\\
    & \bM & & & \bN'_2\\
    & & \ddots & & \vdots\\
    & & & \bM & \bN'_L
    \end{pmatrix},
\]
where for each $h \in [L]$,
\[
    \bN'_{h} := \Diag(\{c_{j,i+h-1} - y_{j,i+h-1}\}_{i\in[s-L+1],j\in[n]})\bN.
\]
In other words, the rightmost column of $\bB'$ only cares about the errors introduced in the transmission of $\vec{c}$ over the $(e_0, e)$ semi-adversarial channel.

Recall that $P$ defined in (\ref{eq:frs-adv-set}) are the set of $L$-length ``inner-block'' adversarial errors with size $|P|=e_A$. Let $(i_1,j_1),\hdots,(i_{e_A},j_{e_A})$ denote the all  elements in $P$ (listed in arbitrary order). Also, without loss of generality we can assume the random error set $K'=[e']$. Then, for each $h\in[L]$, since the rows of $({\bM}|\bN'_h)$ can be naturally indexed by $[s-L+1]\times[n]$, we can permute these rows such that the rows are presented in the following order from top to bottom.
\begin{align*}
&(i_1,j_1),\hdots,(i_{e_A},j_{e_A}),(1,1),\hdots,(s-L+1,1),\hdots,(1,e'),\hdots,(s-L+1,e'),\\&(i_{\be+1},j_{\be+1})\hdots,(i_{(s-L+1)n},j_{(s-L+1)n}).
\end{align*}
Namely, the first $e_A$ rows are indices representing the adversarial error locations in $P$, followed by $e_R=(s-L+1)e'$ indices representing the random error locations in $P'=[s-L+1]\times K'$. The remaining indices $(i_{\be+1},j_{\be+1}),\hdots,(i_{(s-L+1)n},j_{(s-L+1)n})$ in the end are agreement locations $([s-L+1]\times[n])\setminus (P\cup P')$ listed in arbitrary order. In the following, we will use $\{(i_t,j_t)\}^{(s-L+1)n}_{t=1}$ to index the rows as the order above.

Henceforth, by (\ref{eq:frs-agree}) this implies that $c_{j_t,i_t+h-1} - y_{j_t,i_t+h-1} = 0$ for all $t \in \{\be+1, \hdots, (s-L+1)n\}$ and $h \in [L]$.

Since $\be \le (s-L+1)n$, and $\bN$ is a Vandermonde matrix, there exists an invertible matrix $\bP \in \F^{\be \times \be}$ such that $(\bN\bP)_{[\be], [\be]} = \bI_{\be}$. Thus, by applying $\bP$ to the final column block of $\bB'$, we get a matrix $\bB''$ of the same rank with the shape
\[
    \bB'' :=  \begin{pmatrix}
    \bM^{\row}_{[\be]} & & & & \bD'_1\\
    \bM^{\row}_{[\be+1,n]} & & & &\\
    & \bM^{\row}_{[\be]} & & & \bD'_2\\
    & \bM^{\row}_{[\be+1,n]} & & & \\
    & & \ddots & & \vdots\\
    & & & \bM^{\row}_{[\be]} & \bD'_L\\
    & & & \bM^{\row}_{[\be+1,n]} &
    \end{pmatrix},
\]
where for $\bD'_h := \Diag(\{c_{j_t,i_t+h-1} - y_{j_t,i_t+h-1}\}^{\be}_{t=1})$ for each $h \in [L]$. Note that the rank of $\bB''$ can depend on the random errors $y_{j_t,i_t+h-1}$ for $t \in \{e_A+1, \hdots, \be\}$. To analyze these random errors, we first convert $\bB''$ into a symbolic matrix $\bB^{\SYM}$, analyze the rank of this symbolic matrix, and then apply the Schwartz-Zippel Lemma to complete the proof of \cref{thm:frs110}.

To construct $\bB^{\SYM}$, we have algebraic independent symbolic variables $Y_{j,u}$ for all $j \in [e']$ and $u \in [s]$. For each $h \in [L]$, we let
\begin{align*}
    \bZ_h &:= \Diag(\{c_{j_t,i_t+h-1} - y_{j_t,i_t+h-1}\}_{t=1}^{e_A}),\\
    \bY_h &:= \Diag(\{Y_{j_t,i_t+h-1}\}_{t=e_A+1}^{\be}).
\end{align*}
We then let
\[
    \bB^{\SYM} := \begin{pmatrix}
    \bM^{\row}_{[e_A]} & & & & \bZ_1 &\\
    \bM^{\row}_{[e_A+1,\be]} & & & & & \bY_1\\
    \bM^{\row}_{[\be+1,n]} & & & & &\\
    &\bM^{\row}_{[e_A]}  & & & \bZ_2 &\\
    &\bM^{\row}_{[e_A+1,\be]}  & & & & \bY_2\\
    &\bM^{\row}_{[\be+1,n]}  & & & &\\
    & & \ddots & & \vdots\\
    &&&\bM^{\row}_{[e_A]}  & \bZ_L &\\
    &&&\bM^{\row}_{[e_A+1,\be]}   & & \bY_L\\
    &&&\bM^{\row}_{[\be+1,n]}   & &\\
    \end{pmatrix}.
\]
To prove that $\bB^{\SYM}$ has full column rank over $\F_q[\{Y_{j,u}\}_{j,u}]$, it suffices to identify a subset of rows of $\bB^{\SYM}$ such that restricting $\bB^{\SYM}$ to these rows results in a square matrix with nonzero determinant.

For each $h \in [L]$, let $T_h \subseteq [e_A]$ be the set of indices $t \in [e_A]$ such that $c_{j_t,i_t+h-1} - y_{j_t,i_t+h-1} \neq 0$. Note that by the definition of $P=\{i_t,j_t\}^{e_A}_{t=1}$ as in (\ref{eq:frs-adv-set}), we have that each $L$-length adversarial error location $(i,j)\in P$ changed at least one value $h\in[L]$ so $T_1 \cup \cdots \cup T_L = [e_A]$. For each $h \in [L]$, let $T'_h := T_h \setminus (T_1 \cup \cdots \cup T_{h-1})$. 

Recall that $\bar{e} \le (s-L+1)e \le \frac{L}{L+1}\left((s-L+1)n-k\right)$. Moreover, from the constraint on $e_0$ there must be  $e_A \le (s-L+1)e_0 \le (s-L+1)(n-e)-k \le (s-L+1)n-\be-k$. That is, $k+\be+e_A \le (s-L+1)n$. Thus, we can pick $n_1, \hdots, n_L \in [(s-L+1)n]$ subject to the following constraints.
\begin{align*}
n_h &\ge |T'_h| + k + \be \quad \forall h \in [L],\\
n_1 + \cdots + n_L &= Lk + (L+1)\be.
\end{align*}
Then, the matrix
\[
    \widetilde{\bB}^{\SYM} := \begin{pmatrix}
    \bM^{\row}_{[e_A]} & & & & \bZ_1 &\\
    \bM^{\row}_{[e_A+1,\be]} & & & & & \bY_1\\
    \bM^{\row}_{[\be+1,n_1]} & & & & &\\
    &\bM^{\row}_{[e_A]}  & & & \bZ_2 &\\
    &\bM^{\row}_{[e_A+1,\be]}  & & & & \bY_2\\
    &\bM^{\row}_{[\be+1,n_2]}  & & & &\\
    & & \ddots & & \vdots\\
    &&&\bM^{\row}_{[e_A]}  & \bZ_L &\\
    &&&\bM^{\row}_{[e_A+1,\be]}   & & \bY_L\\
    &&&\bM^{\row}_{[\be+1,n_L]}   & &\\
    \end{pmatrix}.
\]
is a square matrix. We seek to prove that $\det \widetilde{\bB}^{\SYM} \neq 0$.
To do this, it suffices to find a monomial in the determinant expansion of $\det \widetilde{\bB}^{\SYM}$ which appears with a nonzero coefficient. To understand  $\det \widetilde{\bB}^{\SYM}$, we expand along the last $\be$ columns of $\widetilde{\bB}^{\SYM}$. In particular, for each $i \in [\be]$, there are (up to) $L$ choices of which term to pick, based on which of the $L$ blocks we choose. To parameterize these choices, we consider a partition $S_1 \sqcup \cdots \sqcup S_L = [\be]$ such that $S_h$ corresponds to the columns picked in the $h$-th row block. After expanding along these $\be$ rightmost columns, the remaining matrix is a block-diagonal matrix where each block is of the form $\bM^{\row}_{[n_h] \setminus S_h}$ for some $h \in [L]$. If some of these matrices are not square, then the determinant of the block diagonal matrix will be zero. Otherwise, if $n_h - |S_h| = \be + k$ for all $h \in [L]$, we have that the choice of partition $S_1 \sqcup \cdots \sqcup S_L$ contributes (up to sign) the monomial
\begin{align}
    \prod_{h=1}^L \left[\det \bM^{\row}_{[n_h] \setminus S_h} \prod_{t \in S_h \cap [e_A]} (c_{j_t,i_t+h-1} - y_{j_t,i_t+h-1}) \prod_{t \in S_h \cap [e_A+1, \be]} Y_{j_t,i_t+h-1}\right].\label{eq:frs-part}
\end{align}
For each $h \in [L]$, let $m_h := n_h - \be - k$. We thus have the requirement that $|S_h| = m_h$ for all $h \in [s]$ in order for the contribution of the partition to be nonzero.

There is another necessary condition that the partition $\{S_h\}^L_{h=1}$ has to follow so that its coefficient of the contribution is non-zero. We state it as follows.
\begin{claim}\label{clm:frs-prefix}
If a partition $S_1\sqcup\cdots\sqcup S_L=[\be]$ contributes non-zero coefficient in (\ref{eq:frs-part}), then there must be $\sum^L_{t=h}|S_t\cap [e_A+1,\be]|\leq \sum^L_{t=h}(m_t-|T'_t|)$ for all $h\in[L]$.
\end{claim}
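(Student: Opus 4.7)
The plan is to reduce the suffix-sum inequality to a counting argument about how the indices in $[e_A]$ must be distributed across the blocks $S_h, S_{h+1}, \ldots, S_L$ in any non-vanishing contribution.

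First, I would rewrite the target inequality in a more convenient form. Since a non-zero contribution in (\ref{eq:frs-part}) forces $|S_h|=m_h$ for every $h\in[L]$, the inequality
\[
\sum_{t=h}^L |S_t\cap[e_A+1,\be]| \;\le\; \sum_{t=h}^L (m_t-|T'_t|)
\]
is equivalent (by $|S_t\cap[e_A]|=|S_t|-|S_t\cap[e_A+1,\be]|=m_t-|S_t\cap[e_A+1,\be]|$) to
\[
\sum_{t=h}^L |S_t\cap[e_A]| \;\ge\; \sum_{t=h}^L |T'_t|.
\]
So it suffices to prove this lower bound on the number of adversarial-row indices that the partition places in the tail blocks $h,h+1,\ldots,L$.

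Next, I would use the non-vanishing hypothesis to constrain where indices in $[e_A]$ can land. For the product $\prod_{t\in S_h\cap[e_A]}(c_{j_t,i_t+h-1}-y_{j_t,i_t+h-1})$ appearing in (\ref{eq:frs-part}) to be non-zero, every $t\in S_h\cap[e_A]$ must satisfy $c_{j_t,i_t+h-1}\ne y_{j_t,i_t+h-1}$, i.e.\ $t\in T_h$. Therefore $S_h\cap[e_A]\subseteq T_h$ for every $h\in[L]$. Consequently, any $t\in[e_A]$ that is placed in some $S_i$ with $i<h$ must satisfy $t\in T_1\cup\cdots\cup T_{h-1}$.

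The finishing step is a simple count. Since $S_1\sqcup\cdots\sqcup S_L=[\be]$ partitions $[\be]$, every element of $[e_A]$ is assigned to exactly one block. The previous paragraph shows that every element of $[e_A]\setminus(T_1\cup\cdots\cup T_{h-1})$ is forced into $S_h\cup S_{h+1}\cup\cdots\cup S_L$. Hence
\[
\sum_{t=h}^L |S_t\cap[e_A]| \;\ge\; e_A - |T_1\cup\cdots\cup T_{h-1}|.
\]
Finally, by the very definition $T'_i := T_i\setminus(T_1\cup\cdots\cup T_{i-1})$, the sets $T'_1,\ldots,T'_L$ form a partition of $[e_A]$ with $T'_1\cup\cdots\cup T'_{h-1}=T_1\cup\cdots\cup T_{h-1}$, so the right-hand side above equals $\sum_{t=h}^L |T'_t|$, which proves the claim. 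I do not anticipate any real obstacle here; the only subtlety worth stating carefully is the equivalence of the reformulated inequality with the original one, which relies on the square-minor condition $|S_h|=m_h$ that is already a prerequisite for a non-zero contribution.
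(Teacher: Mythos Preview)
Your proposal is correct and follows essentially the same approach as the paper. The only cosmetic difference is that the paper first converts the suffix inequality to the equivalent prefix inequality $\sum_{t=1}^h |S_t\cap[e_A+1,\be]|\ge\sum_{t=1}^h(m_t-|T'_t|)$ and then bounds $\sum_{t=1}^h |S_t\cap[e_A]|\le |T_1\cup\cdots\cup T_h|$ from above, whereas you work directly with the suffix form and bound $\sum_{t=h}^L|S_t\cap[e_A]|\ge e_A-|T_1\cup\cdots\cup T_{h-1}|$ from below; both rest on the same key observation $S_h\cap[e_A]\subseteq T_h$.
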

\begin{proof}
Since $\sum^L_{t=1}|S_t\cap [e_A+1,\be]|=\be-e_A=\sum^L_{t=1}(m_t-|T'_t|)$, the statement is equivalent to the following.
\begin{equation}\label{eq:frs-reverse}
\sum^h_{t=1}|S_t\cap [e_A+1,\be]|\ge \sum^h_{t=1}(m_t-|T'_t|)\quad\forall h\in[L].
\end{equation}
For any $h\in[L]$, from the previous discussion we know $|S_h|=m_h$. It follows that
\[
\sum^h_{t=1}|S_t\cap [e_A]|=\sum^h_{t=1}\left(m_h-|S_t\cap [e_A+1,\be]|\right)\leq |T_1\cup\cdots \cup T_h|=\sum^h_{t=1}|T'_t|
\]
The above inequality holds since the partition contributes non-zero coefficient, and there are at most $|T_1\cup\cdots \cup T_h|$ non-zero columns in $[e_A]$ from the first $h$ row blocks. The above conclusion proves (\ref{eq:frs-reverse}). 
\end{proof}
\begin{remark}
\cref{clm:frs-prefix} is also true for the previous Interleaved RS codes case. However, Interleaved RS codes case are relatively simpler than folded RS codes case and we didn't explicitly use \cref{clm:frs-prefix} in the proof, so we omitted it in \cref{subsec:reduction_matrix}.

Motivated by the above observations, we call a partition $S_1\sqcup \cdots\sqcup S_L=[\be]$ valid iff $|S_h|=m_h$ and the condition stated in \cref{clm:frs-prefix} holds for all $h\in[L]$. Only valid partition can possibly contribute non-zero coefficient in (\ref{eq:frs-part}).

\end{remark}

To finish, it suffices to find a monomial (i.e., a product of $Y_{j,u}$'s) which appears with a nonzero coefficient for exactly one choice of valid partition $S_1, \hdots, S_L$. In \cref{subsec:reduction_matrix} the construction of such a monomial is quite straight-forward. However, since the same symbolic variable $Y_{j,u}$ may appear across different row blocks, we need more work to find such a monomial. Here we directly assume the existence of such a monomial.
\begin{lemma}\label{lem:frs-monomial}
There exists a monic monomial $M\in\mathbb{F}_q[\{Y_{j,u}\}_{j,u}]$, such that there is exactly one unique partition $U_1 \sqcup \cdots \sqcup U_L = [e_A+1,\be]$ satisfies the following two conditions simultaneously.
\begin{equation}\label{eq:frs-prefix-small}
\sum^L_{t=h}|U_t|\leq \sum^L_{t=h}(m_t-|T'_t|)\quad\forall h\in[L].
\end{equation}
\begin{equation}\label{eq:frs-evaluation}
\prod^L_{h=1}\prod_{t \in U_h} Y_{j_t,i_t+h-1}=M.
\end{equation}
Moreover, this special partition satisfies:
\begin{equation}\label{eq:frs-capacity}
|U_h|=m_h-|T'_h|\quad \forall h\in[L].
\end{equation}
\end{lemma}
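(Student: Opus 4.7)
The plan is to construct the partition $\{U_h\}_{h=1}^L$ by a monotone greedy procedure and to define $M$ as the resulting product. The essential new difficulty compared to the interleaved RS case is that for $t \in [e_A+1,\be]$, placing $t$ in block $h$ contributes the variable $Y_{j_t,\, i_t+h-1}$, so a single variable $Y_{j,u}$ can be produced by many different $(t,h)$ pairs since $i_t$ ranges over $[s-L+1]$ and $h$ over $[L]$. The construction is designed to break this ambiguity: for each $j \in K'$, assign the $s-L+1$ indices $(1,j),(2,j),\ldots,(s-L+1,j)$ to blocks $h_1(j) \le h_2(j) \le \cdots \le h_{s-L+1}(j)$ in non-decreasing order, with the global distribution chosen so that $|U_h| = \ell_h := m_h - |T'_h|$ for every $h$. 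Existence of such a non-decreasing assignment is straightforward because $\sum_h \ell_h = (s-L+1)|K'| = \be - e_A$ (for example, sort the multiset $\{1^{\ell_1},\ldots,L^{\ell_L}\}$ and fill the $|K'| \times (s-L+1)$ assignment table row by row). Define $M$ via equation (\ref{eq:frs-evaluation}).

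Properties (\ref{eq:frs-capacity}) and (\ref{eq:frs-prefix-small}) are then immediate: the former by design, the latter with equality. The remaining task is uniqueness. Let $\{U'_h\}$ be any partition satisfying (\ref{eq:frs-prefix-small}) and (\ref{eq:frs-evaluation}), and let $h'_i(j)$ denote the block to which $(i,j)$ is assigned under $U'$. Combining (\ref{eq:frs-prefix-small}) at $h=1$ with the fixed total $\sum_h |U'_h| = \be - e_A = \sum_h \ell_h$ first forces $|U'_h| = \ell_h$ for every $h$. Next, since monotonicity implies $s_i(j) := i + h_i(j) - 1$ is strictly increasing in $i$, the per-$j$ sub-monomial $M_j := \prod_i Y_{j,\, s_i(j)}$ is \emph{square-free}; matching $M_j$ under $U'$ therefore forces a permutation $\sigma_j$ of $[s-L+1]$ satisfying $h'_i(j) = h_{\sigma_j(i)}(j) + \sigma_j(i) - i$ for every $i$, and the problem reduces to showing $\sigma_j = \mathrm{id}$ for every $j \in K'$.

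The heart of the argument is an exchange lemma. Suppose $\sigma_j \ne \mathrm{id}$; let $i^* := \max\{i : \sigma_j(i) \ne i\}$ and $i_0 := \sigma_j^{-1}(i^*)$, so $i_0 < i^*$. Then $h'_{i_0}(j) = h_{i^*}(j) + (i^* - i_0) > h_{i^*}(j) \ge h_{i_0}(j)$, meaning that index $(i_0,j)$ is pushed from block $h_{i_0}(j)$ to a block strictly later than $h_{i^*}(j)$. Choosing $h_0 := h_{i^*}(j) + 1$, this single push contributes $+1$ to $\sum_{t \ge h_0} |U'_t|$ relative to $\sum_{t \ge h_0} |U_t|$, while all $i > i^*$ are fixed by $\sigma_j$ and cancel identically.

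The main obstacle lies in ruling out compensating moves among the indices $i \in (i_0, i^*]$: in principle, an elaborate $\sigma_j$ might redistribute these earlier indices in a way that offsets the push at $h_0$. The cleanest resolution I anticipate is to decompose $\sigma_j$ into a product of adjacent descents within the sorted sequence $\{s_i(j)\}$, and to verify that each descent strictly increases at least one tail sum $\sum_{t \ge h} |U'_t|$ without ever decreasing another, so that no composition of descents can preserve (\ref{eq:frs-prefix-small}). Once $\sigma_j = \mathrm{id}$ is established for every $j$, we conclude $h'_i(j) = h_i(j)$ universally, whence $U' = U$ and the uniqueness claim of the lemma follows.
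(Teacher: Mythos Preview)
Your construction is essentially the paper's (their Algorithm~3 fills blocks from the largest $h$ first while processing $(i,j)$ in decreasing order of a weight $\mathrm{wt}(i,j)=i+(s+1)j$; this is equivalent to sorting $\{1^{\ell_1},\dots,L^{\ell_L}\}$ and filling in the reverse order). Where the proposal breaks down is the uniqueness argument.

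First, a small but real error: the constraint \eqref{eq:frs-prefix-small} at $h=1$ together with the fixed total does \emph{not} force $|U'_h|=\ell_h$ for every $h$; it only forces equality of the full sum. For example with $L=2$, $\ell_1=3$, $\ell_2=2$, the choice $|U'_1|=4$, $|U'_2|=1$ satisfies all tail constraints. Fortunately your permutation set-up does not actually use this claim, so this is not fatal.

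The genuine gap is the exchange argument. Your analysis is entirely per-$j$, but \eqref{eq:frs-prefix-small} is a \emph{global} constraint, so an increase in one $j$-row's tail count could in principle be offset by a decrease in another. You only flag ``compensating moves among the indices $i\in(i_0,i^*]$'' within the same $j$; cross-$j$ compensation is never addressed. Moreover, the concrete mechanism you propose (``each adjacent descent strictly increases at least one tail sum without ever decreasing another'') is false: take $L=3$, a row with $h=(2,2,2)$, and $\sigma=(1\ 2)$, giving $h'=(3,1,2)$; the local tail counts go from $(3,3,0)$ to $(3,2,1)$, so the tail at $h=2$ strictly decreases. Thus a composition-of-descents argument cannot establish monotonicity in the way you suggest.

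The paper resolves this with a single global device: the weight $\mathrm{wt}(i,j)=i+(s+1)j$ totally orders all of $[s-L+1]\times[e']$, and the greedy rule always assigns the largest available block. For uniqueness they take the globally largest-weight index $(i^*,j^*)$ with $d'(i^*,j^*)\neq d(i^*,j^*)$. Because the greedy rule fills high blocks first, by this point the capacities of blocks $>d(i^*,j^*)$ are exactly exhausted by indices of weight $>\mathrm{wt}(i^*,j^*)$, and since $d'=d$ on those indices, \eqref{eq:frs-prefix-small} forces $d'(i,j)\le d(i^*,j^*)$ for every $(i,j)\le(i^*,j^*)$, hence $d'(i^*,j^*)<d(i^*,j^*)$. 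Then $\mathrm{wt}(i+d'(i,j)-1,\,j)<\mathrm{wt}(i^*+d(i^*,j^*)-1,\,j^*)$ for all $(i,j)\le(i^*,j^*)$, which means the variable $Y_{j^*,\,i^*+d(i^*,j^*)-1}$ cannot be produced anywhere by $U'$, contradicting $M'=M$. The weight function simultaneously controls the block-capacity accounting and the variable-identification, eliminating the cross-$j$ compensation problem your per-$j$ argument cannot rule out.
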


We postpone the proof of \cref{lem:frs-monomial} with concrete construction to \cref{subsec:frs-monomial}. Here we first show how to finish the proof of \cref{thm:frs110} conditioned on it.

\paragraph{Proof of \cref{thm:frs110} conditioned on \cref{lem:frs-monomial}.} For each 
$h\in[L]$, recall we selected each $n_h$ such that $n_h \ge |T'_h| + k + \be$. Thus, $m_h \ge |T'_h|$. Let $\ell_h := m_h - |T'_h|$. Since $n_1 + \cdots + n_L = Lk + (L+1)\be$ and $|T'_1| + \cdots +|T'_L| = e_A$, we have that $\ell_1 + \cdots + \ell_L = \be - e_A = e_R$. Let $M$ and $U_1\sqcup\cdots\sqcup U_L=[e_A+1,\be]$ be the monomial and the unique partition guaranteed by \cref{lem:frs-monomial} such that $|U_h|=\ell_h$ for all $h\in[L]$. Then, the partition $\{S_h:=T'_h\cup U_h
:h\in[L]\}$ of $[\be]$ corresponds to the monomial
\[
    M\prod_{h=1}^L \left[\det \bM^{\row}_{[n_h] \setminus S_h} \prod_{t \in T'_h} (c_{j_t,i_t+h-1} - y_{j_t,i_t+h-1})\right].
\]
Note that this monomial has nonzero coefficient since $c_{j_t,i_t+h-1} - y_{j_t,i_t+h-1} \neq 0$ for all $t \in T'_h \subseteq T_h$. We also note that by the uniqueness of $\{U_h\}^L_{h=1}$ guaranteed by \cref{lem:frs-monomial} this partition restricting to $[e_A+1,\be]$ is the only one which produces the monomial $M$ with non-zero coefficient as contribution in (\ref{eq:frs-part}). Consider a partition $S'_1 \sqcup \cdots \sqcup S'_L = [\be]$ which produces the same monomial with non-zero coefficient in the corresponding contribution term, we must have that $S'_h \cap [e_A+1, \be] = U_h$ for all $h \in [L]$. Now, since the partition $\{S'_h\}^L_{h=1}$ has non-zero coefficient in its contribution term, from the previous discussion we must have that $|S'_h| = m_h$ for all $h \in [L]$. This implies that $|S'_h \setminus U_h| = |T'_h|$ for all $h \in [L]$. Furthermore, in order for (\ref{eq:frs-part}) to be nonzero, we must have that $S'_h\setminus U_h \subseteq T_h$ for all $h \in [L]$. At this point, we can easily prove by induction that $S'_h\setminus U_h = T'_h$ for all $h \in [L]$. The base case that $S'_1\setminus U_1 = T_1=T'_1$ is immediate since the cardinalities match. For any $h \ge 2$, note that 
\[
S'_h\setminus U_h \subseteq T_h \setminus (S'_1 \cup \cdots \cup S'_{h-1}) = T_h \setminus (T'_1 \cup \cdots \cup T'_{h-1}) = T_h \setminus (T_1 \cup \cdots \cup T_{h-1}) = T'_h,
\]
so $S'_h \setminus U_h= T'_h$ holds for all $h\in[L]$ since the cardinalities match. As a corollary, there must be $S'_h=(S'_h\setminus U_h)\cup U_h=T'_h\cup U_h=S_h$ for all $h\in[L]$. Thus, the contribution regarding monomial $\prod_{h=1}^L \prod_{i \in U_h} Y_{i,h}$ in $\det{\widetilde{\bB}^{\SYM}}$ is indeed unique with a nonzero coefficient, so $\widetilde{\bB}^{\SYM}$ has nonzero determinant. Therefore, $\bB^{\SYM}$ has full column rank.

To finish, note that $\bB''$ corresponds to $\bB^{\SYM}$ except uniformly random values from $\F_q$ are substituted for each $Y_{j,u}$ with $j \in [e']$ and $u \in [s]$. Let $\widetilde{\bB}''$ be the restriction of $\bB''$ to the same rows as $\widetilde{\bB}^{\SYM}$ was restricted to relative to $\bB^{\SYM}$. Since $\det \widetilde{\bB}^{\SYM}$ has degree $e_R \le \be$, by the Schwartz-Zippel Lemma \cite{Zip79,Sch80}, we have that $\det \widetilde{\bB}'' \neq 0$ with probability at least $1 - \be/q$. Thus, $\bB''$ (and so $\bB$) has full column rank with probability at least $1 - \be/q$, as desired. This completes the proof of Theorem~\ref{thm:frs110}.
\subsubsection{Constructing the Intended Monomial: Proof of \cref{lem:frs-monomial}}\label{subsec:frs-monomial}
In this last subsection we prove \cref{lem:frs-monomial}, which fills in the final piece and completes the whole proof. \cref{lem:frs-monomial} is actually one of the main technical differences for folded RS case from the interleaved RS codes case. 

Let's define the weight function $\mathrm{wt}(i,j)=i+(s+1)j$ for $(i,j)\in[s]\times[n]$. It's obvious that $\mathrm{wt}(i,j)=\mathrm{wt}(i',j')$ iff $(i,j)=(i',j')$. Then, we can compare two paired indices according to this weight function as $(i,j)>(i',j')\Leftrightarrow \mathrm{wt}(i,j)>\mathrm{wt}(i',j')$. Then, we will use the following greedy Algorithm~\ref{alg:generate_frs} to produce a ``lexicographically largest'' monic monomial $M$ of degree $e_R$ and a corresponding partition $U_1\sqcup\cdots\sqcup U_L=[e_A+1,\be]$ that generates $M$. We will show the output of Algorithm~\ref{alg:generate_frs} satisfies (\ref{eq:frs-evaluation},\ref{eq:frs-capacity}) and it is the unique valid partition that generates $M$, which completes the proof of \cref{lem:frs-monomial}. Recall that for all $h\in[L]$ we define $\ell_h=m_h-|T'_h|$, and $\{(i_t,j_t)^{\be}_{t=e_A+1}\}=[s-L+1]\times[e']$ representing the $L$-length random error locations. In the following, for the ease of our analysis we assume $\{U_h\}^L_{h=1}$ is a partition of $[s-L+1]\times[e']$, which is equivalent to a partition of $[e_A+1,\be]$.

\begin{algorithm}[h]\label{alg:generate_frs}
    \caption{Generate Lexicographically Largest Monomial}
    \SetAlgoLined
    \KwIn{$e_A,\be,s,L,\ell_1,\dots,\ell_L$}
    \KwOut{A monomial $M$ and a partition $U_1\sqcup\cdots\sqcup U_L=[s-L+1]\times[e']$}
    {
    $M:=1$\;
    \ForEach{$h=1,\dots,L$}{
    $\mathsf{cap}_h:=\ell_h$\;
    $U_h:=\emptyset$\;
    }
    \ForEach{$j=e',\dots,1$}{
    \ForEach{$i=s-L+1,s-L,\dots,1$}{
    $h^*$:=largest $h\in[L]$ such that $\mathsf{cap}_h>0$\;
    $M:=MY_{j,i+h^*-1}$\;
    $U_{h^*}:=U_{h^*}\cup\{(i,j)\}$\;
    $\mathsf{cap}_h\leftarrow \mathsf{cap}_h-1$\;
    }
    }
    \Return{$M,U_1,\cdots, U_L$}\;
    }
\end{algorithm}
From the previous discussions $\sum_{h=1}^L\ell_h=\sum_{h=1}^Lm_h-|T'_h|=\be-e_A=e_R=e'(s-L+1)$. From the definition of Algorithm~\ref{alg:generate_frs} we know the output must satisfy (\ref{eq:frs-capacity}) and (\ref{eq:frs-prefix-small}). We can also verify the output satisfies (\ref{eq:frs-evaluation}) from the instructions of Algorithm~\ref{alg:generate_frs}. Therefore, it suffices to prove the uniqueness claim to complete the proof of \cref{lem:frs-monomial}.
\begin{claim}[Uniqueness]\label{clm:uniqueness}
$U_1\sqcup\cdots\sqcup U_L=[s-L+1]\times [e']$ is the unique partition of $[e_A+1,\be]$ that generates $M$ and satisfies (\ref{eq:frs-prefix-small}) at the same time.
\end{claim}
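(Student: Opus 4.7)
The plan is to prove uniqueness by induction on the step index of Algorithm~\ref{alg:generate_frs}. Order the positions $p_1, p_2, \ldots, p_{e_R}$ of $[s-L+1]\times [e']$ by decreasing weight $\mathrm{wt}(i,j) = i + (s+1)j$; this is exactly the order in which the algorithm processes them. Write $p_r = (i_r, j_r)$ and let $h_r$ denote the block to which the algorithm assigns $p_r$. Let $U'_1 \sqcup \cdots \sqcup U'_L$ be any partition of $[e_A+1, \be]$ satisfying (\ref{eq:frs-prefix-small}) and generating the same monomial $M$. The inductive claim will be that if $U'$ agrees with $U$ on $p_1, \ldots, p_{r-1}$, then it also agrees on $p_r$; the base case $r = 1$ is a special case of the inductive step with no prior placements.

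Under the inductive hypothesis, let $B_t$ denote the common number of positions among $p_1, \ldots, p_{r-1}$ placed in block $t$. Because Algorithm~\ref{alg:generate_frs} greedily assigns to the largest block with positive remaining capacity, we have $B_t = \ell_t$ for every $t > h_r$ and $B_{h_r} < \ell_{h_r}$. Applying constraint (\ref{eq:frs-prefix-small}) at $h = h_r + 1$ yields $\sum_{t > h_r} |U'_t| \leq \sum_{t > h_r} \ell_t = \sum_{t > h_r} B_t$, which forces equality and therefore shows that $U'$ cannot place \emph{any} of $p_r, p_{r+1}, \ldots, p_{e_R}$ in a block of index strictly greater than $h_r$. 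In particular, if $U'$ places $p_r$ in block $h'$, then $h' \leq h_r$, already ruling out the case $h' > h_r$.

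To rule out $h' < h_r$, the plan is to inspect the multiplicity of the single variable $Y_{j_r, i_r + h_r - 1}$ in $M$. Any contribution to this variable must come from a position $(i, j_r)$ assigned to some block $h$ with $i + h - 1 = i_r + h_r - 1$. Positions $(i, j_r)$ with $i > i_r$ have larger weight and so appear among $p_1, \ldots, p_{r-1}$; by the inductive hypothesis they contribute identically in $U$ and $U'$. Positions with $i < i_r$ appear among $p_{r+1}, \ldots, p_{e_R}$ and would require block $h = i_r + h_r - i > h_r$, which the previous paragraph forbids. The only remaining contributor is $p_r$ itself: in $U$ it lies in block $h_r$ and adds one copy of $Y_{j_r, i_r + h_r - 1}$, while in $U'$ it lies in block $h' \neq h_r$ and adds none. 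Thus $U'$'s monomial has strictly smaller multiplicity at $Y_{j_r, i_r + h_r - 1}$ than $U$'s, contradicting $U'$ generating $M$. Hence $h' = h_r$, closing the induction and giving $U' = U$.

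The hard part will be the second case ($h' < h_r$), which requires combining the structural inequality (\ref{eq:frs-prefix-small}) with the multiplicity structure of $M$ at a carefully chosen variable. This is exactly where the greedy ``largest block with remaining capacity'' rule of Algorithm~\ref{alg:generate_frs} earns its keep: saturating the upper blocks early is what prevents any later position from compensating for a deviation at step $r$ by landing in a block strictly above $h_r$, which pins down the multiplicity of $Y_{j_r, i_r + h_r - 1}$ and forces $U'$ to mirror the algorithm at every step.
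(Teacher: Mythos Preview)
Your proof is correct and follows essentially the same approach as the paper's. The paper phrases it as ``take the largest-weight position $(i^*,j^*)$ where the two partitions differ and derive a contradiction,'' which is formally equivalent to your induction on the processing order of Algorithm~\ref{alg:generate_frs}; both arguments hinge on the same two observations: the saturation of blocks above $h_r$ forced by (\ref{eq:frs-prefix-small}), and the resulting multiplicity mismatch at the variable $Y_{j_r,\,i_r+h_r-1}$.
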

\begin{proof}
Recall $\{(i_t,j_t)\}^{\be}_{t=e_A+1}=[s-L+1]\times[e']$. For each $(i,j)\in[s-L+1]\times [e']$, let $d[i,j]\in[L]$ denote the unique block such that $(i,j)\in U_{d[i,j]}$. Suppose by contrapositive there is another partition $U'_1\sqcup\cdots\sqcup U'_L=[e_A+1,\be]$ that also generates $M$ and satisfies (\ref{eq:frs-prefix-small}). Similarly, we can use $d'[i,j]\in[L]$ to denote the unique block such that $(i,j)\in U'_{d'[i,j]}$. Since $\{U_h\}^L_{h=1}$ satisfies (\ref{eq:frs-capacity}) and both two partitions satisfy (\ref{eq:frs-prefix-small}), we can conclude that
\begin{equation}\label{eq:frs-prefix-capacity}
\sum^L_{t=h}|U'_t|\leq \sum^L_{t=h}|U_t|\quad \forall h\in[L].
\end{equation}
Since $U_1\sqcup\cdots\sqcup U_L$ and $U'_1\sqcup\cdots\sqcup U'_L$ are different partitions, there must be some $(i,j)\in[s-L+1]\times[e']$ such that $d[i,j]\neq d'[i,j]$.  Then, let $(i^*,j^*)\in[s-L+1]\times[e']$ be the largest-weight index where $d[i^*,j^*]\neq d'[i^*,j^*]$. It follows that for all larger indices $(i,j)>(i^*,j^*)$ there must be $d[i,j]=d'[i,j]$. Next we prove that for any $(i,j)\leq (i^*,j^*)$, there is $d'[i,j]\leq d[i^*,j^*]$. If $d[i^*,j^*]=L$, this is obvious. If $d[i^*,j^*]<L$, from 
(\ref{eq:frs-prefix-capacity}) and the mechanism of Algorithm~\ref{alg:generate_frs}, the following holds.
\begin{align*}
\sum^L_{t=d[i^*,j^*]+1}|U'_t|\leq\sum^L_{t=d[i^*,j^*]+1}|U_t|&= \sum^L_{t=d[i^*,j^*]+1}|\{(i,j):(i,j)>(i^*,j^*),d[i,j]=t,(i,j)\in[s-L+1]\times[e']\}|\\
&=\sum^L_{t=d[i^*,j^*]+1}|\{(i,j):(i,j)>(i^*,j^*),d'[i,j]=t,(i,j)\in[s-L+1]\times[e']\}|
\end{align*}

The reason for the first equality is that any $(i,j)\leq (i^*,j^*)$ must have $d[i,j]\leq d[i^*,j^*]$ because $d[i,j]$ is a non-decreasing function with respect to $<$ by Algorithm~\ref{alg:generate_frs}. Therefore, for any $t>d[i^*,j^*]$, $U'_t$ has the same size as its subset $|\{(i,j):(i,j)>(i^*,j^*),d'[i,j]=t,(i,j)\in[s-L+1]\times[e']\}|$, which implies that the above is in fact an equality.

From above, we know for all $(i,j)\leq (i^*,j^*)$ there must be $d'[i,j]\leq d[i^*,j^*]$.  Specifically, there is  $d'[i^*,j^*]<d[i^*,j^*]$ since $d'[i^*,j^*]\neq d[i^*,j^*]$. Therefore, for any $(i,j)\leq (i^*,j^*)$, recall that $\{U'_{h}\}^L_{h=1}$ generates $Y_{j,i+d'[i,j]-1}$ at index $(i,j)$. We can show that $Y_{j,i+d'[i,j]-1}\neq Y_{j^*,i^*+d[i^*,j^*]-1}$ by observing $\mathrm{wt}(i+d'[i,j]-1,j)<\mathrm{wt}(i^*+d[i^*,j^*]-1,j^*)$ as follows.
\begin{itemize}
\item[Case 1:] When $(i,j)=(i^*,j^*)$, it follows that
\[\mathrm{wt}(i+d'[i,j]-1,j)-\mathrm{wt}(i^*+d[i^*,j^*]-1,j^*)=d'[i^*,j^*]-d[i^*,j^*]<0\]
\item[Case 2:] Otherwise we have $(i,j)<(i^*,j^*)$, we observe that
\[\mathrm{wt}(i+d'[i,j]-1,j)-\mathrm{wt}(i^*+d[i^*,j^*]-1,j^*)=d'[i,j]-d[i^*,j^*]+\mathrm{wt}(i,j)-\mathrm{wt}(i^*,j^*)<0\]
\end{itemize}
This implies this $Y_{j^*,i^*+d[i^*,j^*]-1}$ generated by $\{U_h\}^L_{h=1}$ at index $(i^*,j^*)$ cannot be generated by $\{U'_h\}^L_{h=1}$ in any index $(i,j)\leq(i^*,j^*)$. Moreover, since $d[i,j]=d'[i,j]$ for all $(i,j)>(i^*,j^*)$ and $\{U_h\}^L_{h=1}$ generates exactly the same partial monomial as $\{U'_h\}^L_{h=1}$ does in these indices, it means that $\{U_h\}^L_{h=1}$ actually generates at least one more $Y_{j^*,i^*+d[i^*,j^*]-1}$ than $\{U'_h\}^L_{h=1}$ does. Therefore, $\{U'_h\}^L_{h=1}$ cannot generate $M$ which is generated by $\{U_h\}^L_{h=1}$. This is a contradiction.
\end{proof}

\section{A Novel Decoding Algorithm for Multiplicity Codes} \label{sec:MULT}
In this section, with a similar interpolation framework of FRS codes provided in \cref{sec:FRS}, we provide a new decoding algorithm for multiplicity codes and then prove our
main results by bounding the failure probability of the proposed algorithm for semi-adversarial
errors (see \cref{thm:MULT}). The algorithm here (see Algorithm \ref{algo:Multmain}) also achieves the running time $\widetilde{O}(\poly(1/\eps)n)$. 

Throughout this section, we will fix $s,n,k>0$, a finite field $\mathbb{F}_q$, where $|\mathbb{F}_q|>n>k$, and any sequence of $n$ elements $\alpha_1,\dots,\alpha_n\in\mathbb{F}_q$. We consider the corresponding folded RS code $\mathcal{C}=\mathsf{FRS}^{s,\gamma}_{n,k,q}(\alpha_1,\alpha_2,\dots,\alpha_n)\subseteq(\mathbb{F}^s_q)^n$. For any $f\in\mathbb{F}_q[X]$ with $\deg{f}<k$, we use $\mathcal{C}(f)$ to denote its encoded codeword defined as:
\[
\mathcal{C}(f):=(\mathcal{C}^f_1,\dots,\mathcal{C}^f_n)\in\left(\mathbb{F}^s_q\right)^n,\quad\text{where } \mathcal{C}^f_i:=\Bigl(f(\alpha_i),f^{(1)}(\alpha_i),\dots,f^{(s-1)}(\alpha_i)\Bigl)\in\mathbb{F}^s_q,
\]
where we recall that $f^{(i)}$ denotes the $i$th Hasse derivative of $f$.

\subsection{A New Berlekamp–Welch Typle Interpolation}
In this subsection, much like in \cref{interpo:FRS} for folded RS codes, we present a new interpolation procedure below that succinctly captures the core insight driving our multiplicity codes algorithm. 

\paragraph{Our New Multiplicity Interpolation.} A similar new lightweight interpolation rule is
defined as follows for multiplicity codes.

\begin{definition}[$L$-length multiplicity interpolation of degree $(D_1,D_2)$]\label{def:mult-inter}
Given any $D_1,D_2\ge 1$, and decoding parameter $1\leq L\leq s$, we say that a polynomial $E(X)$ of degree at most $D_1$ together with $s$ polynomials $\{A_i(X)\}_{i=1}^L$ each of degree at most $k-1+D_2$ form a solution to $L$-length multiplicity interpolation of degree $(D_1,D_2)$ if the following equations
\begin{equation}\label{eq:Int-mult}
A_{h}^{(i-1)}(\alpha_j)=\sum\limits_{\ell=0}^{i-1} \binom{h+i-2-\ell}{h-1} y_{j,i+h-1-\ell}E^{(\ell)}\left(\alpha_j\right)\quad j\in[n]\text{ and } i\in\{1,\cdots,s-L+1\}
\end{equation}
hold for $h \in \{1,2,\hdots,L\}$. If $D_1=D_2=D$, we just say they form a solution to $L$-length multiplicity interpolation of degree $D$. 
\end{definition}

\subsection{Description of Our Algorithm}
Similar to the FRS case in \cref{sec:FRS}, the following pseudo-code (See Algorithm \ref{algo:Multmain}) describes our fast multiplicity codes decoding algorithm that integrates our new ``degree-fixed'' multiplicity interpolation (\cref{def:mult-inter}) with \cite{alekhnovich2005linear} (see \cref{thm:alek}).
\paragraph{Running Time of Our Algorithm \ref{algo:Multmain}.} The analysis is the same as that of Algorithm~\ref{algo:main}. There are only two differences. Fist, is that in Algorithm~\ref{algo:Multmain} we additionally need to compute the derivative of $f(X)$ for $s$ times which only requires $O(sn)$ computation. We also need to use Theorem~\ref{thm-fast-Hermite} to compute $Q_0$ in $O(M(sn)\log(sn))$. Therefore, the final running time is still $O(s^{O(1)}\mathsf{M}(sn)\log{n})\leq O(s^{O(1)}n\log^2{n}\log{\log{n}})$.
\begin{algorithm}\label{algo:Multmain}
    \caption{Near-linear time multiplicity code decoding algorithm}
    \SetAlgoLined
    \KwIn{Parameters $n,k,q,s, L, e$\;
    Evaluation points $(\alpha_1, \hdots, \alpha_n) \in \F_q^n$\;
    Received message $\vec{\bf{y}} := ((y_{1,1}, \hdots, y_{1,s}), \hdots, (y_{n,1}, \hdots, y_{n,s})) \in (\F_q^s)^n$. }
    \KwOut{A polynomials $f(X)$ of degree at most $k-1$ or $\Fail$}
    {
    $Q_{0}(X):=\prod_{j=1}^{n}\left(X-\alpha_j\right)^{s-L+1}$\;
    \ForEach{$h=1,\dots,L$}{
        $Q_h(X):= \Hermite\Bigl(X; \left\{\left(\alpha_i,  \binom{h-1}{h-1}y_{i,h},\hdots,\binom{j-1}{h-1}y_{i,j},\hdots, \binom{s-L+h-1}{h-1}y_{i,s-L+h}\right)\right\}_{i\in[n]}\Bigl)$\;
    }
    $E,C_1,\hdots, C_L := \mathsf{Interpolate}(Q_0, Q_1, \dots,Q_L)$\;
    $B(X):=Q_1(X)E(X)+Q_0(X)C_1(X)$\;
    $f(X):=B(X)/E(X)$\;
    \If{$f \not\in \F_q[X]$ \textbf{\emph{or}} $\deg{f}\ge k$}{
        \Return{$\Fail$}\;
        }
    \If{$\Delta(\vec{\mathbf{y}},\{(f(\alpha_i),f^{(1)}(\alpha_i),\hdots,f^{(s-1)}(\alpha_i))\}_{i=1}^n) > e$}{
        \Return{$\Fail$}\;
    }
    \Return{f}\;
    }
\end{algorithm}

\subsection{Analysis of Algorithm~\ref{algo:Multmain} for the Semi-Adversarial Errors}
In this section, our aim is to prove the following main theorem of our analysis on Algorithm \ref{algo:Multmain} for the semi-adversarial errors.
\begin{theorem}\label{thm-multAlgAna}
Consider integers $s,L,n,k\ge 1$ with $s<\operatorname{Char}(\F_q)$. For any multiplicity code $\mathsf{MULT}_{n,k}^{s}(\alpha_1, \dots, \alpha_n)$ over the alphabet $\F_{q}^s$, algorithm~\ref{algo:Multmain} can { uniquely decode} any message polynomial $f$ of degree at most $k-1$ from { $(e_0,e)$-semi-adversarial} errors, with probability at least $1-\frac{e(s-L+1)}{q}$, for any $e \le \frac{L}{L+1}\left(n-\frac{k}{s-L+1}-1\right)$ and $e_0\leq\min\left(e,n-e-\frac{k}{s-L+1}\right)$. The running time of algorithm~\ref{algo:Multmain} is $O(s^{O(1)}n\log^2{n}\log{\log{n}})$.
\end{theorem}

\paragraph{Solutions for $L$-length multiplicity interpolation of a fixed degree.}

We will first construct solutions to our interpolation family. Let us assume $f$ was the message polynomial. Let the index set $I_R$ contain all random errors and the index set $I_A$ contain all the adversarial errors. We note $|I_A|\leq e_0$ and $|I_R|\leq e-e_0$. We can adapt $e,e_0$ such that $|I_A|=e_0$ and $|I_R|=e-e_0$. Concretely, if adversarial errors are fewer than expected we arbitrarily increase $I_A$ to be of size $e_0$. If random errors are fewer than expected we can decrease $e$ and it won't break our requirement on $e,e_0$. Moreover, we will finally prove that with high probability the output of algorithm~\ref{algo:Multmain} is exactly $f$. Therefore, such an adaption will not influence the behavior of algorithm~\ref{algo:Multmain} when it indeed finds $f$.

The adversarial errors can contain coordinates where no errors have been made. For each $i\in I_A$ we let $w'_i=\min\{j| f^{(j-1)}(\alpha_i)\ne y_{i,j}\text{ or } j=s+1\}$ and $w_i=\max(w'_i-L,0)$. $w'_i$ tells us the first derivative in which no error has occurred and $w_i$ tells us how many derivatives over $L-1$ have no error. We let $d_i=s-L+1$ for $i\in I_R$ and $d_i=s-L+1-w_i$ for $i\in I_A$. As before, we can assume $I_A=\{1,\dots,e_0\}$ and $I_R=\{e_0+1,\dots,e\}$ without loss of generality.

Consider the error locator 
\begin{equation}\label{eq-MultErrInt}
E(X):=\prod_{j=1}^{e}\left(X-\alpha_{j}\right)^{d_i}.
\end{equation}
$E(X)$ has degree $D=\sum_{i=1}^{e} d_i = e(s-L+1)-\sum\limits_{i\in I_A} w_i$. 

We define $L$ polynomials $A_{1}(x),A_{2}(x),\cdots,A_{L}(x)$ with degree at most $k-1+D$ as 
\begin{equation}\label{eq-MultInterSol}
A_{h}\left(X\right)=f^{(h-1)}\left(X\right)E\left(X\right) \quad h\in\{1,2,\cdots,L\}
\end{equation}

\begin{lemma}\label{lem-multConstSol}
$P E(X)$ and $(PA_1(X),\hdots,PA_L(X))$ where $E$ and $A_i,i\in \{1,\hdots,L\}$ is as defined in \eqref{eq-MultErrInt} and \eqref{eq-MultInterSol} respectively, and $P$ is any polynomial of degree at most $e (s-L+1)-D=\sum_{j\in I_A} w_j$ is a solution to the $L$-length Multiplicity interpolation of degree $e(s-L+1)$.

In other words, $X^i E(X)$ and $(X^iA_1(X),\hdots,X^iA_L(X))$ where $i\in \{0,\hdots,e (s-L+1)-D\}$ span a dimension $e (s-L+1)-D+1$ space of solutions for the $L$-length multiplicity interpolation of degree $e(s-L+1)$. 
\end{lemma}
\begin{proof}
    For any coordinate outside of $I_A$ and $I_R$, we know that $y_{j,i+1}=f^{(i)}(\alpha_j)$. The equations for the multiplicity interpolation for these coordinates are shown to be satisfied then using 2 and 3 in Lemma~\ref{lem:HasseProp}.
    \begin{align*}\sum\limits_{\ell=0}^{i-1} \binom{h+i-2-\ell}{h-1} y_{j,i+h-1-\ell}E^{(\ell)}\left(\alpha_j\right)&=\sum\limits_{\ell=0}^{i-1} \binom{h+i-2-\ell}{h-1} f^{(i+h-2-\ell)}(\alpha_j)E^{(\ell)}\left(\alpha_j\right)\\
    &=\sum\limits_{\ell=0}^{i-1} (f^{(h-1)})^{(i-1-\ell)}(\alpha_j) E^{(\ell)}\left(\alpha_j\right)\\
    &= (f^{(h-1)}E)^{(i-1)}(\alpha_j)=A_h^{(i-1)}(\alpha_j).
    \end{align*}
    
    For $j\in I_R$, we see that $E^{(i)}(\alpha_j)=0$ for all $i\in \{0,\hdots,s-L\}$. We see that both sides of interpolation equations \eqref{eq:Int-mult} are $0$ in this case.

    For $j\in I_A$, if $w_j=0$ then the argument is the same as for coordinates in $I_R$. If $w_j>0$, we see that $E^{(i)}(\alpha_j)=0$ for all $i\in \{0,\hdots,s-L-w_j\}$. We also have $y_{j,i}=f^{(i-1)}(\alpha_j)$ for $i\in \{1,\hdots,w'_j-1\}$.
    We note then,
    \begin{align*}\sum_{\ell=0}^{i-1} \binom{h+i-2-\ell}{h-1} 
 y_{j,i+h-1-\ell}  E^{(\ell)}(\alpha_j)&= \sum_{\ell=s-L+1-w_j}^{i-1}  \binom{h+i-2-\ell}{h-1} f^{(i+h-2-\ell)}(\alpha_j) E^{(\ell)}(\alpha_j),\\
 &= \sum_{\ell=0}^{i-1}  \binom{h+i-2-\ell}{h-1} f^{(i+h-2-\ell)}(\alpha_j) E^{(\ell)}(\alpha_j)= A_h^{(i-1)}(\alpha_j),
 \end{align*}
    as $\ell \ge s-L+1 - w_j $ implies $i-1+h-\ell \le s-\ell \le w_j+L-1=w'_j-1$.

    The same argument holds for the multiple of these solutions.
\end{proof}

\paragraph{Equivalency of solutions.} Similar to \cref{lem:frsequivalency}, we then show that the solutions to $L$-length multiplicity interpolation actually yield a correspondence to the solutions found in Algorithm~\ref{algo:Multmain}. 
\begin{lemma}\label{lem:MULTequivalency}
Assume $k>1$. Each non-zero solution $A_1(X),\dots, A_L(X), E(X)$ to $L$-length multiplicity interpolation of degree $D$ satisfies
\begin{compactitem}
\item[(1)] There exists unique $C'_1(X),\dots,C'_L(X)$ such that $A_h(X)=Q_h(X)E(X)+Q_0(X)C'_h(X)$ holds for all $h\in[L]$.

\item[(2)] $0<\max\left(A_1(X),\dots,A_L(X),X^{k-1}E(X)\right)\leq k-1+D$.
\end{compactitem}
The backward direction also holds. Namely, any possible solution $B_1(X),\dots,B_L(X),E(X)$ of Algorithm~\ref{algo:Multmain} satisfying $0<\max\left(B_1(X),\dots,B_L(X),X^{k-1}E(X)\right)\leq k-1+D$ is a non-zero solution to $L$-length multiplicity interpolation of degree $D$.
\end{lemma}
\begin{proof}
We first prove the forward direction: Fix any solution $A_1(X),\dots,A_L(X),E(X)$ to $L$-length multiplicity interpolation of degree $D$. Using Lemma~\ref{lem:HasseProp}, for any $h\in[L], j \in [s-L+1]$, $X=\alpha_1,\hdots,\alpha_n$ will be a root of
$$(A_h-Q_hE)^{(j-1)}(X).$$
This means $A_h(X)-Q_h(X)E(X)$ is divisible by $Q_0$. This means there exist a unique $G_h\in \F_q[X]$ such that $A_h(X)=Q_h(X)E(X)+Q_0(X)G_h(X).$ If we set $C'_h=G_h$ we see that both condition (1) and (2) in our lemma are satisfied.

For the backward direction, for any possible solution $\big\{B_h(X)=Q_h(X)E(X)+Q_0(X)C_h(X)\big\}_{h\in[L]}$,  $E(X)$ of Algorithm~\ref{algo:Multmain} satisfying $0<\max\left(B_1(X),\dots,B_L(X),X^{k-1}E(X)\right)\leq k-1+D$, we want to show that
\[
B_h^{(i-1)}(\alpha_j)=\sum\limits_{\ell=0}^{i-1}\binom{h+i-2-\ell}{h-1}y_{j,i+h-1-\ell}E^{(\ell)}(\alpha_j)
\]
holds for any $h\in[L],i\in[s-L+1],j\in[n]$. This will imply that $B_1(X),\dots,B_L(X),E(X)$ is a non-zero solution of $L$-length multiplicity interpolation of degree $D$.

As $Q_0^{(i-1)}(\alpha_j)=0$ for all $j\in [n]$ and $i\in [s-L+1]$, we have 
$$B_h^{(i-1)}(\alpha_j)= \sum\limits_{\ell=0}^{i-1} Q_h^{i-1-\ell}(\alpha_j) E^{(\ell)}(\alpha_j).$$
As $Q_h^{(i-1)}(\alpha_j)=\binom{i+h-2}{h-1}y_{j,i+h-1}$ we are done.
\end{proof}

\subsubsection{Reducing \cref{thm-multAlgAna} to a Matrix Rank Condition}

We first define the following matrices which will be useful in describing the interpolation equations. We denote $\vec{\alpha}:= (\alpha_1,\dots,\alpha_n)$.
\begin{align*}
\bM&:= \Eval_{[0,k-1+e(s-L+1)]}(\vec{\alpha},(s-L+1,\hdots,s-L+1))\in \mathbb{F}_q^{(s-L+1)n\times (k+e(s-L+1))},\\
\bL_h&:= \BlockDiag\left(\Lower\left(\binom{i+h-2}{h-1}y_{j,i+h-1}\right)_{i=1}^{s-L+1}\right)_{j=1}^n \in \F_q^{n(s-L+1)\times n(s-L+1)} \\
\bN &:= \Eval_{[1,D]}(\vec{\alpha}, (s-L+1,\hdots,s-L+1)) \in \F_q^{ n(s-L+1)\times D}\\
\bN_h&:= \bL_h \bN \in \F_q^{n(s-L+1)\times D}
\end{align*}

Similar to the previous two results we consider the matrix,
\[
\bB:=\begin{pmatrix}
\bM & & & & -\bN_1\\
 & \bM & & & -\bN_2\\
 & & \ddots & &\vdots\\
 & & & \bM & -\bN_L
\end{pmatrix},
\]
it has $L(k+e(s-L+1))+D$ columns and $L(s-L+1)n$ rows.

\begin{theorem}\label{thm:mult-non-singular}
   The matrix $\bB$ is non-singular over $\F_q$ for any set of non-zero distinct evaluation points with probability at least $1-\frac{(e-e_0)(s-L+1)}{q}$.
\end{theorem}

We postpone the proof of \cref{thm:mult-non-singular} to \cref{sec:mult-bound}. As before, we first show that whenever $\bB$ is non-singular the solutions described in Lemma~\ref{lem-multConstSol} are in fact all valid solutions to the $L$-length multiplicity interpolation.

\begin{corollary}\label{cor:multKer}
     The constructed solutions in Lemma~\ref{lem-multConstSol} are in fact all valid solutions to the $L$-length multiplicity interpolation with probability at least $1-\frac{(e-e_0)(s-L+1)}{q}$.
\end{corollary}
\begin{proof}
Recall the $L$-length multiplicity interpolation in \cref{def:mult-inter} of degree $(D,e(s-L+1))$, for any $h\in[L]$, we define the coefficients of these polynomials as
\[A_h(X):=\sum_{i=0}^{k-1+e(s-L+1)}a^{(h)}_iX^i.\]
We restrict $E$ to be of degree at most $D$ and represent its coefficients as
\[E(X):=b_0+\sum_{i=1}^{D}b_iX^i.\]
We define vectors $\vec{b}\in\F_q^{D+1},\vec{a}_h\in\F_q^{k+e(s-L+1)},\vec{y'}_h\in\F_q^{(s-L+1)n}$ by
\begin{align*}
\vec{b} &:=(b_1,\dots, b_{D}, b_0)^T,\\
\vec{a}_h &:=(a^{(h)}_0,a^{(h)}_1,\dots,a^{(h)}_{k-1+e(s-L+1)})^T\text{, and}\\
\vec{y'}_h &:=\left(\left\{\binom{i+h-2}{h-1}y_{1,i+h-1}\right\}^{s-L+1}_{i=1},\dots,\left\{\binom{i+h-2}{h-1}y_{n,i+h-1}\right\}^{s-L+1}_{i=1}\right)^T.
\end{align*}
Meanwhile, we define the matrix ${\bf Y}\in\F_q^{(s-L+1)Ln\times 1}$ by
\[{\bf Y}:=(\vec{y'}_1,\vec{y'}_2,\hdots,\vec{y'}_L),\]
and the solution vector $\vec{\bf s}\in\F_q^{L(k+\be)+\be+1}$ by
\[\vec{\bf s}:=(\vec{a}_1,\vec{a}_2,\hdots,\vec{a}_L,\vec{b}).\]

We see that the solutions to
\[
({\bB}|{\bf Y})\cdot \vec{{\bf s}}=0
\]
correspond exactly to the solutions of the interpolation of the multiplicity of length $L$ of degree $(D,e(s-L+1))$. If the block matrix $\bB$ has full column rank, then the original interpolation matrix $({\bB}|{\bf Y})$ has a kernel
with dimension at most 1. This also means the $L$-length multiplicity interpolation of degree $e(s-L+1)$ has a solution space of dimension at most $e(s-L+1)-D+1$. As the solutions generated in Lemma~\ref{lem-multConstSol} have exactly this dimension we are done.
\end{proof}

We see the above along with Lemma~\ref{lem:MULTequivalency} imply Theorem~\ref{thm-multAlgAna}. To complete the argument we now proceed to prove Theorem~\ref{thm:mult-non-singular}.

\subsubsection{Bounding Matrix Rank: Proof of Theorem~\ref{thm:mult-non-singular}}\label{sec:mult-bound}
Recall, we let $m_r(x)=x^r$. We may assume the random errors are on $I_R=\{e_0+1,\hdots,e\}$ and the adversarial errors are on $I_A=\{1,\hdots,e_0\}$ for the original message polynomial $f(x)=\sum_{t=0}^{k-1}f_tx^t$. We note that for any $h\in[L],r\in\{1,\dots,D\}$ adding $\binom{t}{h-1}f_t$ times the $t+1+r-(h-1)$-th column of $\bM$ for all $t\in\{0,1,\dots,s-1\}$ gives us a column which equals $(m_rf^{(h-1)})^{(i)}(\alpha_j)$ for all $j\in [n],i\in \{0,\hdots,s-L\}$.

By the definition, we know for any $j\not\in I_R\cup I_A, i\in[s]$, the entry in the received word $y_{j,i}=f^{(i-1)}(\alpha_j)$. Using the observation in the previous paragraph we can use column operations to clear the bottom $(n-e)(s-L+1)$ rows of $-\bN_h$ which will become all zero after these operations. The rank will be unchanged, and $\bB$ will be transformed to
\[\bB':=\begin{pmatrix}
\bM & & & & -\bN'_1\\
 & \bM & & & -\bN'_2\\
 & & \ddots & &\vdots\\
 & & & \bM & -\bN'_L
\end{pmatrix},\]
where 

\begin{align*}
\bL'_h&:=\BlockDiag\left(\Lower\left(\binom{i+h-2}{h-1}z_{j,i+h-1}\right)_{i=1}^{s-L+1}\right)_{j=1}^n \\
\bN'_h&:= \bL'_h \bN
\end{align*}
and
\[z_{j,i}:=-y_{j,i}+f^{(i-1)}(\alpha_j).\]
As $z_{j,i}=0$ for $j\ge e+1$ we see that the bottom $(n-e)(s-L+1)$ rows of $\bN'_h$ are $0$.
For each $j\in I_R, i\in\{1,\dots,s\}$, consider the formal variable $Y_{i,j}$. Let $\bB''$ be the matrix over $\mathbb{F}_q[\{Y_{j,i}\}_{j\in I_R,i\in[s]\}}]$ where each term $z_{j,i}$ for $j\in I_R$ of $\bB'$ is replaced with $Y_{j,i}$ for $j\in I_R$ (we similarly define $\bN''_h$). Note that $\{z_{j,i}\}_{j,i}$ are distributed uniformly if and only if $\{Y_{j,i}\}_{j,i}$ are distributed uniformly. The value $z_{j,i}$ for $j\in I_A$ are exactly the adversarial errors which may or may not be zero.

The goal in the remainder of the proof is to pick a square submatrix of $\bB''$ and find a unique monomial in its determinant expansion with non-zero coefficient. Applying Schwartz-Zipple for this sub-determinant will give us our final result.

We define $R_i,B_i,U_i$ for $i\in [L]$ here $R_i+B_i$ represent the number of columns we will pick from $\bN'_i$. $R_i$ will be coming from the `random part' and $B_i$ will be from the `adversarial part'. $U_i$ are the extra rows that we will remove from $\bN'_i$.

$B_L$ will be picked to the largest possible and we would reverse inductively pick, $B_{L-1},\hdots,B_1$. We let 
$$d_{j,h}=\max\left(\left\{i| z_{j,s-L+1+h-i}\ne 0, i\in [s-L+1]\right\},0\right)$$
for, $j\in I_A,h\in [L]$. We let $b_{j,L}=d_{j,L}$ for $j\in I_A$. We then inductively define,
$$b_{j,L-i}=\max\left(d_{j,L-i}-\sum\limits_{\ell=0}^{i-1} b_{j,L-\ell},0\right),$$
for $i\in [L-1]$. Finally, we set $B_i=\sum_{j\in I_A} b_{j,i}, i\in [L]$. We note $b_{j,1}+\hdots+b_{j,L}=\max(d_{j,1},\hdots,d_{j,L})=s-L+1-w_j$ and $B_1+\hdots+B_L=e_0(s-L+1)-\sum_{j\in I_A} w_j$.

Next we want to ensure $R_i+U_i=(n-e)(s-L+1)-k-B_i$. As $|I_A|=e_0\le n-e-k/(s-L+1)$ we have $(n-e)(s-L+1)-k-B_i\ge 0$. We also must have $\sum_{i=1}^L U_i=L(s-L+1)(n-e)-Lk-D$ and $\max(0,-k+(n-2e)(s-L+1))\le U_i\le (n-e)(s-L+1)-k-B_i$ (the need for the lower bound will be clear later). The constraint on $U_i$, implies $R_1+\hdots+R_L=(e-e_0)(s-L+1)$.

\begin{claim}\label{clm-chooseRi}
It is possible to choose $R_i,i\in [L]$ such that $R_i$ is divisible by $s-L+1$.
\end{claim}
\begin{proof}
We take two cases, if $-k+(n-2e)(s-L+1)\ge 0$ then we set $R_1=(e-e_0)(s-L+1)$ and $R_i=0$ for $L\ge i>1$. As $(n-e)(s-L+1)-k-B_1-R_1\ge (n-2e)(s-L+1)-k-\sum_{i=1}^L B_i+e_0(s-L+1)\ge -k+(n-2e)(s-L+1)$ we can solve for $U_i$.

If $-k+(n-2e)(s-L+1)<0$, we note $U_i\in [0, (n-e)(s-L+1)-k-B_i]$.
We set  $R_i=(s-L+1)c_i,i\in [L]$ so that $c_1+\hdots+c_{L}=e-e_0$. We have the constraint that $c_i\le \lfloor n-e-(k+B_i)/(s-L+1)\rfloor $.
We first note,
\begin{align*}\label{eq-Uibound}\sum\limits_{i=1}^L \frac{U_i}{L(s-L+1)} &= (n-e)-\frac{(Lk+D)}{L(s-L+1)}\\
&\ge n-\frac{k}{s-L+1} -\frac{L+1}{L}e\\
&\ge 1.\numberthis
\end{align*}
In the last step we used the fact that $e\le \frac{L}{L+1}(n-\frac{k}{s-L+1}-1)$. We next note,
\begin{align*}
\sum\limits_{i=1}^L(s-L+1)\left\lfloor n-e-\frac{(k+B_i)}{s-L+1}\right\rfloor &\ge L(s-L+1)(n-e)-Lk-\sum\limits_{i=1}^L B_i-L(s-L+1)\\
&\ge L(s-L+1)(n-e)-Lk-\sum\limits_{i=1}^L B_i - \sum\limits_{i=1}^L U_i\\
&= \sum\limits_{i=1}^L R_i= (e-e_0)(s-L+1).
\end{align*}
In the first step above we used \eqref{eq-Uibound}. Given the above inequality we see that we can solve for $c_1,\hdots,c_L$ under the given constraints proving the claim.
\end{proof}

For each $R_i$ we let $R_i=\sum\limits_{j\in I_R} r_{j,i}$. $r_{j,i}$ correspond to the columns picked from $j\in I_R$. We need to have $r_{j,1}+\hdots+r_{j,L}=s-L+1$. As each $R_i$ is divisible by $s-L+1$, we let $r_{e_0+1,1}=\hdots=r_{e_0+R_1/(s-L+1),1}=r_{e_0+R_1/(s-L+1)+1,2}=\hdots=r_{e_0+(R_1+R_2)/(s-L+1),2}=\hdots=r_{e_0+(R_1+\hdots+R_{L-1})/(s-L+1)+1,L}=\hdots=r_{e_0+(R_1+\hdots+R_{L})/(s-L+1),L}=s-L+1$ and the remaining $r_{j,i}$ we set to $0$.

For each $h\in[L]$, we remove arbitrary $U_h$ rows from the last $(n-e)(s-L+1)$ rows in the $h$th block of rows in $\bB''$ (in other words those rows which contain the last $(n-e)(s-L+1)$ rows of $\bN''_h$) to obtain the matrix $\bB_0$. 

$\bB_0$ is a square matrix of size $L(k+e(s-L+1))+D$. If $\det {\bB_0}\neq 0$, then we have that $\det {\bB_0}$ is a polynomial with total degree at most $(e-e_0)(s-L+1)$. By Schwartz-Zippel lemma, we know by independently and uniformly sampling the assignments of $\{Y_{j,i}\}$ over $\mathbb{F}_q$, the evaluation of $\det {\bB_0}$ will be non-zero with probability at least $1-\frac{(e-e_0)(s-L+1)}{q}$, which means $\bB$ will have full column rank. 

To finish the proof we just need to show that $\bB_0$ has non-zero determinant.

Since $\alpha_j$ are non-zero distinct elements, the following matrix 
$$\bP:=\Eval_{[1,D]}((\alpha_1,\hdots,\alpha_n),(s-L+1-w_1,\hdots,s-L+1-w_{e_0},s-L+1,\hdots,s-L+1))\in \F_q^{D\times D}$$ is non-singular over $\mathbb{F}_q$ (as no non-zero degree at most degree $D$ polynomial can have at most $D$ roots counted with multiplicity).

We construct $\bB^\SYM$ as follows. Note that the rank of $\bB^\SYM$ is the same as $\bB_0$.
\[
\bB^\SYM:=\bB_0
\begin{pmatrix}
\bI_{L(k+e(s-L+1))}& \\
 & \bP^{-1}
\end{pmatrix}.
\]
For convenience, we set $(e-e_0)(s-L+1)=e_R$ and $e_0(s-L+1)=e_A$.
Then, the matrix $\bB^\SYM$ will have the following form
\[
\bB^\SYM=
\begin{pmatrix}
\bM^{\row}_{[e_A]}& & & & \bZ_1 & \textbf{0}\\
\bM^{\row}_{[e_A+1,e_R+e_A]}& & & &\textbf{0} & \bY_1\\
\bM^{\row}_{[e_R+e_A+1,n(s-L+1)-U_1]}& & & & \textbf{0} & \textbf{0} \\
 & \bM^{\row}_{[e_A]}& & & \bZ_2 & \textbf{0}\\
 &\bM^{\row}_{[e_A+1,e_R+e_A]}& & & \textbf{0} & \bY_2\\
  & \bM^{\row}_{[e_R+e_A+1,n(s-L+1)-U_2]}& & & \textbf{0} & \textbf{0}\\
 & & \ddots & &\vdots\\
 & & & \bM^{\row}_{[e_A]} & \bZ_L &\textbf{0}\\
 & & & \bM^{\row}_{[e_A+1,e_R+e_A]} & \textbf{0} & \bY_L\\
 & & & \bM^{\row}_{[e_R+e_A+1,n(s-L+1)-U_L]} & \textbf{0} & \textbf{0}
\end{pmatrix}
\]
where
\[
\bY_h= \BlockDiag\left(\Lower\left(\binom{i+h-2}{h-1}Y_{j,i+h-1}\right)_{i=1}^{s-L+1}\right)_{j=e_0+1}^e,
\]

\[\bZ_h=\BlockDiag\left(\Lower^{\col}_{[s-L+1-w_{j}]}\left(\binom{i+h-2}{h-1}z_{j,i+h-1}\right)_{i=1}^{s-L+1}\right)_{j=1}^{e_0}.\]
We crucially used the fact that $z_{j,i}=0$ for $j\in I_A$ and $i\le w'_j-1$ as that implies the last $w_j$ columns of $\Lower(z_{j,i+h-1})_{i=1}^{s-L+1}$ are $0$ for all $h\in [L]$.
By definition,
$$\det \bB^\SYM=\sum\limits_{\sigma \in \textsf{Perm}(D+L(k+e(s-L+1)))} \textsf{sign}(\sigma)\prod\limits_{i=1}^{D+L(k+e(s-L+1)}\bB^\SYM(i,\sigma(i)),$$
where $\textsf{Perm}(n)$ is the set of all permutations over $[n]$. 
We will show one of these terms will give us a unique monomial with non-zero coefficients. $(i,\sigma(i))$ gives us a set of elements in $\bB^\SYM$ such that each row and column has exactly one element picked. For convenience we use,

\begin{equation*}
\bM_h:=\begin{pmatrix}
\bM^{\row}_{e_A}\\
\bM^\row_{e_A+1,e_R+e_A}\\
\bM^\row_{e_A+e_R+1,n(s-L+1)-U_h}
\end{pmatrix},
\end{equation*}
to denote the $h$-th copy of  sub-matrix $\bM$ in $\bB^\SYM$.

\begin{claim}\label{clm:count_term_mult}
For each non-zero $\prod_{i=1}^{D+L(k+e(s-L+1))}\bB^\SYM(i,\sigma(i))$, it must contain exactly $(n-e)(s-L+1)-k-U_h$ terms from $\bY_h$ and $\bZ_h$ for every $h\in[L]$.
\end{claim}
\begin{proof}
The sub-matrix $\bM^{\row}_{[e_R+e_A+1,n(s-L+1)-U_h]}$ in $\bB^\SYM$ has $(n-e)(s-L+1)-U_h$ rows and the rest of the rows in which it lies are zero. After selecting terms in $\bM^{\row}_{[e_R+e_A+1,n(s-L+1)-U_L]}$ and removing the selected columns, the remaining sub-matrix in $\bM_{h}$ will still have $k+(2e-n)(s-L+1)+U_h\ge 0$ columns (this where we need the lower bound on $U_h$). This will lead to the choice of the same number of rows from $\bM^{\row}_{e_A}$ and $\bM^\row_{e_A+1,e_R+e_A}$ in total. There still will be $(n-e)(s-L+1)-k-U_h$ rows in excess and terms for which will have to be chosen from $\bY_h$ and $\bZ_h$. 
\end{proof}
For convenience we let,
$$\bZ_{j,h}=\Lower^{\col}_{[s-L+1-w_{j}]}(z_{j,i+h-1})_{i=1}^{s-L+1}$$ for $j\in \{1,\hdots,e_0\}$ and 
$$\bY_{j,h}=\Lower\left(\binom{i+h-2}{h-1}Y_{j,i+h-1}\right)_{i=1}^{s-L+1}$$ for $e_0+1\le j \le e$.

Recall that, $B_h+R_h=(n-e)(s-L+1)-k-U_h$. In the end, we will have that $R_h$ rows will be chosen from $\bY_h$ and $B_h$ rows will be chosen from $\bZ_h$. We first state a claim that any non-zero expansion has to follow. This constraint is similar to (\ref{eq:frs-prefix-small}). The following claim is a counterpart of \cref{clm:frs-prefix} for the multiplicity code.

\begin{claim}\label{cl-advrows-prefix}
For each non-zero $\prod_{i=1}^{D+L(k+e(s-L+1))}\bB^\SYM(i,\sigma(i))$, the following holds.
\begin{itemize}
\item For all $h\in[L]$, we must choose at most $\sum^h_{i=1}R_i$ elements in $\bY_1,\dots,\bY_h$. 
\item For all $h\in[L]$, we must choose at least $\sum^h_{i=1}R_{L-h+i}$ elements in $\bY_{L-h+1},\dots,\bY_L$
\end{itemize}

\end{claim}
\begin{proof}
As $R_1+\hdots+R_L=(e-e_0)(s-L+1)$ and a determinant expansion must pick one element from each column, we see the two properties in this claim are equivalent. We will now prove the second one.

Fix any $h\in[L]$, we note that the matrix 
$$\begin{pmatrix}
\bZ_{j,L-h+1}\\
\bZ_{j,L-h+2}\\
\vdots\\
\bZ_{j,L}
\end{pmatrix}$$
is zero outside of its first $\sum\limits_{i=1}^h b_{j,L-i+1}\ge \max(d_{j,L-h+1},\hdots,d_{j,L})$ columns. This means we can pick at most $\sum\limits_{i=1}^h b_{j,L-i+1}$ many elements from $\bZ_{j,L-h+1},\hdots,\bZ_{j,L}$. Summing over $j\in I_A$ implies that we can pick at most $B_{L-h+1}+\hdots+B_{L}$ elements from $\bZ_{L-h+1},\hdots,\bZ_L$.

Since we have shown that we have to pick exactly $\sum^h_{i=1}(R_{L-h+i}+B_{L-h+i})$ elements in $\bZ_{L-h+1},\hdots,\bZ_L$ and $\bY_{L-h+1},\hdots,\bY_L$ in total, it follows that we must choose at least $\sum^h_{i=1}R_{L-h+i}$ elements in $\bY_{L-h+1},\dots,\bY_L$. 

\end{proof}

We let $\mathcal M$ be the set of monomials in $\det \bB^\SYM$ which can be obtained under the constraints of Claim~\ref{cl-advrows-prefix}. 

We will give a procedure to construct a monomial in $\mathcal M$ which is obtained by a unique choice of entries in $\bY_h$, takes exactly $R_h$ terms in $\bY_h$, and has a special property. We note we need $s<\operatorname{Char}(\F_q)$ so that the binomial coefficients $\binom{i+h-2}{h-1}$ are non-zero when $s\ge i+h-2\ge h-1$.

\begin{claim}\label{clm-multMono}
Conditioned on \cref{cl-advrows-prefix}, there is a monomial in $\mathcal M$ which is constructed by a unique selection of elements in $\bY_h$ such that
\begin{itemize}
\item Exactly $R_h$ rows are selected in $\bY_h$,
\item the rows used in $\bY_{j,h}$ for $j\in \{e_0+1,\hdots,e\}$ is either empty or a contiguous block of rows starting from the bottom of $\bY_{j,h}$.
\end{itemize}

\end{claim}
\begin{proof}
We will prove the monomial $\prod\limits_{j\in I_R,i\in [L]}Y_{j,i}^{r_{j,i}}$ satisfies both the properties.
We first note that $T_i=\prod\limits_{j\in I_R}Y_{j,i}^{r_{j,i}}$ has degree $R_i$. Using the values of $r_{j,i}$ we choose we have, 
$$T_i=Y_{e_0+(R_1+\hdots+R_{i-1})/(s-L+1)+1,i}^{s-L+1}\hdots Y_{e_0+(R_1+\hdots+R_i)/(s-L+1),i}^{s-L+1}.$$ 
We see $T_1$ is of degree $R_1$ and as $Y_{j,1}$ terms only appear in the top-most block we have to pick all the terms corresponding to $T_1$. By Claim~\ref{cl-advrows-prefix}, we cannot pick more than $R_1$ terms from $\bY_1$ and as $T_1$ uses $R_1$ terms no more terms can be picked.

Iteratively, $T_i$ uses $R_i$ terms and by this point we are forced to have picked $R_1+\hdots+R_{i-1}$ terms from $\bY_1,\hdots,\bY_{i-1}$ which means we can pick at most $R_i$ terms from $\bY_i$ and no more terms from $\bY_1,\hdots,\bY_{i-1}$. As $Y_{j,i}$ only appears in $\bY_1,\hdots,\bY_i$ we are now forced to pick all the terms for $T_i$ from $\bY_i$.

We see that we either pick a complete block of rows from $\bY_{j,h}$ or an empty block so the second property in the claim is immediately satisfied.
\end{proof}

Let $m_C$ be the monomial in Claim~\ref{clm-multMono}. We need one final claim about the $B_h$ rows which will be chosen from $\bZ_h$.

\begin{claim}\label{cl-advrows}
There is only one way of choosing elements with a non-zero product while picking $B_h$ rows in $\bZ_h$ when calculating the coefficient of $m_C$ and the rows chosen from $\bZ_{j,h}$ are either empty or a contiguous block of rows starting from the bottom of $\bZ_{j,h}$.
\end{claim}
\begin{proof}

We prove this by noting that the square sub-matrix of $\bZ_{j,L}$ using the bottom $b_{j,L}$ rows and the first $b_{j,L}$ columns is lower triangular with a non-zero diagonal and all terms outside of this sub-matrix in $\bZ_{j,L}$ are $0$. As $\sum_{j\in I_A} b_{j,L}=B_L$, for each $\bZ_{j,L}$ we must choose elements from the non-zero diagonal elements from the bottom $b_{j,L}$ rows and the first $b_{j,L}$ columns.

Iteratively, after arguing unique choice for $\bZ_L,\hdots,\bZ_{L-h}$, the first $b_{j,L}+\hdots+b_{j,L-h}$ columns are removed from $\bZ_{j,L-h-1}$ to get $\bZ'_{j,L-h-1}$. Now, the square-sub matrix of $\bZ'_{j,L-h-1}$ with the first $b_{j,L-h-1}$ columns and the bottom $b_{j,L-h-1}$ rows is lower triangular with non-zero diagonal and outside of it $\bZ'_{j,t+1}$ is zero. As $\sum_{j\in I_A} b_{j,L-h-1}=B_{L-h-1}$, we must pick these diagonal elements.
\end{proof}

For any $h\in[L]$, let $\bM'_h$ denote the truncated $\bM_h$ after removing all its rows selected in block $\bY_h$ and $\bZ_h$ given by the unique choices in Claim~\ref{clm-multMono} and Claim~\ref{cl-advrows}. By construction, $\bM'_h$ is a square matrix of size $k+e(s-L+1)\times k+e(s-L+1)$. We recall $\bM_h$ is composed of $n$ blocks of rows corresponding to $\Eval_{[0,k-1+e(s-L+1)]}(\alpha_i,s-L+1)$. As the rows removed in each of these blocks is contiguous with the bottom, $\bM'_h$ will equal $\Eval_{0,k-1+e(s-L+1)}((\alpha_1,\hdots,\alpha_n), (u_{1,h},\hdots,u_{n,h}))$ such that the $u_{1,h}+\hdots+u_{n,h}=k+e(s-L+1)$.
This guarantees that the determinant of $\bM'_h$ is non-zero as a polynomial of degree at most $k-1+e(s-L+1)$ cannot have $k+e(s-L+1)$ roots when counted with multiplicity.
Therefore, the coefficient of $m_C$ is non-zero as, up to signs, it is a product of the determinants of $\bM'_h$ and the non-zero terms picked in Claim~\ref{cl-advrows}.

\section{Conclusions and Future Directions}\label{sec:concl}

In this paper, we proposed a new error model, \emph{semi-adversarial errors}, for studying the efficient decodability of error-correcting codes. Bridging between the adversarial and fully random error models, our interpolation and analysis techniques leveraged ideas from both models to obtain new results in the semi-adversarial model. First, we improved the analysis of Bleichenbacher--Kiayias--Yung's decoding algorithm for interleaved Reed--Solomon codes to show that the algorithms achieves the optimal unique decoding radius for most combinations of adversarial and random errors in the semi-adversarial model. Second, we found a novel application of Alekhnovich's interpolation algorithm to extend BKY's interpolation algorithm to the settings of folded Reed--Solomon codes and multiplicity codes. By extending our analysis methods for interleaved Reed--Solomon codes, we were able to show near information-theoretically optimal decoding for folded Reed--Solomon and multiplicaty codes as well. We conclude the paper by discussing some directions for future research.

\paragraph{Coppersmith--Sudan in the semi-adversarial model.} This paper primarily focused on understanding the limits of the Bleichenbacher--Kiayias--Yung algorithm \cite{bleichenbacher2003decoding} in the semi-adversarial model. However, contemporaneous to the BKY algorithm, Coppersmith and Sudan \cite{coppersmith2003reconstructing} independently came up with a different interpolation algorithm --- the CS algorithm --- for uniquely decoding interleaved Reed--Solomon codes. As discussed in the literature overview, the CS algorithm uses the interpolation matrix for a generalization of the Guruswami Sudan algorithm for interleaved Reed-Solomon codes. It has non-linear terms in the received word in this matrix unlike BKY which only has linear terms. This linearity was very helpful in the analysis of our algorithm, and the lack of it poses a challenged for analyzing CS in the semi-adversarial model. Furthermore, CS algorithm only decodes to the distance $1-R^{s/(s+1)}-R$, which is incomparable with BKY (CS beats the BKY rate $\frac{s}{s+1}(1-R)$ for small $R$ for a fixed $s$). In future work, we plan to tackle these challenges and explore the limits of the CS algorithm in the semi-adversarial model.

\paragraph{Unique Decoding up to $1-R^{s/(s+1)}$ distance under random errors.}
A related problem would be to analyze the Multivariate Interpolation Decoding algorithm of Parvaresh and Vardy~\cite{Parvaresh2004MultivariateID}. They conjecture that it should be able to decode up to a $1-R^{s/(s+1)}$ distance with high probability. Power decoding algorithms~\cite{puchinger2017decoding} are also conjectured to decode with high probability to this radius. The CS algorithm is also conjectured to decode to a $1-R^{s/(s+1)}$ distance. Currently, no algorithm has been theoretically proven to achieve this guarantee. 

\paragraph{List-decoding in the semi-adversarial model.} So far, we have only discussed the limits of unique decodability in the semi-adversarial model. A natural question is to what extent these results can be extended to the setting of list decoding. To approach this question, a natural first task is to explore the \emph{combinatorial} limits of list-decoding in the semi-adversarial model.

Recall that a code $\cC \subseteq \Sigma^n$ is $(e, L)$-list-decodable (in the adversarial model) if for any message $\vec{y} \in \Sigma^n$ we have that at most $|\cC \cap \cB(\vec{y}, e)| \le L$. The generalized Singleton bound of Shangguan and Tamo~\cite{ST23} says that for any $L \ge 1$, if a code $\cC \subseteq \Sigma^n$ is $(e,L)$-list decodable then $e \le \frac{L}{L+1}(n-k)$. 

This bound can be extended to the semi-adversarial model. In particular, given $0 \le e_0 \le e \le n$, we say that a code $\cC$ is $(e_0, e, L)$-list-decodable if for any $\vec{c} \in \cC$, when sending $\vec{c}$ through the $(e_0, e)$-semi-adversarial channel, the output message $\vec{y}$ satisfies $|\cB(\vec{y}, e)| \le L$ with high probability, no matter what choices the adversary makes in the channel. We state the generalized semi-adversarial Singleton bound as follows.

\begin{proposition}[GSSB: Generalized Semi-adversarial Singleton Bound]\label{prop:list_optimal}
      Consider any code $\cC \subseteq \Sigma^n$ with $R := \frac{\log |\cC|}{n\log |\Sigma|}$ such that $k := Rn$ is integral. For any $(e_0, e, L)$ with $L + 1 \le |\Sigma|$ and  $\lfloor \frac{e_0}{L} \rfloor>(1-R)n-e$, there exists some word $\vec{z}\in \Sigma^n$ such that $|\mathcal{B}(\vec{z},e_0)\cap \mathcal{C}|\ge 1$ as well as $K\subseteq [n], |K|=e-e_0$ such that any received word $\vec{y} \in \Sigma^n$ that differs with $\vec{z}$ only on positions in $K$ satisfies $|\mathcal{B}(\vec{y},e)\cap \mathcal{C}|\ge L+1$.
\end{proposition}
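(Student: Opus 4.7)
The plan is to adapt the Shangguan--Tamo proof of the generalized Singleton bound to the semi-adversarial setting by constructing, for any $(e_0,e,L)$ in the claimed regime, an explicit tuple $(\vec{z}, K, c_0, c_1, \ldots, c_L)$ such that every $\vec{y}$ agreeing with $\vec{z}$ outside $K$ satisfies $\dist(\vec{y}, c_i) \le e$ for all $i \in \{0, 1, \ldots, L\}$. The two conclusions $|\mathcal{B}(\vec{z}, e_0)\cap\mathcal{C}| \ge 1$ and $|\mathcal{B}(\vec{y}, e)\cap\mathcal{C}| \ge L+1$ are then witnessed by $c_0$ and by $c_0, c_1, \ldots, c_L$ respectively.

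First I would apply pigeonhole to locate the codewords. Since $|\mathcal{C}| = q^k$ with $q := |\Sigma|$, restricting $\mathcal{C}$ to any fixed set of $k-1$ coordinates must send at least $q \ge L+1$ codewords to the same string of length $k-1$. Fix such a set $A$ with $|A| = k-1$, and let $c_0, c_1, \ldots, c_L$ be $L+1$ codewords all agreeing on $A$; set $B := [n]\setminus A$, so $|B| = n-k+1$. (If $|B| \le e$ the conclusion is immediate, since any $\vec{y}$ agreeing with the $c_i$'s on $A$ is already within distance $|B| \le e$ of every $c_i$.)

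Next I would construct $\vec{z}$ and $K$ explicitly. The hypothesis $\lfloor e_0/L\rfloor > (1-R)n - e$ is, since both sides are integers, equivalent to $L(|B|-e) \le e_0$. Partition $B = B_0 \sqcup B_1 \sqcup \cdots \sqcup B_L$ with $|B_i| = |B|-e$ for $i \ge 1$ and $|B_0| = |B| - L(|B|-e)$; the above equivalent hypothesis gives $|B_0| \ge |B| - e_0 \ge e - e_0$, so the partition is well-defined and has room to contain a set of size $e-e_0$. Define $\vec{z}$ by setting $z_j$ to the common value of the $c_i$'s for $j \in A$ and $z_j = c_{i,j}$ for $j \in B_i$, and take $K$ to be any subset of $B_0$ of size $e - e_0$.

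The remaining step is a direct counting computation. Since $\vec{z}$ differs from $c_0$ only on $B_1 \cup \cdots \cup B_L$, we get $\dist(\vec{z}, c_0) \le L(|B|-e) \le e_0$, placing $c_0 \in \mathcal{B}(\vec{z}, e_0)$. For every $\vec{y}$ that differs from $\vec{z}$ only on $K$, the crude bound $\dist(\vec{y}, c_i) \le |K| + |\{j \in B\setminus K : z_j \ne c_{i,j}\}|$ evaluates, using $K \subseteq B_0$, to at most $|K| + (|B_0|-|K|) + (L-1)(|B|-e) = e$ when $i \ge 1$, and to at most $|K| + L(|B|-e) = (e-e_0) + L(|B|-e) \le e$ when $i = 0$. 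The main nontrivial step is calibrating the partition so that all $L+1$ distance constraints and the constraint $|K| \le |B_0|$ tighten simultaneously, with all "slack'' absorbed by the anchor block $B_0$ that hosts $K$; the hypothesis on $(e_0, e, L)$ is exactly what is needed to balance these. A minor subtlety is that the $c_i$'s need not be pairwise distinct on $B$, but accidental coincidences only decrease $\dist(\vec{y}, c_i)$, so the upper bound above carries over verbatim.
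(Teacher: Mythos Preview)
Your proof is correct and follows essentially the same approach as the paper: pigeonhole to find $L+1$ codewords agreeing on $k-1$ coordinates, partition the remaining coordinates into blocks assigned to each codeword, and place $K$ inside the anchor block $B_0$. The only cosmetic difference is that the paper first sets aside $e-e_0$ coordinates for $K$ and then splits the remaining $m$ coordinates into parts of size $\lfloor m/(L+1)\rfloor$, whereas you choose $|B_i| = |B|-e$ directly so that the distance bound for $i\ge 1$ comes out exactly $e$; both choices verify the same inequalities.
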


In other words, there exists $\vec{c} \in C$ such that sending $\vec{c}$ through the $(e_0, e)$-semi-adversarial channel results a message $\vec{y}$ for which there are at least $L$ other codewords of $C$ which are within the minimal ball centered at $\vec{y}$ containing $\vec{c}$. We include a proof of \Cref{prop:list_optimal} in Appendix~\ref{app:list-optimal}. For MDS codes, observe that \cref{prop:optimal} is a special case of \cref{prop:list_optimal} when $L=1$.

\Cref{prop:list_optimal} naturally interpolates the known combinatorial decoding bounds for fully adversarial and fully random errors. In particular, when $e_0 = e$, we get precisely the $\left(\frac{L}{L+1}(n-k), L\right)$ generalized Singleton bound (GSB) of Shangguan and Tamo~\cite{ST23}. Further, when $e_0 = 0$, we get an upper bound of $n-k$, which is consistent with the fact that combinatorially random errors can be uniquely decoded up to capacity.\footnote{This can be seen by taking the decoding radius of $\frac{s}{s+1}(n-k)$ for $s$-interleaved Reed--Solomon by the BKY algorithm to the limit as $s \to \infty$.} As a result, we can view \Cref{prop:list_optimal} as suggesting that one can \emph{break} the generalized Singleton bound when we are promised that the channel is not entirely adversarial.

However, to understand the sharpness of the GSSB, one needs to determine whether there exist codes which attain the GSSB, particularly Reed--Solomon codes. In the adversarial case, through a series of papers, \cite{ST23,brakensiek2023generic,guo2023randomly,alrabiah2024randomly}, we now that the GSB can be attained by Reed--Solomon codes with random evaluation points, even when the field size is $O(n)$. For arbitrary subfield evaluation points (i.e., $s$-interleaved RS codes), we only know that $\left(\frac{s}{s+1}(n-k),q^{\gamma sn}\right)$ list decodability is possible \cite{guruswami2022optimal}, where $\gamma$ is a constant in $(0,1)$. However, a recent work  
by Chen and Zhang~\cite{cz24} shows that explicit $s$-folded Reed-Solomon codes are $\left(\frac{L}{L+1}\left(n-\frac{k}{s-L+1}\right),L\right)$ list-decodable. For random evaluation points, some broader families of codes also attain the GSB~\cite{BDG24,BDGZ24}. We leave extending these results to the GSSB as the topic of future work.

{\bibliography{main}}

\appendix

\section{Equivalence Between RS Codes Over a Subfield and Interleaved RS Codes}\label{app:IRS-RS} Let $\F_q$ be a finite field and let $\alpha_1, \hdots, \alpha_n \in \F_q$ be distinct elements. For any $s \ge 1$, we can build two similar-looking codes. 

\begin{itemize}
\item A Reed--Solomon code $\mathcal C_{\RS} := \mathsf{RS}_{n,k,q^s}(\alpha_1, \dots, \alpha_n) \subseteq \F_{q^s}^n$, where we view $\F_q$ as a subfield of $\F_{q^s}$.
\item An interleaved Reed--Solomon code $\mathcal C_{\IRS} := \mathsf{IRS}_{n,k,q}(\alpha_1,\hdots, \alpha_n) \subseteq (\F_q^s)^n$.
\end{itemize}

We observe that these codes are (essentially isomorphic). In particular, let $\gamma$ be a generator of the extension $\F_{q^s} / \F_q$. Then, every $a \in \F_{q^s}$ has a unique representation of the form $a = a_1 + \gamma a_2 + \cdots + \gamma^{s-1} a_{s}$. Thus, we have a bijection $\psi : \F_q^s \to \F_{q^s}$ defined by $\psi(a_1, \hdots, a_s) = a_1 + \gamma a_2 + \cdots \gamma^{s-1} a_{s}$. Furthermore $\psi$ is an isomorphic between the Abelian groups underlying the additive structure of $\F_q^s$ and $\F_{q^s}$. We now show this isomorphism can be lifted between $\cC_{\RS}$ and $\cC_{\IRS}$.

\begin{proposition}\label{prop:irs-iso}
Let $\psi^n : (\F_q^s)^n \to \F_{q^s}^n$ be the coordinate-wise application of $\psi$. We have that $\psi^n$ is an isomorphism between $\cC_{\IRS}$ and $\cC_{\RS}$ in the following sense.
\begin{itemize}
\item[(a)] $\psi^n$ is an isomorphism between $\cC_{\IRS}$ and $\cC_{\RS}$ as Abelian groups.
\item[(b)] $\psi^n$ preserves Hamming distance between codewords.
\end{itemize}
\end{proposition}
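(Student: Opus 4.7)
The plan is to first observe that $\{1,\gamma,\ldots,\gamma^{s-1}\}$ forms an $\F_q$-basis of $\F_{q^s}$, so the map $\psi:\F_q^s\to\F_{q^s}$ is an $\F_q$-linear bijection and in particular an Abelian group isomorphism. Applying this coordinate-wise, $\psi^n$ is automatically an Abelian group isomorphism from $(\F_q^s)^n$ onto $\F_{q^s}^n$. The substantive part of (a) is therefore to show that $\psi^n$ restricts to a bijection between the two codes $\cC_{\IRS}$ and $\cC_{\RS}$.

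For the forward inclusion $\psi^n(\cC_{\IRS}) \subseteq \cC_{\RS}$, I would take an arbitrary codeword $(\psi(f_1(\alpha_i),\ldots,f_s(\alpha_i)))_{i=1}^n$ with each $f_h \in \F_q[X]$ of degree less than $k$, and define the aggregate polynomial
\[
    f(X) := f_1(X) + \gamma\, f_2(X) + \cdots + \gamma^{s-1} f_s(X) \in \F_{q^s}[X].
\]
Since $\deg f < k$ and each $\alpha_i \in \F_q$, the key identity $f(\alpha_i) = \sum_{h=1}^s \gamma^{h-1}f_h(\alpha_i) = \psi(f_1(\alpha_i),\ldots,f_s(\alpha_i))$ holds coordinate-by-coordinate, so the image is the $\cC_{\RS}$-encoding of $f$. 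For the reverse inclusion, I would take any $f \in \F_{q^s}[X]$ of degree less than $k$, decompose each of its coefficients uniquely over the basis $\{1,\gamma,\ldots,\gamma^{s-1}\}$, and collect the $h$th coordinates into polynomials $f_h \in \F_q[X]$ with $\deg f_h < k$. The same identity then shows that the $\cC_{\RS}$-codeword of $f$ is the image under $\psi^n$ of the $\cC_{\IRS}$-codeword associated to $(f_1,\ldots,f_s)$. This simultaneously proves surjectivity onto $\cC_{\RS}$ and the group compatibility.

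Part (b) follows immediately once (a) is in place: because $\psi$ is an injection, for any $\vec{u},\vec{v}\in(\F_q^s)^n$ and any coordinate $i\in[n]$, we have $u_i = v_i$ if and only if $\psi(u_i) = \psi(v_i)$. Hence the agreement sets of $\vec{u},\vec{v}$ and of $\psi^n(\vec{u}),\psi^n(\vec{v})$ are identical, and the Hamming distances coincide. I do not foresee any real obstacle in this proof; the only point deserving care is verifying that degree bounds are preserved under the basis decomposition, which is automatic from uniqueness of the representation, and that the identification $f(\alpha_i) = \psi(f_1(\alpha_i),\ldots,f_s(\alpha_i))$ genuinely requires $\alpha_i \in \F_q$, so that each $f_h(\alpha_i)$ lies in $\F_q$ rather than in the larger field.
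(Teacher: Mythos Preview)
Your proof is correct and follows essentially the same approach as the paper: both build the aggregate polynomial $f = \sum_h \gamma^{h-1} f_h$ and use the key identity $f(\alpha_i) = \psi(f_1(\alpha_i),\ldots,f_s(\alpha_i))$. The only cosmetic differences are that the paper handles surjectivity via a cardinality count ($|\cC_{\IRS}| = |\cC_{\RS}| = q^{sk}$) rather than your explicit coefficient decomposition, and proves (b) by reducing Hamming distance to Hamming weight of the difference rather than your direct coordinate-wise injectivity argument.
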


\begin{proof}
Consider any polynomials $f_1, \hdots, f_s \in \F_q[x]$ of degree less than $k$. Note that we can view $f_1, \hdots, f_s \in \F_{q^s}[x]$. Thus, we can define $f := f_1 + \gamma f_2 + \cdots + \gamma^{s-1} f_s$, where $\deg f < k$. It is straightforward to verify that for every $i \in [n]$, we have that $\psi(f_1(\alpha_i), \hdots, f_s(\alpha_i)) = f(\alpha_i)$. Therefore since $\psi : \F_q^s \to \F_{q^s}$ is an isomorphism, we have that $\psi^n$ is an injective homomorphism from $\cC_{\IRS}$ to $\cC_{\RS}$. Since $\cC_{\IRS}$ and $\cC_{\RS}$ have the same cardinality of $q^{sk}$, this map is an isomorphism. This proves (a).

To prove (b), note that $\psi(0,\hdots, 0) = 0$ and every nonzero input to $\psi$ maps to a nonzero output. Therefore, $\psi^n$ preserves the Hamming distance between the $0$ string and every codeword (i.e., the Hamming weight). Since the distance between two codewords is the Hamming weight of their difference, the isomorphism preserves the Hamming distance between all pairs of codewords.
\end{proof}

As a result, we can prove for any semi-adversarial channel, which includes as special cases any an adversarial or random channel, that the behavior between $\cC_{\IRS}$ and $\cC_{\RS}$ is isomorphic.

\begin{proposition}
For any $0 \le e_0 \le e \le n$, consider an $(e_0, e)$ semi-adversarial channel $\Psi$ for $\cC_{\IRS}$. Then, there exists a corresponding $(e_0, e)$ semi-adversarial channel $\Psi'$ for $\cC_{\RS}$ such that $\vec{c} \in \cC_{\IRS}$ and $\vec{y} \in (\F_q^s)^n$ we have that
\begin{align}
    \Pr[\Psi(\vec{c}) &= \vec{y}] = \Pr[\Psi'(\psi^n(\vec{c})) = \psi^n(\vec{y})].\label{eq:IRS-RS-equiv}
\end{align}
Furthermore, this correspondence between channels is bijective.
\end{proposition}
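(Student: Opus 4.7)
The plan is to obtain $\Psi'$ by conjugating $\Psi$ with the coordinate-wise isomorphism $\psi^n$, and then verify that this conjugation both preserves the probabilistic identity (\ref{eq:IRS-RS-equiv}) and transports the semi-adversarial structure across the isomorphism. First I would define $\Psi'$ on input $\vec{c}' \in \cC_{\RS}$ by setting $\Psi'(\vec{c}') := \psi^n(\Psi((\psi^n)^{-1}(\vec{c}')))$, interpreted as the pushforward distribution on $\F_{q^s}^n$. Because $\psi^n$ is a bijection (\cref{prop:irs-iso}), equation (\ref{eq:IRS-RS-equiv}) is then immediate from a change-of-variables argument: the event $\{\Psi(\vec{c}) = \vec{y}\}$ is in bijection with $\{\Psi'(\psi^n(\vec{c})) = \psi^n(\vec{y})\}$ via the measurable invertible map $\psi^n$, so the probabilities coincide.

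The substantive step is to certify that $\Psi'$ fits \cref{def:semi-adv} with the same parameters $(e_0, e)$. For any fixed execution of $\Psi$ on $\vec{c}$, there exist an intermediate word $\vec{z} \in (\F_q^s)^n$ and sets $I, J \subseteq [n]$ with $|I| = n - e_0$ and $|J| \ge n - e$ satisfying properties (1)--(3) of \cref{def:semi-adv}. For the corresponding execution of $\Psi'$, I would take $\vec{z}' := \psi^n(\vec{z})$ as the intermediate word and keep the same $I$ and $J$. Conditions (1) and (2) transfer directly because $\psi^n$ is applied coordinate-wise and $\psi$ is a bijection, so $c_j = z_j$ iff $\psi(c_j) = \psi(z_j)$, and similarly for $y_j = z_j$. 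The only nontrivial point is condition (3): for each $j \in I \setminus J$, $y_j$ is uniform on $\F_q^s$, so its image $\psi(y_j)$ is uniform on $\F_{q^s}$ because $\psi : \F_q^s \to \F_{q^s}$ is a bijection between finite sets of equal cardinality $q^s$, and pushforwards of the uniform distribution under a bijection of finite sets are uniform. Independence across coordinates is preserved because $\psi^n$ acts separately on each coordinate.

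For the bijectivity of the correspondence, I would exhibit the inverse explicitly: given any $(e_0, e)$-semi-adversarial channel $\Phi$ for $\cC_{\RS}$, define $\Phi^{\natural}(\vec{c}) := (\psi^n)^{-1}(\Phi(\psi^n(\vec{c})))$. The same argument applied with $\psi^{-1}$ in place of $\psi$ shows $\Phi^{\natural}$ is $(e_0, e)$-semi-adversarial for $\cC_{\IRS}$. The maps $\Psi \mapsto \Psi'$ and $\Phi \mapsto \Phi^{\natural}$ are mutual inverses by construction, which gives the bijection. I do not anticipate any serious obstacle: the entire argument rests on $\psi$ being an alphabet-level bijection that pushes the uniform distribution on $\F_q^s$ to the uniform distribution on $\F_{q^s}$, together with the coordinate-wise nature of $\psi^n$. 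The only bookkeeping point that requires mild care is ensuring that the adversary's randomized choice of $(\vec{z}, I, J)$ in $\Psi$---which may depend on the entire randomness of the channel---is still a valid randomized strategy after transporting through $\psi^n$; this is automatic since the transport commutes with all measurable operations.
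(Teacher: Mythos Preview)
Your proposal is correct and follows essentially the same approach as the paper: both define $\Psi'$ by conjugating $\Psi$ with the coordinate-wise bijection $\psi^n$, and both rely on the key fact that $\psi$ pushes the uniform distribution on $\F_q^s$ to the uniform distribution on $\F_{q^s}$. The only stylistic difference is that the paper first separates the adversarial and random parts by writing the channel as a deterministic function into $(\F_q^s \cup \{*\})^n$ (with $*$ marking the random positions) and then fills in the randomness, whereas you work directly with the pushforward distribution and verify the conditions of \cref{def:semi-adv}; both arguments amount to the same thing.
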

Therefore, the decoding properties of $\cC_{\IRS}$ and $\cC_{\RS}$ are isomorphic. In particular, Theorem~\ref{thm:main_RS} applies equally to interleaved Reed--Solomon codes and Reed--Solomon codes with evaluation points over a subfield. 
\begin{proof}
We can view\footnote{Technically, a channel could be a probability distribution of such functions, but by linearity of expectation it suffices to look at one function at a time to prove (\ref{eq:IRS-RS-equiv}).} a $(e_0, e)$ semi-adversarial channel for $\cC_{\IRS}$ as a function $\widetilde{\Psi} : \cC_{\IRS} \to (\F_q^s \cup \{*\})^{n}$, where for every $\vec{c} \in \cC_{\IRS}$, we have that $\widetilde{\Psi}(\vec{c})$ has exactly $e - e_0$ coordinates equal to $*$ and at most $e_0$ other coordinates differ from $\vec{c}$. To turn the function $\widetilde{\Psi}$ into a proper channel $\Psi$, we then sample $e -e_0$ uniformly random elements of $\F_q^{s}$ which we fill into the positions marked $*$.

We can construct a corresponding function $\widetilde{\Psi}' : \cC_{\RS} \to (\F_{q^s} \cup \{*\})^n$ as follows:
\begin{align}
    \widetilde{\Psi}'(\vec{c}) := \psi^n(\widetilde{\Psi}((\psi^n)^{-1}(\vec{c}))),\label{eq:psi-corr}
\end{align}
where we define $\psi(*) = *$. We can turn $\widetilde{\Psi}'$ into a channel $\Psi'$ in an analogous manner by replacing the positions marked $*$ with a uniformly random element of $\F_{q^s}$. The map from $\widetilde{\Psi}$ to $\widetilde{\Psi}'$ is clearly reversible, so the correspondence between $\Psi$ and $\Psi'$ is bijective.

It suffices to prove (\ref{eq:IRS-RS-equiv}). For any $\vec{c} \in \cC_{\IRS}$ and any $\vec{y} \in (\F_q^s)^n$, we can compute that $\Pr[\Psi(\vec{c}) = \vec{y}] = 1/q^{s(e-e_0)}$ if $\widetilde{\Psi}(\vec{c})$ and $\vec{y}$ agree in all coordinates not marked $*$ and $0$ otherwise. Likewise,\footnote{Crucially we use the fact that both $\F_q^s$ and $\F_{q^s}$ have the same cardinality so that uniform distributions over each set are isomorphic.} $\Pr[\Psi'(\psi(\vec{c})) = \psi(\vec{y})] = 1/q^{s(e-e_0)}$ if $\widetilde{\Psi}'(\psi^n(\vec{c}))$ and $\psi^n(\vec{y})$ agree in all coordinates not marked $*$ and $0$ otherwise. By equation (\ref{eq:psi-corr}), we have that $\widetilde{\Psi}'(\psi^n(\vec{c})) = \psi^n(\widetilde{\Psi}(\vec{c}))$. Therefore since $\psi^n$ is a bijection between $(\F_q^s)^n$ and $\F_{q^s}^n$, $\widetilde{\Psi}'(\psi^n(\vec{c}))$ and $\psi^n(\vec{y})$ agree in all coordinates not marked $*$ if and only if $\widetilde{\Psi}(\vec{c})$ and $\vec{y}$ agree in all coordinates not marked $*$. Thus, (\ref{eq:IRS-RS-equiv}) holds.
\end{proof}

\section{Omitted Proofs}\label{app:omit}
In this appendix, we provide the proofs that were previously omitted for some of the aforementioned propositions.
\subsection{Proof of \cref{prop:RU_semi}}\label{sec:ru_semi}
We recall the existential result on combinatorial unique-decodability \cref{prop:RU_semi}. Our proof basically follows from \cite[Theorem 1]{rudra2010two}, but generalizes to the new semi-adversarial errors.
\begin{proposition}[Restatement of \cref{prop:RU_semi}]
    For any $0<\eps,R<1$, $n>\Omega(1/\eps)$, alphabet $\Sigma$ with size at least $2^{\Omega(1/\eps)}$, MDS code $\mathcal{C}\subseteq \Sigma^n$ with rate $R$ and any codeword $\vec{c}\in\mathcal{C}$, let $\vec{y}$ denote the received word after applying $(e_0,e)$-semi-adversarial  errors to $\vec{c}$, $e_0\leq \min(e,(1-R-\eps)n-e),e\leq (1-R)n$. Then, with probability at least $1-|\Sigma|^{-\Omega(\eps n)}$, we have $\mathcal{B}(\vec{y},e)\cap\mathcal{C}=\{\vec{c}\}$.
\end{proposition}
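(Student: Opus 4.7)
The plan is to adapt the approach of Rudra and Uurtamo~\cite{rudra2010two} for fully random errors to the semi-adversarial setting. Fix the transmitted codeword $\vec{c}$ and an arbitrary adversary strategy consisting of disjoint subsets $A,B \subseteq [n]$ with $|A|=e_0$ and $|B|=e-e_0$ (the adversarial and random error positions, respectively) and adversarial symbols $z_A \in \Sigma^{A}$; setting $J := [n]\setminus(A\cup B)$, the received word $\vec{y}$ agrees with $\vec{c}$ on $J$, equals $z_j$ on $A$, and is independent uniform on $B$.

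For any other codeword $\vec{c}' \neq \vec{c}$, write $D_{\vec{c}'} := \{j : c_j \neq c'_j\}$ and $d_w := |D_{\vec{c}'}|$, so that the MDS property forces $d_w \geq (1-R)n + 1$. The key decomposition is
\[
d(\vec{y},\vec{c}') \;=\; |J \cap D_{\vec{c}'}| \;+\; |\{j \in A : z_j \neq c'_j\}| \;+\; |\{j \in B : y_j \neq c'_j\}|.
\]
I would first note that the two deterministic contributions sum to at least $d_w - e$, since $|J \cap D_{\vec{c}'}| \geq d_w - |A \cup B| = d_w - e$ regardless of the adversary's choices. Therefore the event $\{d(\vec{y},\vec{c}') \leq e\}$ forces at least $d_w - e - e_0$ coordinates in $B$ to satisfy $y_j = c'_j$, and since $y_B$ is uniform,
\[
\Pr[d(\vec{y},\vec{c}') \leq e] \;\leq\; \binom{e-e_0}{d_w - e - e_0}\,|\Sigma|^{-(d_w - e - e_0)}.
\]

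Next, I would sum over all $\vec{c}' \neq \vec{c}$, grouping by $d_w$, and invoke the standard MDS weight bound $A_{d_w} \leq \binom{n}{d_w}|\Sigma|^{d_w - (1-R)n}$. The $|\Sigma|^{d_w}$ factors cancel, leaving
\[
\Pr[\vec{y} \text{ is not uniquely decodable}] \;\leq\; |\Sigma|^{-((1-R)n - e - e_0)} \sum_{d_w} \binom{n}{d_w}\binom{e-e_0}{d_w - e - e_0}.
\]
The hypothesis $e_0 \leq (1-R-\eps)n - e$ ensures $(1-R)n - e - e_0 \geq \eps n$, while the combinatorial sum is at most $2^{O(n)}$ (bounding each binomial crudely by $2^n$, with at most $n$ nonzero terms since $d_w \leq 2e$). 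Choosing the implicit constant in $|\Sigma| \geq 2^{\Omega(1/\eps)}$ large enough lets the $|\Sigma|^{-\eps n}$ factor swallow the $2^{O(n)}$ blow-up, yielding the desired $|\Sigma|^{-\Omega(\eps n)}$ tail.

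The main subtlety I anticipate is that the adversary commits to a single $(A,B,z_A)$ before the random coordinates are drawn, and hence cannot simultaneously align $z_A$ with $c'|_A$ for every alternate $\vec{c}'$ at once. For a union bound, however, it suffices to produce a valid per-codeword upper bound, and the pessimistic estimate that discards the middle term $|\{j \in A : z_j \neq c'_j\}|$ is already such a bound (it corresponds to the worst case in which the adversary hypothetically aligns with the specific $\vec{c}'$ under consideration). A secondary care point is tracking entropy constants so that the combinatorial blow-up is genuinely dominated by the $|\Sigma|^{-\eps n}$ factor, which is routine bookkeeping following~\cite{rudra2010two}.
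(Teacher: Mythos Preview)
Your proof is correct, and its route differs from the paper's in a useful way.

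The paper fixes the adversarial choices and then partitions the \emph{space of received words} $\mathcal{Y}$: a bad $\vec{y}$ witnesses some alternate codeword $\vec{c}_0$, and they classify such $\vec{y}$ by the agreement pattern $(A,B)$ of $\vec{c}_0$ with $\vec{y}$ (here $A$ lies inside the non-error positions and $B$ inside the error positions). Within each class they use the MDS ``any $k$ coordinates determine the codeword'' property to show that $\vec{y}$ is pinned down by its values on a set of size at most $e+e_1-(1-R)n$, giving $|\mathcal{Y}_{A,B}|\le|\Sigma|^{e+e_1-(1-R)n}$, and then union-bound over at most $3^n$ patterns.

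You instead run a first-moment argument over \emph{alternate codewords}: for each $\vec{c}'$ at distance $d_w$ from $\vec{c}$, you lower-bound the number of forced agreements in the random block and get $\Pr[\vec{c}'\in\mathcal{B}(\vec{y},e)]\le\binom{e-e_0}{d_w-e-e_0}|\Sigma|^{-(d_w-e-e_0)}$, then sum using the MDS distance-count bound $A_{d_w}\le\binom{n}{d_w}|\Sigma|^{d_w-(1-R)n}$. The $|\Sigma|^{d_w}$ cancellation collapses everything to the same main factor $|\Sigma|^{-((1-R)n-e-e_0)}$ times a $2^{O(n)}$ combinatorial term.

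The two arguments are dual views of the same double count (pairs $(\vec{y},\vec{c}')$ with $\vec{c}'\in\mathcal{B}(\vec{y},e)$). Your version is slightly more streamlined because it packages the MDS structure into the single inequality $A_{d_w}\le\binom{n}{d_w}|\Sigma|^{d_w-(1-R)n}$ rather than rederiving it via the ``pick $C\subseteq B$ with $|A\cup C|=Rn$'' step; the paper's version, on the other hand, never needs to stratify by $d_w$ and so avoids invoking the distance distribution explicitly. One small point to make explicit in your write-up: the semi-adversarial definition allows $|B|\le e-e_0$ rather than equality, but your per-codeword bound only improves when $|B|$ shrinks (fewer random slots means the required $d_w-e-e_0$ matches are even less likely, or impossible), so the inequality you wrote remains valid.
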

\begin{proof}
Let $\vec{y}$ denote the corrupted word after applying $(e_0,e)$-semi adversarial errors to $\vec{c}$. From the definition of semi-adversarial errors (see \cref{def:semi-adv}), there exists $J$ with size $|J|\ge n-e$ such that $\vec{y}[j]=\vec{c}[j]$ for all $j\in J$. There is also $J\subseteq I\subseteq[n]$ such that $|I|=n-e_0$ and each entry in $I\setminus J$ of $\vec{y}$ is independently and uniformly sampled from $\Sigma$. Let $e_1= n-|J|$, we know $e_0\leq e_1\leq e$. When $e_1< (1-R)n/2$, the desired statement trivially follows from unique-decoding radius of MDS code, so we can assume $e_1\ge (1-R)n/2$ from now on. If we fix the values of all entries of $\vec{y}$ outside $I\setminus J$ (i.e., Fix all positions containing adversarial errors and agreements), the received word $\vec{y}$ only depends on the random errors sampled on $I\setminus J$. Let $\mathcal{Y}\subseteq\Sigma^n$ denote the set of all these possible words $\vec{y}$ by setting different values on $I\setminus J$, we have $|\mathcal{Y}|=|\Sigma|^{|I\setminus J|}=|\Sigma|^{e_1-e_0}$. We need to show for all but at most $|\Sigma|^{-\Omega(\eps n)}$ fraction of $\vec{y}\in\mathcal{Y}$, we have $\mathcal{B}(\vec{y},e)\cap \mathcal{C}=\{\vec{c}\}$. Let $\mathcal{Y}_0\subseteq \mathcal{Y}$ denote the set of all words $\vec{y}\in\mathcal{Y}$ such that $\{\vec{c}\}\subsetneq\mathcal{B}(\vec{y},e)\cap \mathcal{C}$. We need to upper bound $|\mathcal{Y}_0|$. Concretely, we need to show $\frac{|\mathcal{Y}_0|}{|\mathcal{Y}|}\leq |\Sigma|^{-\Omega(\eps n)}$.

For any $\vec{y}\in\mathcal{Y}_0$, there exists some other $\vec{c}_0\in\mathcal{B}(\vec{y},e)\cap \mathcal{C}$ different from $\vec{c}$. This implies there exist some $A\subseteq J, B\subseteq [n]\setminus J$ such that $|A\cup B|\ge n-e$ and $\vec{c}_0[i]=\vec{y}[i]$ for all $i\in A\cup B$. Another restriction is that this $A$ must have size $|A|<Rn$. Suppose not, since $\vec{c}_0[i]=\vec{y}[i]=\vec{c}[i]$ for all $i\in A$, the hamming distance between $\vec{c}$ and $\vec{c_0}$ must be at most $n-|A|\leq (1-R)n$, which contradicts the requirement that $\mathcal{C}$ is a MDS code with distance $n-Rn+1$. This also implies $|B|\ge |A\cup B|-|A|> (1-R)n-e$. Motivated by this, we can classify all ``bad words'' in $\mathcal{Y}_0$ according to $A,B$ and try to upper bound the size of each class. Formally, for any $A\subseteq J,B\subseteq [n]\setminus J$, we can define the following subset $\mathcal{Y}_{A,B}\subseteq \mathcal{Y}_0$.
\[
\mathcal{Y}_{A,B}=\Bigl\{\vec{y}\in\mathcal{Y}_0\colon \exists \vec{c}_0\in(\mathcal{B}(\vec{y},e)\cap\mathcal{C})\setminus\{c\},\forall i\in A\cup B, \vec{c}_0[i]=\vec{y}[i]\Bigl\}.
\]
From the above discussion, we can classify the elements of $\mathcal{Y}_0$. 
\[\mathcal{Y}_0=\bigcup_{A\subseteq J,B\subseteq [n]\setminus J,|B|> (1-R)n-e,|A\cup B|\ge n-e}\mathcal{Y}_{A,B}.
\]
Now fix arbitrary $A\subseteq J, B\subseteq [n]\setminus J, |B|> (1-R)n-e,|A\cup B|\ge n-e$, we can try to upper bound the size of $\mathcal{Y}_{A,B}$. The key idea is that we can arbitrarily choose some subset $C\subseteq B$ such that $|A\cup C|=Rn$. Then for any two $\vec{y}_1,\vec{y}_2\in\mathcal{Y}_{A,B}$, we have the following claim.
\begin{claim}\label{clm:eq}
For any $\vec{y}_1,\vec{y}_2\in\mathcal{Y}_{A,B}$, if for all $i\in C\cup ([n]\setminus B)$, we have $\vec{y}_1[i]=\vec{y}_2[i]$, then $\vec{y}_1=\vec{y}_2$.
\end{claim}
\begin{proof}[Proof of \cref{clm:eq}]
By the definition of $\mathcal{Y}_{A,B}$, there must be $\vec{c}_1,\vec{c}_2\in\mathcal{C}$ such that $\vec{c}_1[i]=\vec{y}_1[i]=\vec{y}_2[i]=\vec{c}_2[i]$ for all $i\in A\cup C$. Since $|A\cup C|=Rn$ and $\mathcal{C}$ is a MDS code, there must be $\vec{c}_1=\vec{c_2}$. Therefore, we have $\vec{y}_1[i]=\vec{c_1}[i]=\vec{c_2}[i]=\vec{y_2}[i]$ for all $i\in B$. Furthermore,since we know from the statement that $\vec{y}_1$ and $\vec{y}_2$ agree on $[n]\setminus B$, there must be $\vec{y}_1=\vec{y}_2$.
\end{proof}
Therefore, by \cref{clm:eq}, it suffices to know the values on $C\cup ([n]\setminus B)$ to identify an element of $\mathcal{Y}_{A,B}$. By the definition of $\mathcal{Y}$, words in $\mathcal{Y}_{A,B}$ can only differ on $I\setminus J$. Therefore, to identify a word in $\mathcal{Y}_{A,B}$ we only need to know its values in $C\cup (I\setminus (J\cup B))$. From the discussions above we know that
\begin{align*}
|C\cup (I\setminus (J\cup B))|&\leq |C|+|(I\setminus J)\setminus B|\\
&= Rn-|A|+|I\setminus J|-|B\cap I|\\
&\leq Rn-|A|+|I\setminus J|-(|B|-|[n]\setminus I|)\\
&= Rn-|A\cup B|+n-|J|\\
&\leq Rn+e-(n-e_1)=e+e_1-(1-R)n
\end{align*}
The second inequality follows from 
$B\cap J=\emptyset$. Since we have $|\Sigma|$ choices on each index in $C\cup (I\setminus(J\cup B))$, we can derive $|\mathcal{Y}_{A,B}|\leq |\Sigma|^{e+e_1-(1-R)n}$. Finally, we can make a union bound for all possible $A,B$ and get the final upper bound on the size of $\mathcal{Y}_0$.
\[
\frac{|\mathcal{Y}_0|}{|\mathcal{Y}|}\leq \frac{3^n|\Sigma|^{e+e_1-(1-R)n}}{|\Sigma|^{e_1-e_0}}=|\Sigma|^{e+e_0-(1-R-\log_{|\Sigma|}{3})n}=|\Sigma|^{(-\eps+\log_{|\Sigma|}3)n}\]

When $|\Sigma|>2^{4/\eps}$, we get $\frac{|\mathcal{Y}_0|}{|\mathcal{Y}|}\leq |\Sigma|^{-\eps n/2}$, which completes the proof that all but at most $|\Sigma|^{-\Omega(\eps n)}$ fraction of received words are uniquely decodable.
\end{proof}
\subsection{Proof of \cref{prop:list_optimal}}\label{app:list-optimal}
In this section we prove \cref{prop:list_optimal}, we recall the proposition.
\begin{proposition}[Restatement of \cref{prop:list_optimal}]
      Consider any code $\cC \subseteq \Sigma^n$ with $R := \frac{\log |\cC|}{n\log |\Sigma|}$ such that $k := Rn$ is integral. For any $(e_0, e, L)$ with $L + 1 \le |\Sigma|$ and  $\lfloor \frac{e_0}{L} \rfloor>(1-R)n-e$, there exists some word $\vec{z}\in \Sigma^n$ such that $|\mathcal{B}(\vec{z},e_0)\cap \mathcal{C}|\ge 1$ as well as $K\subseteq [n], |K|=e-e_0$ such that any received word $\vec{y} \in \Sigma^n$ that differs with $\vec{z}$ only on positions in $K$ satisfies $|\mathcal{B}(\vec{y},e)\cap \mathcal{C}|\ge L+1$.
\end{proposition}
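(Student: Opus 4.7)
The plan is to adapt the proof of the generalized Singleton bound of Shangguan--Tamo~\cite{ST23} to the semi-adversarial setting, constructing the center $\vec{z}$, the set $K$ of ``random'' positions, and a list of $L+1$ codewords $\vec{c}_0, \vec{c}_1, \ldots, \vec{c}_L$ simultaneously so that $\vec{c}_0$ witnesses $|\cB(\vec{z}, e_0) \cap \cC| \ge 1$ while all $L+1$ codewords land in $\cB(\vec{y}, e)$ for every admissible $\vec{y}$.

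First I would pick an arbitrary $K \subseteq [n]$ with $|K| = e - e_0$, set $S := [n]\setminus K$, and choose $T \subseteq S$ with $|T| = \min(k-1, |S|)$. Since $|\cC| = |\Sigma|^k$ and $|T| \le k-1$, a pigeonhole argument on the projection $\cC \to \Sigma^{T}$ yields some fiber of size at least $|\Sigma|^{k-|T|} \ge |\Sigma| \ge L+1$, so I obtain $L+1$ distinct codewords $\vec{c}_0,\ldots,\vec{c}_L$ that coincide on $T$. Let $N := |S \setminus T|$. If $N>0$, I partition $S\setminus T = S_0 \sqcup \cdots \sqcup S_L$ with $|S_i| \ge N - e_0$ for every $i$; this is feasible precisely when $(L+1)(N-e_0) \le N$, i.e.\ $e_0 \ge NL/(L+1)$. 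When $|T|=k-1$ we have $N = (1-R)n + 1 - e + e_0$, so the feasibility condition simplifies to $e_0/L \ge (1-R)n - e + 1$, which is equivalent to the hypothesis $\lfloor e_0/L\rfloor > (1-R)n - e$ (both sides integer). The edge cases $N \le e_0$ or $|S| < k-1$ (so $T = S$ and $N=0$) make the partition step vacuous.

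Next I would define $\vec{z}$ piecewise: on $K$ set $\vec{z}_j := c_{0,j}$, on $T$ set $\vec{z}_j$ to the common value $c_{0,j} = \cdots = c_{L,j}$, and on each $S_i$ set $\vec{z}_j := c_{i,j}$. To verify $\vec{c}_0 \in \cB(\vec{z}, e_0) \cap \cC$, observe that $\vec{z}$ and $\vec{c}_0$ agree on $K \cup T \cup S_0$, so $d(\vec{z}, \vec{c}_0) \le \sum_{i \ne 0} |S_i| = N - |S_0| \le e_0$. For any $\vec{y}$ that agrees with $\vec{z}$ on $S$ and is arbitrary on $K$, write $d(\vec{y}, \vec{c}_i) = d(\vec{y}|_K, \vec{c}_i|_K) + d(\vec{z}|_S, \vec{c}_i|_S)$: the first summand is at most $|K| = e - e_0$, the second vanishes on $T$ and is bounded by $N - |S_i| \le e_0$ on $S \setminus T$, giving $d(\vec{y}, \vec{c}_i) \le e$. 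Hence $\{\vec{c}_0, \ldots, \vec{c}_L\} \subseteq \cB(\vec{y}, e) \cap \cC$, as required.

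The main obstacle is the bookkeeping in Step~3: verifying that the counting hypothesis $\lfloor e_0/L \rfloor > (1-R)n - e$ is exactly tight for the existence of a partition with all parts of size at least $N - e_0$, and handling the degenerate regimes $|S| \le k-1$ (where $K$ already covers so much of $[n]$ that pigeonhole alone forces $L+1$ codewords to share the projection on $S$) and $N \le e_0$ (where any partition suffices). Once the constraint is matched to the hypothesis via the identity $N = (1-R)n + 1 - e + e_0$, the remainder is routine verification of the two distance inequalities.
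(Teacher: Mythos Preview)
Your proposal is correct and follows essentially the same approach as the paper's proof: pigeonhole to find $L+1$ codewords agreeing on a set of size $k-1$, partition the remaining non-$K$ coordinates among the codewords, and define $\vec{z}$ blockwise. The only cosmetic differences are that the paper fixes $K$ to be the last $e-e_0$ coordinates and uses the specific partition with $|B_1|=\cdots=|B_L|=\lfloor m/(L+1)\rfloor$ (your $m=N$), whereas you allow any $K$ and any partition with parts of size at least $N-e_0$; the numerical verification that the hypothesis $\lfloor e_0/L\rfloor > (1-R)n - e$ is exactly what is needed is identical.
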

\begin{proof}
Since $\frac{|\cC|}{|\Sigma|^{k-1}} \ge |\Sigma| \ge L+1$, there exists codewords $\vec{c}_0, \hdots, \vec{c}_L \in \cC$ which all agree on their first $k-1$ coordinates. Let $m := n - (k - 1) - (e - e_0)$. Let $B_0 \sqcup B_1 \sqcup \cdots \sqcup B_L$ be a partition of $\{k, \hdots, k + m - 1 = n - (e-e_0)\}$ such that each part has size $\lfloor m / (L+1)\rfloor$ except for the largest part $B_0$. We now construct $\vec{z} \in \Sigma^n$ as follows. For all $i \in [n]$,
\[
    z_i = \begin{cases}
    c_{j,i} & \exists j \in [L], i \in B_j,\\
    c_{0,i} & \text{otherwise}.
    \end{cases}
\]
By definition, we have that the Hamming distance between $\vec{z}$ and $\vec{c}_0$ is $|B_1| + \cdots + |B_L| = L\lfloor\frac{m}{L+1}\rfloor$. Since $\lfloor \frac{e_0}{L}\rfloor \ge n - k - e + 1$, we have that $m = n - (k -1) - (e - e_0) \le e_0 + \lfloor\frac{e_0}{L}\rfloor \le \frac{(L+1)e_0}{L}.$ Thus, for each $i \in [L]$, we have that $|B_i| = \lfloor \frac{m}{L+1}\rfloor \le \frac{e_0}{L}$. Thus, he Hamming distance between $\vec{z}$ and $\vec{c}_0$ is at most $|B_1| + \cdots + |B_L| \le e_0$, confirming that $\cB(\vec{z}, e_0) \cap \cC \neq \emptyset$.

Now consider any $\vec{y} \in \Sigma^n$ which agrees with $\vec{z}$ in its first $n - (e-e_0)$ coordinates. Since $\vec{c}_0 \in \cB(\vec{z}, e_0)$, we have that $\vec{c}_0 \in \cB(\vec{y}, e)$.

For any $i \in [L]$, we have that the agreement between $\vec{y}$ and $\vec{c}_i$ is at least $k-1 + |B_i|$. We seek to prove that this agreement is at least $n - e$.

To see why, recall that $\frac{(L+1)e_0}{L} \ge m$. Thus, $e_0 \ge m - \frac{m}{L+1}$. Since $e_0$ is integral, this means that $e_0 \ge m - \lfloor\frac{m}{L+1}\rfloor = m - |B_i|$. Therefore,
\[
    k-1 + |B_i| \ge m - e_0 + k-1 = n - e,
\]
as desired. Thus, $\vec{c}_0, \vec{c}_1, \hdots, \vec{c}_L \in \cB(\vec{y}, e)$, as desired.
\end{proof}
\end{document}